\def\BState{\State\hskip-\ALG@thistlm}
\DeclareMathOperator*{\argmin}{arg\,min}
\DeclareMathOperator*{\argmax}{arg\,max}
\newcommand{\curly}[1]{\left\{#1\right\}}
\newcommand{\spn}[1]{\mbox{\texttt{span}}\left(#1\right)}
\newcommand{\inprod}[2]{\left\langle #1,#2 \right\rangle}
\newcommand{\bvec}[1]{\bm{#1}}
\newcommand{\real}{\mathbb{R}}
\newcommand{\complex}{\mathbb{C}}
\newcommand{\supp}{\texttt{supp}}
\newcommand{\proj}[1]{\mathbf{P}_{#1}}
\newcommand{\dualproj}[1]{\mathbf{P}_{#1}^\perp}
\newcommand{\norm}[1]{\left\|#1\right\|_2}
\newcommand{\opnorm}[2]{\left\|#1\right\|_{#2}}
\newcommand{\abs}[1]{\left|#1\right|}
\newcommand{\sign}[1]{\texttt{csgn}\left(#1\right)}
\newcommand{\prob}[1]{\mathbb{P}\left(#1\right)}
\newcommand{\expect}[2]{\mathbb{E}_{#1}\left(#2\right)}
\newcommand{\indicator}[1]{\mathbbm{1}\left\{#1\right\}}
\newtheorem{lem}{Lemma}[section]
\newtheorem{thm}{Theorem}[section]
\newtheorem{prop}{Proposition}[section]
\newtheorem{define}{Definition}[section]
\title{A Two Stage Generalized Block Orthogonal Matching Pursuit (TSGBOMP) Algorithm}
\begin{document}
\setlength\abovedisplayskip{0pt}
\setlength\belowdisplayskip{0pt}
\vspace{-5mm}
\author{Samrat Mukhopadhyay$^1$, {\sl Student Member, IEEE}, and Mrityunjoy
Chakraborty$^2$, {\sl Senior Member, IEEE}

\thanks{The authors are with the department of Electronics and Electrical Communication
Engineering, Indian Institute of Technology, Kharagpur, INDIA (email : $^1$samratphysics@gmail.com, $^2$mrityun@ece.iitkgp.ernet.in).
}}
%\author{\IEEEauthorblockN{Samrat~Mukhopadhyay\IEEEauthorrefmark{1}, Siddhartha Satpathi\IEEEauthorrefmark{2} and Mrityunjoy Chakraborty\IEEEauthorrefmark{3}} \\
%  \IEEEauthorblockA{Department of Electronics and Electrical Communication Engineering,\\ Indian Institute of Technology, Kharagpur, INDIA\\
%    Email: \IEEEauthorrefmark{1}samratphysics@gmail.com,\IEEEauthorrefmark{2}sidd.piku@gmail.com,
%    \IEEEauthorrefmark{3}mrityun@ece.iitkgp.ernet.in}}%\vspace{-5mm}}
\IEEEoverridecommandlockouts
\maketitle
\begin{abstract}
	 Recovery of an unknown sparse signal from a few of its projections is the key objective of compressed sensing. 
	Often one comes across signals that are not ordinarily sparse but are sparse blockwise. Existing block sparse recovery algorithms like BOMP make the
	assumption of uniform block size and known block boundaries, which are, however, not very practical in many applications. This paper addresses this problem and proposes a two step procedure,
	where the first stage is a coarse block location identification stage while the second stage carries out finer localization of a non-zero cluster within the window selected in the first stage.
	A detailed convergence analysis of the proposed algorithm is carried out by first defining  the so-called pseudoblock-interleaved block RIP of the given generalized block sparse signal and then imposing upper bounds on the corresponding RIC. We also extend the analysis for complex vector as well as matrix entries where it turns out that the extension is non-trivial and requires special care. Furthermore, assuming real Gaussian sensing matrix entries, we find a lower bound on the probability that the derived recovery bounds are satisfied. The lower bound suggests that there are sets of parameters such that the derived bound is satisfied with high probability. Simulation results confirm significantly improved performance of the proposed algorithm as compared to BOMP.
\end{abstract}
\begin{IEEEkeywords}
Compressive Sensing(CS), Generalized Block OMP (GBOMP), Restricted Isometry Property (RIP).
\end{IEEEkeywords}
\section{Introduction}
\label{sec:intro}
The problem of compressed sensing (CS) has emerged as a powerful tool to retrieve an unknown \emph{sparse} vector with a few nonzero entries in some unknown coordinates, from a small set of measurements, obtained using a sensing matrix~\cite{candes2006robust}. The recovery problem in CS is formulated as an optimization problem with a set of linear equations as constraint and a cost function, that measures some property of the unknown vector, as the objective~\cite{candes_decoding_2005}. Often such optimization problems involve nonconvex objective functions, which makes the recovery problem NP hard in general~\cite{candes2006robust}. A major line of research in the CS literature attacks this problem by replacing the nonconvex objective function with a convex objective function, and then study and analyze the performance of the algorithm by finding conditions on the measurement matrix under which the solution to the convexified problem coincides with the exact solution to the original nonconvex problem~\cite{candes-tao-stable-recovery}~\cite{cai2009recovery}. However, as convex optimization approaches are often computationally too expensive~\cite{donoho2012sparse}, a large number of greedy heuristic approaches, like matching pursuit~\cite{mallat_matching_1993}, Orthogonal Matching Pursuit (OMP)~\cite{pati1993orthogonal,tropp2004greed}, to name a few, have been proposed to address the CS problem. Among these, the OMP algorithm has attracted a lot of attention because of its simplicity and capability of good recovery performance with low computational complexity. The OMP algorithm proceeds by gradually constructing the support of the unknown signal by iteratively updating a support by appending indices that correspond to maximum absolute correlations between a certain residual vector and the columns of the sensing matrix. Thereafter, the residual vector is updated by finding the orthogonal projection error found after projecting the measurement vector over the vector space spanned by the vectors with the support constructed so far.

In many practical applications like multiband signal processing~\cite{mishali2009blind,mishali2010theory} and multiple measurement vector (MMV) recovery problem~\cite{cotter2005sparse,chen2005sparse} signals typically have the \emph{block sparse} structure~\cite{eldar2009robust,parvaresh2008recovering}. In this structure, the nonzero elements tend to occur in clusters of known size and it is generally known that such clusters are located within a few prespecified blocks known to the end user. The compressed sensing recovery problem with the block sparse structure was studied in detail in~\cite{eldar2009robust}. Furthermore, Block OMP (BOMP) was proposed~\cite{eldar2010block} as an extension of OMP for the block-sparse recovery problem from compressed measurements. The BOMP algorithm works similar to the OMP algorithm. The main difference is that instead of taking correlations between the residual and each column of the sensing matrix, the BOMP algorithm first forms a vector of correlations between a block of columns and the residual vector, and then finds the block for which the norm of such a vector is maximized. Then the residual is updated similar to OMP by taking the orthogonal projection error after projecting the measurement vector on the space spanned by the columns of the blocks identified so far. All these papers analyzed the block recovery problem using methods like block-coherence~\cite{eldar2010block}. Recently, a block restricted isometry property(BRIP)~\cite{candes_decoding_2005,baraniuk2008simple,eldar2009robust} based recovery analysis~\cite{wen2018sharp} has found conditions on the block restricted isometry constant(BRIC) of the sensing matrix to ensure perfect recovery using the BOMP algorithm.

In many applications, like the atomic decomposition of audio signals~\cite{gribonval2003harmonic} the exact block partitions of the unknown vector is not beforehand. Although a few algorithms have been proposed to address the recovery of this kind of signals~\cite{zhang2013extension,fang2015pattern}, all of them use the Bayesian learning framework, which impose prior distributional assumptions on the unknown vector. In this paper we propose and study a new non-Bayesian algorithm called the two stage generalized block OMP (TSGBOMP), which has similar structure to the BOMP algorithm except that the block identification is performed in two stages. The first stage is a coarse block location identification stage where, similar to the BOMP algorithm, among a prespecified set of \emph{windows} (i.e. sets of consecutive columns that the whole set of columns is partitioned into) an window of columns having the maximum correlation (with some prior residual) is selected. In the second stage, the algorithm conducts a finer search for a block by calculating the correlations (with some prior residual) corresponding to all overlapping consecutive clusters of columns throughout the window selected, and then finds the one cluster having the largest absolute correlation \footnote{This philosophy of block selection in the second stage is inspired from a recent paper~\cite{kannu2018spcom} which searches for a block by calculating absolute correlations corresponding to all possible overlapping clusters of columns in the matrix with certain residual vector and then selecting the one having the highest value.
%However, their algorithm requires searching over $n-1$ overlapping blocks of columns (where $n$ is the size of the unknown vector) which, even for moderate block size, is computationally too expensive to be useful in practice.
}.

%Recently a greedy algorithm called the generalized block OMP (GBOMP)~\cite{kannu2018spcom} has been proposed in the non-Bayesian framework, which is an extension of the BOMP algorithm to address the recovery of the general block sparse signals, where the block sizes are known but the exact block boundaries are not known beforehand. The GBOMP algorithm is similar to the BOMP algorithm, with the only difference in the block identification strategy. To identify a block, it first creates all blocks of the given size by sliding from the first index to the last index of the entire range of the indices. Then for each of these blocks, the corresponding vector of correlations is created and the block with the maximum norm is identified as the block to be included in the support. An analysis of the GBOMP algorithm using coherence is introduced in~\cite{kannu2018spcom}. However, it is now well known that the RIP analysis is a much stronger analysis than the coherence analysis because there might be matrices where a coherence bound cannot be satisfied but a RIP bound might be satisfied~\cite[Chapter 4]{foucart2013mathematical}. Also, it remains an open question whether calculating the norms of all the consecutive overlapping blocks, as is done in GBOMP is really necessary.
Our main contributions are the following: 1) We propose a new recovery algorithm called TSGBOMP which which uses a two-stage strategy for recovering generalized block sparse vectors with no prior knowledge of block partitions. 2) The analysis of TSGBOMP necessitates the introduction of a new kind of RIP tailored to the particular structure of the unknown vector, termed as pseudoblock-interleaved block RIP (PIBRIP). This kind of RIP is motivated by the model-RIP introduced in~\cite{baraniuk2010model} for analyzing signals with union of subspace structure. 3) We analyze the TSGBOMP algorithm using the PIBRIP property and find recovery condition that ensures the exact recovery using TSGBOMP. 4) We exhibit using Gaussian random matrices that there are matrices which can satisfy the recovery guarantee deduced in the paper with very high probability. 5) Finally, we use numerical simulations to exhibit the superior probability of recovery performances of the TSGBOMP algorithm with respect to the BOMP algorithm in terms of recovering the signal with the generalized block sparse structure.

\section{Notations}
\label{sec:notations}
The following notations have been used throughout the paper : `$H$' in
superscript indicates matrix / vector Hermitian conjugate, $[n]$ denotes the set of indices $\{1,2,\cdots,\ n\}$. For any vector $\bvec{x}\in \complex^n$, the \emph{support} of $\bvec{x}$, denoted by $\supp(\bvec{x})$, is defined as the set of indices corresponding to the nonzero values of $\bvec{x}$, i.e., $\supp(\bvec{x})=\{i\in [n]|[\bvec{x}]_i\ne 0\}$. The symbol $\bm{\phi}_i$ denotes the
$i$ th column of $\bvec{\Phi},\;i \in [n]$ and all the columns of $\bvec{\Phi}$ are assumed to have unit $l_2$ norm, i.e.,
$\|\bm{\phi}_i\|_2=1$, which is a common assumption in literature~\cite{tropp2007signal},~\cite{wang2017recovery}. For any two vectors $\bvec{u}, \bvec{v}\in \complex^n$, $\inprod{\bvec{u}}{\bvec{v}} = \bvec{u}^H \bvec{v}$.
%\item $\|\cdot\|_p$ ($p>0$) denotes the $l^p$ norm of a vector $\bvec{v}$, i.e., $\|\bvec{v}\|_p=\left(\sum_{i=1}^n |v_i|^p\right)^{1/p}$.
For any $S\subseteq [n]$,
$\bvec{x}_S$ denotes a vector comprising those entries of $\bvec{x}$ that
are indexed by numbers belonging to $S$. Similarly, $\bvec{\Phi}_S$ denotes
the submatrix of $\bvec{\Phi}$ formed with the columns of
$\bvec{\Phi}$ having column numbers given by the index set $S$. For any submatrix $\bvec{A}$ of $\bvec{\Phi}$, define $I(\bvec{A})$ as the set of indices of the columns of $\bvec{\Phi}$ that constitute $\bvec{A}$. We use $\bvec{\Phi}[i]$ to denote the set of $L$ consecutive columns of $\bvec{\Phi}$, with indices $(i - 1) L + 1$ to $Li,\;i=1,\;2,\cdots,,\frac{n}{L}$ (throughout the paper, we assume $n$ to be divisible by $L$), which is hereafter being referred to as the $i^\mathrm{th}$ \emph{window} or the window with index $i$.
% In this paper we always assume that the unknown vector length $n$ is an integer multiple of the window size $L$. Furthermore, we assume that $L\ge B$.
Similarly, $\bvec{x}[i]$ denotes the vector comprising of the entries of $\bvec{x}$ that are indexed by $(i-1)L + 1$ to $Li$. We denote by $\bvec{\Phi}[S]$ (resp. $\bvec{x}[S]$) the collection of columns (resp. entries) of $\bvec{\Phi}$ (resp. $\bvec{x}$) corresponding to all the windows with indices in the set $S$. A set of $b(\ge 1)$ consecutive indices is called a \emph{block}. A block is indexed by the starting index of the block, i.e., the $i$-th block starts with the index $i$. In this paper, we consider blocks that are non-overlapping though they can be adjacent. The set of the first indices of the ``true'' nonzero blocks of $\bvec{x}$ \footnote{By ``nonzero block of $\bvec{x}$" is meant a block over which $\bvec{x}$ has non-zero values.} is denoted by $T=\{t_1,t_2,\cdots\}$, where the indices $\{t_i\}_{i\ge 1}$ are recursively defined as follows: $t_1=\min\{j\ge 1|[\bvec{x}]_j\ne 0\}$, and $t_{k+1}=\min\{j\ge t_k+b|[\bvec{x}]_j\ne 0\},\ k\ge 1$.
%i.e.,
%$T=\{i\in
%[n]|[\bvec{x}]_j\ne 0\ \forall j\in \{i,\ i+1,\ \cdots,\ i+kb-1\},\ 1\le k\le p\}$.
We denote by $\bvec{\Phi}\curly{i}$ the submatrix $\bvec{\Phi}_Z$ where $Z\;=\;\{i,i+1,\cdots,i+b-1\}$ is the $i^\mathrm{th}$ block. The vector $\bvec{x}\curly{i}$ is defined analogously. For any set $S\subseteq [n]$, $\bvec{\Phi}\curly{S}$ (resp. $\bvec{x}\curly{S}$) denotes the collection of columns (resp. entries) corresponding to the blocks beginning with the indices in set $S$. If $\bvec{\Phi}\curly{S}$
has full column rank of $|S|\times b$ ($|S|\times b<m$), then the Moore-Penrose
pseudo-inverse of $\bvec{\Phi}\curly{S}$ is given by
$\bvec{\Phi}^\dagger\curly{S}=(\bvec{\Phi}^H\curly{S}\bvec{\Phi}\curly{S})^{-1}\bvec{\Phi}^H\curly{S}$.
The matrices $\proj{S}=\bvec{\Phi}\curly{S}\bvec{\Phi}^\dagger\curly{S}$ and $\dualproj{S}=\bvec{I}-\proj{S}$
respectively denote the orthogonal projection operators associated with
$\spn{\bvec{\Phi}\curly{S}}$ and the orthogonal
complement of $\spn{\bvec{\Phi}\curly{S}}$. Finally, for any matrix $\bvec{A}$, we denote by $\opnorm{\bvec{A}}{2\to 2}$ the operator norm of $\bvec{A}$ defined as $\opnorm{\bvec{A}}{2\to 2}=\max_{\bvec{x}\ne \bvec{0}}\frac{\norm{\bvec{A x}}}{\norm{\bvec{x}}}$, and can be shown to be equivalent to $\max_{\bvec{x}\ne \bvec{0}}\frac{\abs{\bvec{x}^H\bvec{A x}}}{\bvec{x}^H\bvec{x}}$ when $\bvec{A}$ is Hermitian~\cite[pp. 519]{foucart2013mathematical}. We use the abbreviation w.l.o.g. for without loss of generality.
%For any set $S\subseteq
%\mathcal{H}$, the matrix $\dualproj{S}\bvec{\Phi}$ is denoted by
%$\bvec{A}\curly{S}$. For a given sparsity order $K$ and a given matrix
%$\bvec{\Phi}$, it can be shown that there exists a real, positive
%constant $\delta_K$ such that $\bvec{\Phi}$ satisfies the
%following ``Restricted Isometry Property (RIP)'' for all
%$K$-sparse $\bvec{x}$ :
%\begin{align*}
%(1-\delta_K)\norm{\bvec{x}}^2\le \norm{\bvec{\Phi x}}^2\le (1+\delta_K)\norm{\bvec{x}}^2.
%\end{align*}
%The constant $\delta_K$ is called the restricted isometry constant
%(RIC) \cite{candes_decoding_2005} of the matrix $\bvec{\Phi}$ for order $K$. Clearly, it is
%the minimum such constant for which the RIP is satisfied. Convergence conditions of
%recovery algorithms in CS are usually given in terms of upper
%bounds on the RIC.
%\vspace{-4mm}
\section{Proposed Algorithm}
\label{sec:proposed-algorithm}
The proposed TSGBOMP algorithm aims at recovering an unknown vector $\bvec{x}$ with $K$ nonzero blocks of size $b$ each. However, unlike conventional approaches like the BOMP~\cite{eldar2010block}, it
does not assume the exact block locations to be known a priori. It is assumed that there can be at most $p$ adjacent non-zero blocks in $\bvec{x}$, forming a nonzero \textit{cluster} of maximum size $B (=pb)
$\footnote{In this paper, we use the notion of cluster in most cases rather than block, as the former is more general (a block is a cluster with $p=1$).}.
%(such clusters of blocks are also referred to as blocks in the rest of the paper unless otherwise stated).
The nonzero clusters are not contiguous (i.e., they are separated by zeros), and if there are a total of $r$ such nonzero clusters, with the $s^{th}$ cluster having size $j_sb,\ 1\le j_s\le p,\;s=1,2,\cdots,r$
(i.e., it has $j_s$ contiguous blocks of size $b$ each), then  $\sum_{s=1}^rj_s=K$.
%TSGBOMP employs a mixture of the BOMP and GBOMP selection rules to identify one such disjoint block at each iteration of the algorithm.
The whole signal range is divided into $n/L$ windows of size $L$, with $B,L$ satisfying $L\ge B$.
%At each iteration of the algorithm, first one of the windows is selected using the BOMP identification method. Next, a block of maximum size $B$, overlapping with this window, is searched for by the sliding identification strategy of GBOMP.
It is also assumed that any two consecutive nonzero clusters of $\bvec{x}$ are \emph{well-separated} by a zone of at least $L' = L + 2B - b$ zeros. Although, in principle, such a constraint is not necessary for the execution of the algorithm, it ensures that the range of indices, associated to an window identified by TSGBOMP, can contain only one nonzero cluster. This makes the analysis of the algorithm less complicated. 
Also, we assume that the signal length $n$ is large enough to accommodate any arrangement of nonzero clusters with $K$ blocks in the signal (of size $b$) such that any two consecutive clusters are separated by at least $L'$ zeros. 
%and can partition the signal into $R$ windows.
%\subsection{Algorithm description}
%
%\label{sec:algo-description}
%

The proposed TSGBOMP algorithm, given in Table~\ref{tab:TSGBOMP}, employs a two stage search procedure, of which the first one is similar to the BOMP. At any iteration $k (\ge 1)$ of the algorithm, it assumes that a residual vector $\bvec{r}^{k-1}$ and a partially constructed set $T^{k-1}$ are already available from step $k-1$ ($\bvec{r}^0 = \bvec{y}$, $T^0=\emptyset$). Then, following the BOMP procedure, it carries out a \emph{window-wise scanning} and
identifies a window of length $L$ and index $w^k$ from the range  $1$ to $n/L$, for which the $l_2$ norm of the correlation vector $\bvec{\Phi}[w^k]^H\bvec{r}^{k-1}$ is maximum. Next, it carries out a \emph{pointwise scanning} over the range $b^k = \{L(w^k - 1)-(B-1),\cdots, L w^k\}$ and identifies a cluster of size $B$ that has non-empty overlap with the chosen window and for which, the correlation vector $\bvec{\Phi}\curly{h^k}^H\bvec{r}^{k-1}$ has maximum $l_2$ norm, where $h^k$ denotes the set of the first indices of the elementary blocks (of size $b$) contained in the cluster, i.e., $h^k=\{i^k,\ i^k+b,\cdots,\ i^k+(p-1)b\}$, with $i^k \in b^k$. The set of indices $h^k$ is then appended to $T^{k-1}$ to construct $T^k$, and the residual vector is updated to $\bvec{r}^k$ by computing $\dualproj{T^k}\bvec{r}^k$.

%\vspace{-2mm}
%\subsection{Algorithm description}
%%
%\label{sec:algorithm-description}
%
\begin{table}[ht!]
	\centering
	\begin{tabular}{p{8cm}}
		%\hspace{5mm}
		\centering
		\hrulefill
		\begin{description}
			\item[\textbf{Input:}]\ Measurement vector $\bvec{y}\in \complex^m$,
			sensing matrix $\bvec{\Phi}\in \complex^{m\times n}$; sparsity level
			$K$; prespecified residual threshold $\epsilon$; window size ($L$); maximum block size $B=pb,\ p\ge 1$.
			\item[\textbf{Initialize:}]$\quad$
			Counter $k=0$, residue $\bvec{r}^0=\bvec{y}$, estimated support
			set, $T^0=\emptyset$.
			\item[\textbf{While}]($\norm{\bvec{r}^k}\ge
			\epsilon\ \mbox{and}\ \ k<K$)
			%\begin{description}
			\begin{enumerate}
				\item $k=k+1$,
				\item $w^k = \argmax_{1\le l\le n/L} \norm{\bvec{\Phi}^H[l] \bvec{r}^{k-1}}$, $b^k = \{L(w^k - 1)+1-(B-1),\cdots, L w^k\} $,
				\item $i^k = \argmax_{i\in b^k}\norm{\bvec{\Phi}^H\curly{S}\bvec{r}^{k-1}}$, where $S = \{i,\ i+b,\cdots,\ i+(p-1)b\}$;\\
				$h^k = \{i^k,\ i^k+b,\cdots,\ i^k+(p-1)b\}$,
				\item $T^k = T^{k-1}\cup h^k$,
				\item $\displaystyle
				\bvec{x}^k\curly{T^k}=\argmin_{\bvec{u}}\norm{\bvec{y}-\bvec{\Phi}\curly{T^k}\bvec{u}}$,
				\item $\bvec{r}^k = \bvec{y}-\bvec{\Phi}\curly{T^k}\bvec{x}^k\curly{T^k}$.
				%\end{description}
			\end{enumerate}
			
			\item[\textbf{End While}]
		\end{description}
		\hrulefill
		\begin{description}
			\item[\textbf{Output:}]$\quad$  Estimated support set $I(\bvec{\Phi}\{T^K\})$ and estimated vector
			$\displaystyle
			 \widehat{\bvec{x}}=\argmin_{\bvec{u}:\supp(\bvec{u})=I(\bvec{\Phi}\{T^K\})}\norm{\bvec{y}-\bvec{\Phi}\curly{T^K}\bvec{u}}$.
		\end{description}
		\hrulefill
		\caption{Proposed TSGBOMP \textsc{Algorithm}}
		\label{tab:TSGBOMP}
	\end{tabular}
\end{table}
\section{Signal recovery using TSGBOMP}
\label{sec:signal-recovery-tsgbomp}
%

%
%\label{sec:analysis-methodology}
%
In this section, we derive sufficient recovery condition for TSGBOMP to successfully reconstruct an unknown vector $\bvec{x}\in \complex^n$ from a set of noisy measurements, given by $\bvec{y}(\in \complex^m)=\bvec{\Phi x}+\bvec{e}$, where $\bvec{e}$ is a measurement noise vector that is assumed to be $l_2$-bounded by a positive number $\epsilon$, i.e., $\norm{\bvec{e}}\le \epsilon$. For this, we follow
the approach of induction, i.e., at any $k$-th step of iteration ($k\ge 1$), we assume that in each of the previous $(k-1)$ steps, at least one true cluster of $\bvec{x}$ has been selected.
We then first find a condition ensuring that TSGBOMP identifies a window which has a non-empty intersection with one of the true (as yet unidentified) clusters of $\bvec{x}$. Then we find a condition that ensures that a correct cluster (of size between $b$ and $pb$) is identified from among all possible consecutive clusters having nonzero overlap with the identified window. Finally we combine these two conditions to find a uniform recovery condition under which TSGBOMP is successful in identifying a correct cluster at step $k$. For the analysis of finding a correct window, our proof closely follows the arguments of Theorem $1$ of~\cite{wen2018sharp} which is widely used for the analysis of BOMP using block sparse structure with predefined block boundaries and uniform block lengths. However, as the TSGBOMP does not assume fixed block boundaries and uniform block length, certain steps in the analysis of Wen~\emph{et al}~\cite{wen2018sharp} cannot be directly extended to TSGBOMP, and instead, certain novel structures are required to be defined to continue adopting the analysis of \cite{wen2018sharp}. These significantly change the final conditions for successful recovery.

%We assume that at each step, up to iteration $k-1 (k\ge 1)$, TSGBOMP has been successful in identifying a window with nonempty overlap with the support of $\bvec{x}$ in the first stage, and then recovering the nonzero block (of size $jb,\ 1\le j\le p$) that has nonempty overlap with the identified window. Consequently, we assume that $\abs{T^{l}\cap T}\ge l,\ \forall 1\le l\le k-1$ \footnote{We might have $T^{k-1}\setminus T\ne \emptyset$ since at each iteration $l$, $p$ elements are added to $T^{l-1}$, some of which might not belong to $T$.}. Note that this condition is vacuously true for $k=1$ as the algorithm initialization fixes $T^0=\emptyset$.
%
\subsection{Condition for identifying a correct window at step $k (k\ge 1)$}
\label{sec:condition-success-window-selection}
To ensure success at iteration $k$, we need to ensure that the window selected in iteration $k$ has a nonempty overlap with at least one of the true clusters of $\bvec{x}$, having blocks with first indices in the set $T\setminus T^{k-1}$. To express this mathematically, let us define, for any subset $S\subseteq [n]$, $O_{S} = \left\{i:I\left(\bvec{\Phi}[i]\right)\cap I\left(\bvec{\Phi}\curly{S}\right)\ne \emptyset\right\},$ which is the set containing the first indices of the windows that have nonempty overlap with the blocks beginning with the indices in the set $S$.
%\begin{align}
%\label{eq:O_S-definition}
%O_{S} = \left\{i:I\left(\bvec{\Phi}[i]\right)\cap I\left(\bvec{\Phi}\curly{S}\right)\ne \emptyset\right\}.
%\end{align}
Then the necessary condition for selecting a correct window at iteration $k$ is given by \begin{align}
\max_{i\in O_{T\setminus T^{k-1}}}\norm{\bvec{\Phi}^H[i] \bvec{r}^{k-1}} & > \max_{j\in O_{T\setminus T^{k-1}}^C}\norm{\bvec{\Phi}^H[j] \bvec{r}^{k-1}},\nonumber\\
\label{eq:step-k-window-selection-criterion}
\Leftrightarrow\opnorm{\bvec{\Phi}^H[O_{S^{k-1}}] \bvec{r}^{k-1}}{2,\infty} & > \opnorm{\bvec{\Phi}^H [O^C_{S^{k-1}}]\bvec{r}^{k-1}}{2,\infty},
\end{align}
where $S^{k-1} = T\setminus T^{k-1}$. To find a condition to ensure~\eqref{eq:step-k-window-selection-criterion}, we first observe that using steps similar to the ones used to derive Eq~($4.16$) of~\cite{wen2018sharp}, one obtains \begin{align}
\label{eq:rk-expression}
\bvec{r}^{k-1} & = \dualproj{T^{k-1}}\bvec{y}\nonumber\\
\ & = \dualproj{T^{k-1}}(\bvec{\Phi}\bvec{x}+\bvec{e})\nonumber\\
\ & = \dualproj{T^{k-1}}\left(\bvec{\Phi}\curly{S^{k-1}}\bvec{x}\curly{S^{k-1}}\right.\nonumber\\
\ &  \left.+\bvec{\Phi}\curly{T\cap T^{k-1}}\bvec{x}\curly{T\cap T^{k-1}} +\bvec{e}\right)\nonumber\\
\ & = \dualproj{T^{k-1}}\bvec{\Phi}\curly{S^{k-1}}\bvec{x}\curly{S^{k-1}} + \dualproj{T^{k-1}}\bvec{e}.
\end{align}
%Then, using the triangle inequality \begin{align*}
%\lefteqn{\norm{\bvec{\Phi}[i]^H \bvec{r}^{k-1}} + \norm{\bvec{\Phi}[i]^H\dualproj{T^{k-1}}\bvec{e}}} & &\\
%> & \norm{\bvec{\Phi}[i]^H\dualproj{T^{k-1}}\bvec{\Phi}\curly{T\setminus T^k}\bvec{x}\curly{T\setminus T^k}} \forall\ i,
%\end{align*}
%
%implying that the left hand side of inequality~\eqref{eq:step-k-window-selection-criterion} can be lower bounded as \begin{align}
% \lefteqn{\max_{i\in O_{T\setminus T^k}}\norm{\bvec{\Phi}[i]^H \bvec{r}^{k-1}}} & & \nonumber\\
% \ & > \max_{i\in O_{T\setminus T^k}}\norm{\bvec{\Phi}[i]^H\dualproj{T^{k-1}}\bvec{\Phi}\curly{T\setminus T^k}\bvec{x}\curly{T\setminus T^k}} \nonumber\\
% \label{eq:step-k-window-selection-lhs-lower-bound}
% \ & - \max_{i\in O_{T\setminus T^k}}\norm{\bvec{\Phi}[i]^H\dualproj{T^{k-1}}\bvec{e}}
%\end{align} Similarly, the right hand side of inequality~\eqref{eq:step-k-window-selection-criterion} can be upper bounded as \begin{align}
%\lefteqn{\max_{i\in O_{T\setminus T^k}^C}\norm{\bvec{\Phi}[i]^H \bvec{r}^{k-1}}} & &\nonumber\\
%< & \max_{i\in O_{T\setminus T^k}^C}\norm{\bvec{\Phi}[i]^H\dualproj{T^{k-1}}\bvec{\Phi}\curly{T\setminus T^k}\bvec{x}\curly{T\setminus T^k}} \nonumber\\
% \label{eq:step-k-window-selection-rhs-upper-bound}
% \ & + \max_{i\in O_{T\setminus T^k}^C}\norm{\bvec{\Phi}[i]^H\dualproj{T^{k-1}}\bvec{e}}
%\end{align}
%Moreover, following the analysis of Theorem~$4.1$ of~\cite{wen2018sharp} to obtain the inequality~($4.19$) therein,
Plugging in the expression of $\bvec{r}^{k-1}$ from Eq.~\eqref{eq:rk-expression}, it is easy to see (using triangle and reverse triangle inequalities respectively) that the condition~\eqref{eq:step-k-window-selection-criterion} is satisfied if the following is ensured:\begin{align}
\lefteqn{\opnorm{\bvec{\Phi}^H[O_{S^{k-1}}]\dualproj{T^{k-1}}\bvec{\Phi}\curly{S^{k-1}}\bvec{x}\curly{S^{k-1}}}{2,\infty}} & &\nonumber\\
\ & - \opnorm{\bvec{\Phi}^H[O_{S^{k-1}}^C]\dualproj{T^{k-1}}\bvec{\Phi}\curly{S^{k-1}}\bvec{x}\curly{S^{k-1}}}{2,\infty} & \nonumber\\
\label{eq:step-k-window-selection-alternate-criterion}
> & \opnorm{\bvec{\Phi}^H[O_{S^{k-1}}]\dualproj{T^{k-1}}\bvec{e}}{2,\infty} + \opnorm{\bvec{\Phi}^H[O_{S^{k-1}}^C]\dualproj{T^{k-1}}\bvec{e}}{2,\infty}.
\end{align}
Now we will find a lower bound of the left hand side (LHS) and upper bound of the right hand side (RHS), of the inequality~\eqref{eq:step-k-window-selection-alternate-criterion} and compare them to come up with a sufficient condition to ensure~\eqref{eq:step-k-window-selection-alternate-criterion}.

In order to proceed further, we first note that one can write \begin{align}
\lefteqn{\opnorm{\bvec{\Phi}^H[O_{S^{k-1}}]\dualproj{T^{k-1}}\bvec{\Phi}\curly{S^{k-1}}\bvec{x}\curly{S^{k-1}}}{2,\infty}} & &\nonumber\\
\ & = \frac{\sum_{i\in O_{S^{k-1}}}\norm{\bvec{x}[i]}}{\sum_{i\in O_{S^{k-1}}}\norm{\bvec{x}[i]}}\cdot\nonumber\\
\ &  \opnorm{\bvec{\Phi}^H[O_{S^{k-1}}]\dualproj{T^{k-1}}\bvec{\Phi}\curly{S^{k-1}}\bvec{x}\curly{S^{k-1}}}{2,\infty}\nonumber\\
\ & \stackrel{(a)}{\ge} \frac{\sum_{i\in O_{S^{k-1}}}\abs{\bvec{x}^H[i]\bvec{\Phi}^H[i]\dualproj{T^{k-1}}\bvec{\Phi}\curly{S^{k-1}}\bvec{x}\curly{S^{k-1}}}}{\sum_{i\in O_{S^{k-1}}}\norm{\bvec{x}[i]}}\nonumber\\
\ & \stackrel{(b)}{\ge} \frac{\abs{\inprod{\bvec{\Phi}[O_{S^{k-1}}]\bvec{x}[O_{S^{k-1}}]}{\dualproj{T^{k-1}}\bvec{\Phi}\curly{S^{k-1}}\bvec{x}\curly{S^{k-1}}}}}{\sum_{i\in O_{S^{k-1}}}\norm{\bvec{x}[i]}}\nonumber\\
\ & \stackrel{(c)}{=}\frac{\abs{\inprod{\bvec{\Phi}\curly{S^{k-1}}\bvec{x}\curly{S^{k-1}}}{\dualproj{T^{k-1}}\bvec{\Phi}\curly{S^{k-1}}\bvec{x}\curly{S^{k-1}}}}}{\sum_{i\in O_{S^{k-1}}}\norm{\bvec{x}[i]}}\nonumber\\
\ & \stackrel{(d)}{\ge}\frac{\abs{\inprod{\bvec{\Phi}\curly{S^{k-1}}\bvec{x}\curly{S^{k-1}}}{\dualproj{T^{k-1}}\bvec{\Phi}\curly{S^{k-1}}\bvec{x}\curly{S^{k-1}}}}}{\sqrt{d_k}\norm{\bvec{x}\curly{S^{k-1}}}}\nonumber\\
\label{eq:step-k-window-selection-alternate-criterion-lhs-lower-bound-part1}
\ & = \frac{\norm{\dualproj{T^{k-1}}\bvec{\Phi}\curly{S^{k-1}}\bvec{x}\curly{S^{k-1}}}^2}{\sqrt{d_k}\norm{\bvec{x}\curly{S^{k-1}}}},
\end{align}
where $d_k = \abs{O_{S^{k-1}}}$. Here step $(a)$ uses the H\"older's-$(1,\infty)$ inequality followed by Cauchy-Schwartz inequality, while $(b)$ uses the triangle inequality. Step $(c)$ uses the observation that $I(\bvec{\Phi}\curly{S^{k-1}})\subset I(\bvec{\Phi}[O_{S^{k-1}}])$ along with the fact that a window can contain at most one nonzero block (of size $jb$, $1\le j\le p$), since such nonzero blocks are separated by at least $L$ zeros.
%The first observation is verified by noting that if $I(\bvec{\Phi}\curly{S^{k-1}})\not\subset I(\bvec{\Phi}[O_{S^{k-1}}])$, there would be a window indexed $j\in O^C_{S^{k-1}}$ such that $I(\bvec{\Phi}[j])\cap I(\bvec{\Phi}\curly{S^{k-1}})\ne \emptyset$, which, however, by definition of $O_{S^{k-1}},$ would imply that $j\in O_{S^{k-1}}$, yielding contradiction.
Finally, step $(d)$ uses Cauchy-Schwartz inequality
%to obtain $\sum_{i\in O_{T\setminus T^{k-1}}}\norm{\bvec{x}[i]}\le \sqrt{\abs{S^{k-1}}}\norm{\bvec{x}[O_{S^{k-1}}]}$,
and the observations at step $(b)$ implying $\norm{\bvec{x}[O_{S^{k-1}}]} = \norm{\bvec{x}\curly{S^{k-1}}}$.

We now proceed to find an upper bound of $\opnorm{\bvec{\Phi}^H[O_{S^{k-1}}^C]\dualproj{T^{k-1}}\bvec{\Phi}\curly{S^{k-1}}\bvec{x}\curly{S^{k-1}}}{2,\infty}$. In order to do that, following the proof of Lemma~$4.1$ of~\cite{wen2018sharp}, we will instead find an upper bound of $\norm{\bvec{\Phi}^H[j]\dualproj{T^{k-1}}\bvec{\Phi}\curly{S^{k-1}}\bvec{x}\curly{S^{k-1}}}$ for an arbitrary $j\in O^C_{S^{k-1}}$ which will hold uniformly for all $j\in O^C_{S^{k-1}}$. %However, we need to define certain quantities (the vectors $\bvec{u}$ and $\bvec{v}$ and the matrix $\bvec{B}$ as defined below) used in the proof that use the structure of the vector specific to our problem, rendering the final result different than that of Lemma~$4.1$ of~\cite{wen2018sharp}. Below we sketch the key steps of the proof and leave out the details, which one can find out in~\cite{wen2018sharp}.
Fix any $j\in O_{S^{k-1}}^C$ and let $\bvec{q}^{k-1}=\dualproj{T^{k-1}}\bvec{\Phi}\curly{S^{k-1}}\bvec{x}\curly{S^{k-1}}$. Following the proof of Lemma 4.1 of~\cite{wen2018sharp}, we now define a list of quantities for expressing $\norm{\bvec{\Phi}^H[j]\bvec{q}^{k-1}}$ in a convenient way. Let $\theta>0$ be an arbitrary positive number. Define,
%Note that if $j\in O_{T^{k-1}}$, then for all $l\in I(\bvec{\Phi}\curly{T^{k-1}})\cap I(\bvec{\Phi}[j])$, $\bvec{\phi}_l^H\bvec{q}^{k-1}=\bvec{0}$, which might or might not imply $\norm{\bvec{\Phi}^H[j]\bvec{q}^{k-1}}=0$. Obviously it is enough to consider only those $j$ for which this norm is positive. Therefore, in the following we consider, w.l.o.g, $\norm{\bvec{\Phi}^H[j]\bvec{q}^{k-1}}>0$.
 \begin{align}
\label{eq:mu-definition}
\mu & = -\frac{\sqrt{\theta + 1}-1}{\sqrt{\theta}},\\
\label{eq:h-definition}
h_l & = \frac{\left(\bvec{\Phi}[j]\right)_l^H \bvec{q}^{k-1}}{\norm{\bvec{\Phi}[j]^H\bvec{q}^{k-1}}},\ 1\le l\le L,\\
\label{eq:u-definition}
\ \bvec{u} & = \begin{bmatrix}
\bvec{x}\curly{S^{k-1}}\\
\bvec{0}
\end{bmatrix}\in \complex^{\abs{I\left(\bvec{\Phi}\curly{S^{k-1}}\right)} + L}, \\
\label{eq:w-definition}
\ \bvec{w} & = \mu\norm{\bvec{x}\curly{S^{k-1}}}\cdot\begin{bmatrix}
\bvec{0}\\
\bvec{h}\\
\end{bmatrix}\in \complex^{\abs{I\left(\bvec{\Phi}\curly{S^{k-1}}\right)} + L},\\
\label{eq:B-matrix-definition}
\bvec{B} & = \dualproj{T^{k-1}}\begin{bmatrix}
\bvec{\Phi}\curly{S^{k-1}} & \bvec{\Phi}[j]
\end{bmatrix}.
\end{align}
Using these definitions, it is straightforward to verify that $\bvec{B u} = \bvec{q}^{k-1}$ and $ \bvec{B w} = \mu\norm{\bvec{x}\curly{S^{k-1}}} \dualproj{T^{k-1}}\bvec{\Phi}[j]\bvec{h}$,  so that \begin{align*}
\lefteqn{\bvec{w}^H\bvec{B}^H \bvec{B u}} & & \\
\ & = \mu\norm{\bvec{x}\curly{S^{k-1}}} \bvec{h}^H \bvec{\Phi}^H[j](\dualproj{T^{k-1}})^H\bvec{q}^{k-1}\\
\ & =  \mu\norm{\bvec{x}\curly{S^{k-1}}} \bvec{h}^H \bvec{\Phi}^H[j]\bvec{q}^{k-1}\\
\ &  (\because (\dualproj{T^{k-1}})^H \dualproj{T^{k-1}} = \dualproj{T^{k-1}})\\
\ & = \mu\norm{\bvec{x}\curly{S^{k-1}}}\norm{\bvec{\Phi}[j]^H\bvec{q}^{k-1}}.
\end{align*} Consequently, it can be verified that the following identity holds (see the proof of Lemma~$4.1$ of~\cite{wen2018sharp} for details):\begin{align}
\lefteqn{\norm{\bvec{B}(\bvec{u}+\bvec{w})}^2 - \norm{\bvec{B}(\mu^2\bvec{u}-\bvec{w})}^2} & & \nonumber\\
\label{eq:step-k-window-selection-intermediate-identity1}
= & (1-\mu^4)\left(\norm{\bvec{B u}}^2 -\right.\nonumber\\ \ & \left.\sqrt{\theta}\norm{\bvec{x}\curly{S^{k-1}}}\norm{\bvec{\Phi}^H[j]\dualproj{T^{k-1}}\bvec{\Phi}\curly{S^{k-1}}\bvec{x}\curly{S^{k-1}}}\right).
\end{align}
Clearly, if a lower bound of the LHS of Eq.~\eqref{eq:step-k-window-selection-intermediate-identity1} can be found, one can find an upper bound of $\norm{\bvec{\Phi}^H[j]\dualproj{T^{k-1}}\bvec{\Phi}\curly{S^{k-1}}\bvec{x}\curly{S^{k-1}}}$. Consider the two terms of the LHS of Eq.~\eqref{eq:step-k-window-selection-intermediate-identity1}, i.e.,
$\norm{\bvec{B}(\bvec{u}+\bvec{w})}^2\equiv\norm{\dualproj{T^{k-1}}[\bvec{\Phi}\curly{S^{k-1}}\ \bvec{\Phi}[j]](\bvec{u}+\bvec{w})}^2$, and $\norm{\bvec{B}(\mu^2\bvec{u}-\bvec{w})}^2\equiv\norm{\dualproj{T^{k-1}}[\bvec{\Phi}\curly{S^{k-1}}\ \bvec{\Phi}[j]](\mu^2\bvec{u}-\bvec{w})}^2$. In the case of BOMP, both the window and block are the same entity, of length, say, $L$, meaning, $\bvec{\Phi}\curly{S^{k-1}}$ has an integral multiple of $L$ (i.e., $\abs{S^{k-1}}L$) number of columns, while $\bvec{\Phi}[j]$ has $L$ columns. Together,
$\bvec{\Phi}\curly{S^{k-1}}$ along with $\bvec{\Phi}[j]$ constitute the columns of $\bvec{\Phi}$ corresponding to the support of a conventional block sparse vector of block sparsity $\abs{S^{k-1}}+1$ and block size $L$. As a result, in the analysis of BOMP by Wen~\emph{et al}~\cite{wen2018sharp}, the properties of block RIP (See Sections 1 and 2 of~\cite{wen2018sharp}) could be leveraged to find an upper bound of the LHS of Eq.~\eqref{eq:step-k-window-selection-intermediate-identity1}. However, in the case of TSGBOMP, the columns of $\bvec{\Phi}\curly{S^{k-1}}$ correspond to the indices of the as yet unidentified true non-zero blocks of $\bvec{x}$ (totalling $\abs{S^{k-1}}b$ columns) and the columns of $\bvec{\Phi}[j]$ correspond to the columns of the window with index $j$ and size $L$. Since the unknown vector $\bvec{x}$ has a special non-uniform block structure with \textit{unspecified boundaries} (as described in Sec.~\ref{sec:proposed-algorithm}), together, these columns do not correspond to a conventional block sparse vector.  This necessitates definition of a new block sparse structure as given below.
\subsubsection{The Pseudoblock Interleaved Block Sparse Structure (PIBS)}
\label{sec:pseudoblock-interleaved-block-sparse-structure}
The proposed PIBS structure is a more general form of support set than given by the columns of the matrices $\bvec{\Phi}\curly{S^{k-1}}$ and $\bvec{\Phi}[j]$.
%To facilitate finding an upper bound of the Eq.~\eqref{eq:step-k-window-selection-intermediate-identity1}, let us first understand the signal structure that can be associated with the columns  of $\bvec{\Phi}\curly{S^{k-1}}$ along with $\bvec{\Phi}[j]$. The associated signal has the same structure as the one described in Sec.~\ref{sec:proposed-algorithm} with the presence of an additional nonzero
%block of length $L$ (which corresponds to the window $j$ of length $L$) which can be placed anywhere beside a nonzero block surrounded by zeros. Motivated by this observation, we define a new generalized
%block sparse signal structure, termed as pseudoblock interleaved block sparse (PIBS).
In this, clusters of consecutive non-zero blocks (called \emph{true} clusters from hereafter) with sizes given by integer multiples of a constant $b$, and having unspecified boundaries (analogous to $S^{k-1}$) are well separated by several indices, and these gaps may contain a second type of non-zero blocks that we
call \emph{pseudo} blocks (analogous to the  window $[j]$ in $\bvec{\Phi}[j]$). However, unlike above that has only one pseudo block $\bvec{\Phi}[j]$, we consider the more general case of $r$ pseudo blocks, $0\le r\le R$. Further, we do not constrain each pseudo block to remain confined to any specific window. Instead, they can be positioned anywhere in the cluster space such that they do not overlap with any of the true clusters. Positions not occupied by either the true clusters or the pseudo blocks are filled with zeros. The specific structure of a PIBS vector of length $n$ %with parameters $(b,p,L,G,K,R)$,
is described by a $6-$tuple $(b,p,l,L',K,R)$ that is explained below and is also illustrated in Fig~\ref{fig:PIBS-structure-illustration}:
\begin{figure}[t!]
	\centering
	\includegraphics[width=3in,height=1.5in]{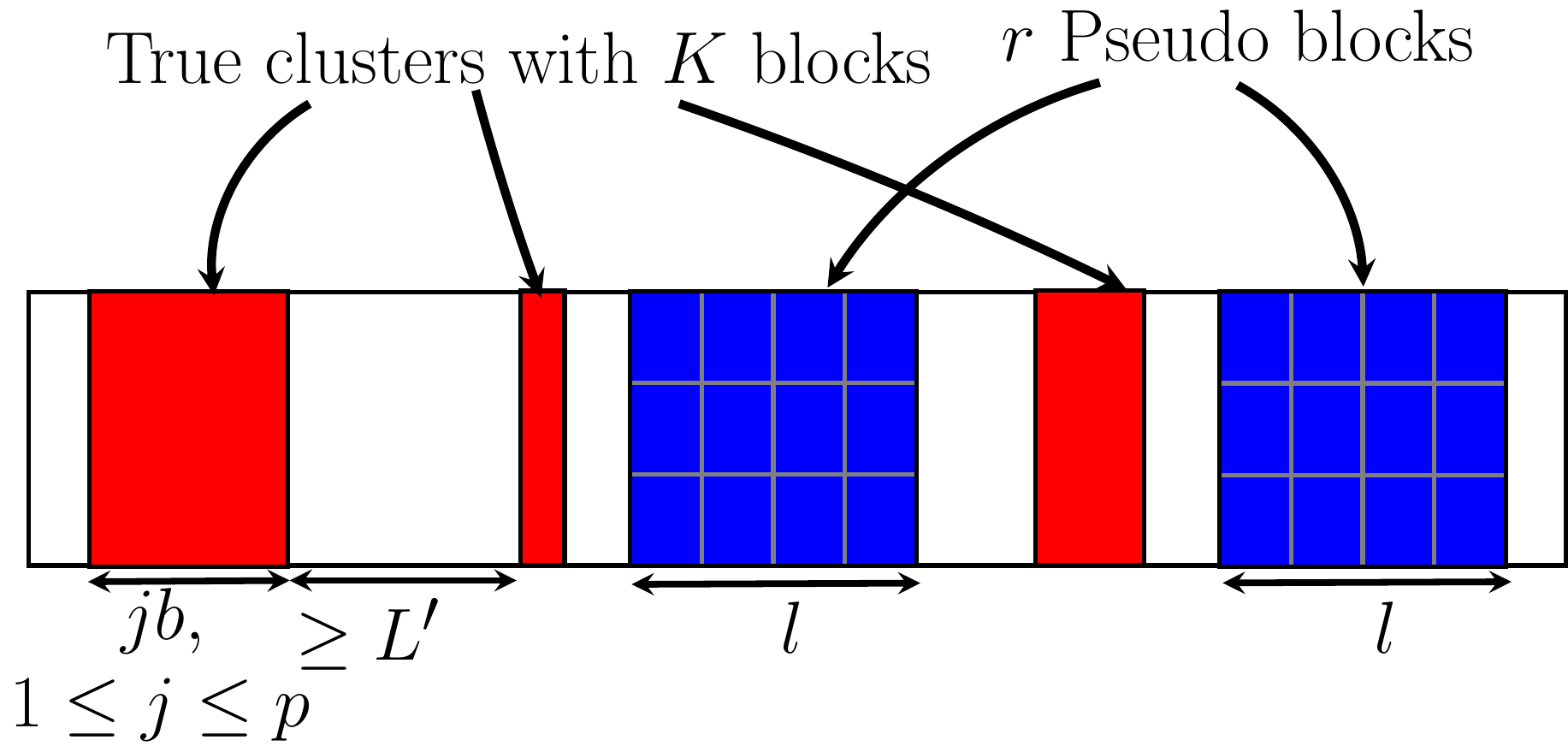}
	\caption{Illustration of a PIBS vector: there are $k$ ($\le K$) true clusters (shown in solid), each of size of the form $jb$, where $1\le j\le p$, and any two consecutive true clusters are separated by at least $L'$ zeros; these gaps (as well as the gaps between the signal edges and the first and last true clusters) contain $r (\le R)$ another type of nonzero blocks (shown in grids), called pseudo blocks, which have fixed size $l\ (0\le l\le L')$.}
	\label{fig:PIBS-structure-illustration}
\end{figure}
\begin{enumerate}
	\item There are $k$ true clusters, with $0\le k\le K$, having lengths $j_1b,\cdots,\ j_kb$, where $1\le j_s\le p,\ 1\le s\le k$, such that $\sum_{s=1}^k j_s = K$.
	\item Any two consecutive true clusters are separated by at least $L'$ zeros. We will always assume that $L'=L+2bp-b$, where $L$ is the length of a window, defined in Section~\ref{sec:notations}.
	\item There are $r$ pseudo blocks, $0\le r\le R$, which are a second type of nonzero blocks, that are of length $l$ each, $0\le l\le L'$, and are positioned anywhere but in a way so that they do not overlap with any of the $k$ true clusters defined above.
 %$G\ge \max\{L,bp\}$ zeros. The constraint $G\ge L$ ensures that a pseudo block can always be placed between two consecutive nonzero blocks, while $G\ge pb$ ensures that the PIBS structure respects the signal structure as described in Sec.~\ref{sec:algo-description}.
\end{enumerate}
Moreover, the signal length $n$ is assumed to be sufficiently large so that all the $R$ pseudo blocks, as well as $0\le k\le K$ true clusters can be accommodated within the signal without any overlap between the true clusters and the pseudo blocks.

Using this definition of the PIBS structure it can be easily seen that the columns of $\bvec{\Phi}\curly{S^{k-1}}$, along with the columns of $\bvec{\Phi}[j]$ correspond to the support of a PIBS vector with parameters $(b,p,L,L',c_k,1)$ ($c_k=|S^{k-1}|$), where the pseudo block is given by the window corresponding to $\bvec{\Phi}[j]$, and the true clusters correspond to $\bvec{\Phi}\curly{S^{k-1}}$.

Next we define a restricted isometry property for this PIBS signal structure, and state and prove a few Lemmas associated to it which will be required in the subsequent analysis of the TSGBOMP algorithm.
\subsubsection{Useful definitions and lemmas related to the PIBS structure}
\label{sec:useful-definitions-lemmas}
We first define an analog of the celebrated restricted isometry property (RIP)~\cite{candes_decoding_2005} in the context of the PIBS vectors.
%\begin{define}
%	\label{def:rip-PIBS}
%	A matrix $\bvec{\Phi}\in \complex^{m\times n}$ is said to satisfy the pseudo block interleaved block restricted isometry property (PIBRIP) of order $(K,R)$, with parameters $b,p,l,L$ if for every PIBS vector $\bvec{x}\in \complex^{n}$ with parameters $(b,p,l,L,K,R)$, the following is satisfied for some $\delta\in [0,1)$: \begin{align}
%	\label{eq:PIBS-rip-inequalities}
%	(1-\delta)\norm{\bvec{x}}^2 & \le \norm{\bvec{\Phi x}}^2\le (1+\delta)\norm{\bvec{x}}^2,
%	\end{align}
%	The smallest such $\delta$ is denoted by $\delta_{b,p,l,L}(K,R)$.
%\end{define}
\begin{define}
%	\label{lem:RIP-eigenvalue-connection}
%	\label{def:rip-PIBS}
	\label{def:RIP}
	The pseudoblock interleaved block restricted isometry constant (PIBRIC) of order $(K,R)$ with parameters $b,p,l,L'$ of a matrix $\bvec{\Phi}\in \complex^{m\times n}$ is defined as \begin{align}
	\lefteqn{\delta_{b,p,l,L'}(K,R)} & &\nonumber\\
	\label{eq:PIBS-rip-eigenvalue-connection}
	\ & = \max_{S\subset [n]:S\in \cup_{r=0}^{R}\cup_{k=0}^{K}\Sigma_{b,p,l,L'}(k,r)}\opnorm{\bvec{\Phi}_S^H\bvec{\Phi}_S - \bvec{I}_{|S|\times |S|}}{2\to 2},
	\end{align}
	where $\Sigma_{b,p,l,L'}(k,r)$ is the collection of all subsets $S\subset [n]$ such that a PIBS vector with parameters $(b,p,l,L',k,r)$ can be supported on $S$.
\end{define}
The following lemma shows that the above definition is equivalent to the following interpretation of PIBRIC which is frequently used in the definition of RIP in literature.

\begin{lem}
		\label{lem:rip-PIBS-eigenvalue-connection}
		A matrix $\bvec{\Phi}\in \complex^{m\times n}$ having the PIBRIC $\delta_{b,p,l,L'}(K,R)$ of order $(K,R)$ with parameters $b,p,l,L'$ satisfies the following inequality for every PIBS vector $\bvec{x}\in \complex^{n}$ with parameters $(b,p,l,L',k,r),\;0\le k\le K,\;0\le r \le R$:
        \begin{align}
		\label{eq:PIBS-rip-inequalities}
		(1-\delta)\norm{\bvec{x}}^2 & \le \norm{\bvec{\Phi x}}^2\le (1+\delta)\norm{\bvec{x}}^2,
		\end{align}
for all $\delta\ge \delta_{b,p,l,L'}(K,R)$.
%		The smallest such $\delta$ is denoted by $\delta_{b,p,l,L}(K,R)$.
\end{lem}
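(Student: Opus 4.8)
The plan is to prove the stated bounds by the standard argument that converts the ``Gram‑matrix'' form of a restricted isometry constant, as in Definition~\ref{def:RIP}, into the ``norm‑distortion'' form~\eqref{eq:PIBS-rip-inequalities}. The whole claim reduces to a single quadratic‑form estimate for the restricted Gram matrix $\bvec{\Phi}_S^H\bvec{\Phi}_S-\bvec{I}$, combined with the Hermitian operator‑norm identity recalled at the end of Section~\ref{sec:notations}.

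Concretely, I would first fix an arbitrary PIBS vector $\bvec{x}\in\complex^n$ with parameters $(b,p,l,L',k,r)$ for some $0\le k\le K$, $0\le r\le R$. By the description of the PIBS structure, $\supp(\bvec{x})$ is contained in some admissible index set $S\in\Sigma_{b,p,l,L'}(k,r)$ (for instance the one obtained by filling every true cluster and every pseudo block completely), and such an $S$ lies in the collection $\bigcup_{r=0}^{R}\bigcup_{k=0}^{K}\Sigma_{b,p,l,L'}(k,r)$ over which the maximum in~\eqref{eq:PIBS-rip-eigenvalue-connection} is taken. Since every entry of $\bvec{x}$ outside $S$ vanishes, $\bvec{\Phi x}=\bvec{\Phi}_S\bvec{x}_S$ and $\norm{\bvec{x}}=\norm{\bvec{x}_S}$. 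I would then expand
\begin{align}
\norm{\bvec{\Phi x}}^2-\norm{\bvec{x}}^2
&=\bvec{x}_S^H\bvec{\Phi}_S^H\bvec{\Phi}_S\bvec{x}_S-\bvec{x}_S^H\bvec{x}_S\nonumber\\
&=\bvec{x}_S^H\left(\bvec{\Phi}_S^H\bvec{\Phi}_S-\bvec{I}_{|S|\times|S|}\right)\bvec{x}_S,\nonumber
\end{align}
observe that $\bvec{\Phi}_S^H\bvec{\Phi}_S-\bvec{I}_{|S|\times|S|}$ is Hermitian, and apply the identity $\opnorm{\bvec{A}}{2\to2}=\max_{\bvec{v}\ne\bvec{0}}\abs{\bvec{v}^H\bvec{A v}}/(\bvec{v}^H\bvec{v})$ valid for Hermitian $\bvec{A}$ to get
\begin{align}
\abs{\norm{\bvec{\Phi x}}^2-\norm{\bvec{x}}^2}
&\le\opnorm{\bvec{\Phi}_S^H\bvec{\Phi}_S-\bvec{I}_{|S|\times|S|}}{2\to2}\norm{\bvec{x}_S}^2\nonumber\\
&\le\delta_{b,p,l,L'}(K,R)\,\norm{\bvec{x}}^2\le\delta\,\norm{\bvec{x}}^2,\nonumber
\end{align}
where the second inequality is exactly the definition of the PIBRIC (since $S$ is one of the admissible sets), and the last holds for any $\delta\ge\delta_{b,p,l,L'}(K,R)$. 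Rearranging $\abs{\norm{\bvec{\Phi x}}^2-\norm{\bvec{x}}^2}\le\delta\norm{\bvec{x}}^2$ yields precisely~\eqref{eq:PIBS-rip-inequalities}.

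I do not anticipate any genuine difficulty here: the argument is entirely linear‑algebraic and uses nothing about the fine geometry of the PIBS structure beyond the bookkeeping fact that a PIBS vector's support lies inside some set over which the maximum in Definition~\ref{def:RIP} ranges. The only point that needs to be stated with a little care is that restricting $\bvec{\Phi}$ and $\bvec{x}$ to $S$ changes neither $\bvec{\Phi x}$ nor $\norm{\bvec{x}}$, so that the Gram‑matrix bound supplied by the definition can be invoked verbatim; everything else is a one‑line application of the Hermitian operator‑norm characterisation.
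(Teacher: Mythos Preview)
Your proposal is correct and follows essentially the same route as the paper's own proof: both restrict to the support $S$ of the PIBS vector, rewrite $\norm{\bvec{\Phi x}}^2-\norm{\bvec{x}}^2$ as the quadratic form $\bvec{x}_S^H(\bvec{\Phi}_S^H\bvec{\Phi}_S-\bvec{I})\bvec{x}_S$, and then invoke the Hermitian operator-norm characterisation together with Definition~\ref{def:RIP} to bound this by $\delta_{b,p,l,L'}(K,R)\norm{\bvec{x}}^2\le\delta\norm{\bvec{x}}^2$. Your write-up is slightly more explicit about why $\bvec{\Phi x}=\bvec{\Phi}_S\bvec{x}_S$ and $\norm{\bvec{x}}=\norm{\bvec{x}_S}$, but otherwise the argument is identical.
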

\begin{proof}
	The proof is supplied in Appendix~\ref{sec:appendix-proof-rip-eigenvalue-condition}.
\end{proof}
%We state below the following observation which, although essentially a restatement of the definition of PIBRIP, will be critically useful in the subsequent analysis of the paper.
%\begin{lem}[Eigenvalue intepretation of PIBRIP]
%	\label{lem:RIP-eigenvalue-connection}
%	The pseudoblock interleaved block restricted isometry constant (PIBRIC) of order $(K,R)$ with parameters $(b,p,l,L)$, of a matrix $\bvec{\Phi}\in \complex^{m\times n}$, satisfies the following:\begin{align}
%	\lefteqn{\delta_{b,p,l,L}(K,R)} & &\nonumber\\
%	\label{eq:PIBS-rip-eigenvalue-connection}
%	\ & = \max_{S\subset [n]:S\in \cup_{r=0}^{R}\cup_{k=0}^{K}\Sigma_{b,p,l,L}(k,r)}\opnorm{\bvec{\Phi}_S^H\bvec{\Phi}_S - \bvec{I}_S}{2\to 2},
%	\end{align}
%	where $\Sigma_{b,p,l,L}(k,r)$ is the collection of all subsets $S\subset [n]$ such that a PIBS vector with parameters $(b,p,l,L,k,r)$ can be supported on $S$.
%\end{lem}
%\begin{proof}
%	The proof is supplied in Appendix~\ref{sec:appendix-proof-rip-eigenvalue-condition}.
%\end{proof}
Conventionally, in the literature the RIC has been required to be bounded above by a constant strictly smaller than $1$ to ensure success of compressed sensing algorithms, which is why we also aim to derive such an upper bound on a PIBRIC of certain order to ensure success of the TSGBOMP algorithm. In the sequel, we adopt the convention that a matrix $\bvec{\Phi}\in \complex^{m\times n}$ is said to satisfy pseudo block interleaved restricted isometry property (PIBRIP) of order $(K,R)$ with parameters $b,p,l,L'$ if $\delta_{b,p,l,L'}(K,R)\in (0,1)$.

Note that the PIBS vector with parameters $(b,p,l,L',K,R)$ is at most $Kb + Rl$ sparse in the conventional sense. From definition~\ref{def:RIP}, the computation of $\delta_{b,p,l,L'}(K,R)$ requires one to search for the maximum eigenvalue of $\opnorm{\bvec{\Phi}_S^H\bvec{\Phi}_S - \bvec{I}_S}{2\to 2}$ only over $\cup_{r=0}^R\cup_{k=0}^K\Sigma_{b,p,l,L'}(k,r)$, whereas the computation of $\delta_{Kb+Rl}$ requires finding the maximum eigenvalue of a similar submatrix over \emph{all} subsets $S$ of size $\le (Kb + Rl)$, meaning, the PIBRIC $\delta_{b,p,l,L'}(K,R)$ is smaller than the conventional RIC $\delta_{Kb + Rl}$. Similar observations were also made for the block sparse vectors in~\cite{eldar2009robust}. Such smaller restricted isometry constants are inherent in signals with specialized signal structure, which belong to a class of general signal structures referred to as model sparse signals~\cite{baraniuk2010model}.

Now we state the following lemmas which will be useful for our analysis of the TSGBOMP algorithm. Similar lemmas have already appeared in the literature in the context of the restricted isometry properties for sparse~\cite{candes_decoding_2005,dai2009subspace,davenport2010analysis,wen2017novel,wen2017nearly} and block sparse vectors~\cite{wen2018sharp}. We modify these lemmas to be applicable to our specific PIBS structure.
\begin{lem}[Block number monotonicity]
	\label{lem:PIBRIP-K-R-monotonicity}
	Let the matrix $\bvec{\Phi}$ satisfy PIBRIP with parameters $(b,p,l,L',K_i,R_j),\ 1\le i,\ j\le 2$, and $K_1\le K_2$, $R_1\le R_2$. Then, $\delta_{b,p,l,L'}(K_1,R_1)\le \delta_{b,p,l,L'}(K_2,R_1)\le \delta_{b,p,l,L'}(K_2,R_2)$, and $\delta_{b,p,l,L'}(K_1,R_1)\le \delta_{b,p,l,L'}(K_1,R_2)\le\delta_{b,p,l,L'}(K_2,R_2)$.
\end{lem}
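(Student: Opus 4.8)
The plan is to reduce the statement to an elementary monotonicity property of a maximum taken over a nested family of candidate support sets. Writing, for brevity, $\mathcal{S}(K,R) := \bigcup_{r=0}^{R}\bigcup_{k=0}^{K}\Sigma_{b,p,l,L'}(k,r)$, Definition~\ref{def:RIP} reads $\delta_{b,p,l,L'}(K,R) = \max_{S\in \mathcal{S}(K,R)}\opnorm{\bvec{\Phi}_S^H\bvec{\Phi}_S - \bvec{I}_{|S|\times|S|}}{2\to 2}$; the PIBRIP hypotheses merely guarantee that each of these maxima lies in $(0,1)$ and in particular is well defined. Since the quantity being maximized, $S\mapsto \opnorm{\bvec{\Phi}_S^H\bvec{\Phi}_S - \bvec{I}_{|S|\times|S|}}{2\to 2}$, does not depend on $(K,R)$, the whole lemma follows once I establish that $\mathcal{S}(K,R)$ is nondecreasing (with respect to inclusion of set-collections) in each of its two arguments, because the maximum of a fixed function over a larger collection of sets is at least its maximum over a sub-collection.

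First I would note that $K_1\le K_2$ gives $\{0,1,\dots,K_1\}\subseteq\{0,1,\dots,K_2\}$, so, for any fixed $r$, $\bigcup_{k=0}^{K_1}\Sigma_{b,p,l,L'}(k,r)\subseteq\bigcup_{k=0}^{K_2}\Sigma_{b,p,l,L'}(k,r)$, and then taking the union over $r\in\{0,\dots,R_1\}$ yields $\mathcal{S}(K_1,R_1)\subseteq\mathcal{S}(K_2,R_1)$; hence $\delta_{b,p,l,L'}(K_1,R_1)\le\delta_{b,p,l,L'}(K_2,R_1)$. The identical argument with the two coordinates interchanged (using $R_1\le R_2$) gives $\mathcal{S}(K_2,R_1)\subseteq\mathcal{S}(K_2,R_2)$ and thus $\delta_{b,p,l,L'}(K_2,R_1)\le\delta_{b,p,l,L'}(K_2,R_2)$; chaining the two inequalities produces the first chain of inequalities in the lemma statement. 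Running the same two steps in the opposite order — first enlarge $R$ from $R_1$ to $R_2$ with $K=K_1$ held fixed, then enlarge $K$ from $K_1$ to $K_2$ with $R=R_2$ held fixed — produces the second chain.

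This argument has no genuinely hard step; the only point deserving a moment's care is confirming that the admissible-support collection really does enlarge when $K$ or $R$ increases, and this is immediate from the form $\mathcal{S}(K,R)=\bigcup_{r\le R}\bigcup_{k\le K}\Sigma_{b,p,l,L'}(k,r)$, since widening the index range of a union never discards any member set — one need not know anything about how the individual families $\Sigma_{b,p,l,L'}(k,r)$ for distinct $(k,r)$ compare. (Alternatively, one could phrase the monotonicity through Lemma~\ref{lem:rip-PIBS-eigenvalue-connection}: every PIBS vector eligible under the order $(K_1,R_1)$ is also eligible under $(K_2,R_2)$, so the family of $(1\pm\delta)$-type constraints that $\delta_{b,p,l,L'}(K_2,R_2)$ must satisfy contains all those that $\delta_{b,p,l,L'}(K_1,R_1)$ must satisfy, forcing the former to be no smaller; but the direct set-inclusion argument above is the cleanest route.)
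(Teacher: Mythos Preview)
Your proof is correct and follows essentially the same approach as the paper: both observe from Definition~\ref{def:RIP} that $\delta_{b,p,l,L'}(K,R)$ is a maximum over the collection $\bigcup_{r=0}^{R}\bigcup_{k=0}^{K}\Sigma_{b,p,l,L'}(k,r)$, and that increasing $K$ or $R$ enlarges this collection, so the maximum can only increase. You have simply spelled out the nesting argument in slightly more detail than the paper's one-sentence proof.
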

\begin{proof}
	The proof is provided in Appendix.~\ref{sec:appendix-proof-pibrip-k-r-monotonicity}.
\end{proof}
\begin{lem}[]
	\label{lem:PIBRIP-L-monotonicity}
	For fixed $b,p,K$, if $0\le l< L'$, then $\delta_{b,p,l,L'}(K,R)\le \delta_{b,p,L',L'}(K,R)$ for $R=0,1,2$.
\end{lem}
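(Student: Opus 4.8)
The plan is to obtain the inequality from two ingredients: monotonicity of the restricted operator norm $\opnorm{\bvec{\Phi}_S^H\bvec{\Phi}_S-\bvec{I}_{|S|\times|S|}}{2\to2}$ under set inclusion, and a combinatorial claim that every support set admissible for the pseudoblock length $l$ sits inside a support set admissible for the pseudoblock length $L'$. For the first ingredient I would record: if $S\subseteq S'\subseteq[n]$ then $\opnorm{\bvec{\Phi}_S^H\bvec{\Phi}_S-\bvec{I}}{2\to2}\le\opnorm{\bvec{\Phi}_{S'}^H\bvec{\Phi}_{S'}-\bvec{I}}{2\to2}$. This is immediate from the Hermitian characterization of the operator norm recalled in Section~\ref{sec:notations}: any $\bvec{z}\in\complex^{|S|}$, zero-padded on the coordinates $S'\setminus S$ to $\widetilde{\bvec{z}}\in\complex^{|S'|}$, satisfies $\bvec{\Phi}_{S'}\widetilde{\bvec{z}}=\bvec{\Phi}_S\bvec{z}$ and $\norm{\widetilde{\bvec{z}}}=\norm{\bvec{z}}$, hence $\widetilde{\bvec{z}}^H(\bvec{\Phi}_{S'}^H\bvec{\Phi}_{S'}-\bvec{I})\widetilde{\bvec{z}}=\bvec{z}^H(\bvec{\Phi}_S^H\bvec{\Phi}_S-\bvec{I})\bvec{z}$, so the supremum defining the first norm is dominated by that defining the second. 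This is the same device underlying Lemma~\ref{lem:PIBRIP-K-R-monotonicity}.

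For the second ingredient I would prove: for every $0\le k\le K$, every $0\le r\le R$, and every $S\in\Sigma_{b,p,l,L'}(k,r)$, there is $\widetilde{S}\in\Sigma_{b,p,L',L'}(k,r')$ for some $r'\le R$ with $S\subseteq\widetilde{S}$. Fix $S$ and a PIBS vector with parameters $(b,p,l,L',k,r)$ supported on it, with true clusters $C_1,\dots,C_k$ (contiguous, of sizes $j_sb$ with $\sum_s j_s=K$, pairwise separated by at least $L'$ zeros) and pseudo blocks $P_1,\dots,P_r$ (each of length $l$, disjoint from the $C_i$). Each $P_s$ lies entirely inside one of the gap regions into which the $C_i$ cut the complement of $\cup_i C_i$, since it cannot meet any true cluster. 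For each gap region that contains at least one pseudo block I would replace the pseudo block(s) there by a single block of length $L'$ still contained in that gap region; this is legitimate because every interior gap has length $\ge L'$ by the PIBS definition (and, under the standing assumption that $n$ is large enough to accommodate all pseudo blocks at their maximal length $L'$, the two boundary regions behave the same way when they are used). Each enlarged block is disjoint from all true clusters, and taking one per occupied gap keeps the count at most $r\le R$. Letting $\widetilde{S}$ be the union of the $C_i$ with these enlarged blocks yields the support of a PIBS vector with parameters $(b,p,L',L',k,r')$, $r'\le R$, with $S\subseteq\widetilde{S}$.

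Assembling: let $S^\star$ attain the maximum in Definition~\ref{def:RIP} for $\delta_{b,p,l,L'}(K,R)$, so $S^\star\in\Sigma_{b,p,l,L'}(k,r)$ for some $k\le K$, $r\le R$. Pick $\widetilde{S}\supseteq S^\star$ as above; it belongs to one of the families over which $\delta_{b,p,L',L'}(K,R)$ is maximized, so by monotonicity $\delta_{b,p,l,L'}(K,R)=\opnorm{\bvec{\Phi}_{S^\star}^H\bvec{\Phi}_{S^\star}-\bvec{I}}{2\to2}\le\opnorm{\bvec{\Phi}_{\widetilde{S}}^H\bvec{\Phi}_{\widetilde{S}}-\bvec{I}}{2\to2}\le\delta_{b,p,L',L'}(K,R)$. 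For $R=0$ there are no pseudo blocks and the two families literally coincide, so equality holds; the content is in $R=1,2$.

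The step I expect to be the real obstacle is the geometric bookkeeping in the inclusion claim: verifying that the enlarged pseudo blocks always fit inside their gap regions without crossing a true cluster or running off the ends of $[n]$. For interior gaps this is guaranteed by the built-in $\ge L'$ separation, but the boundary regions are not a priori that wide, so one must lean on the ``signal length sufficiently large'' hypothesis and keep careful track of how many pseudo blocks can be forced into a narrow boundary region simultaneously; this is precisely where restricting to $R\le2$ keeps the case analysis manageable. Everything else is routine.
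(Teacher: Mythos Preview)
Your high-level strategy matches the paper's: reduce the inequality to the monotonicity of $\opnorm{\bvec{\Phi}_S^H\bvec{\Phi}_S-\bvec{I}}{2\to 2}$ under set inclusion, and then exhibit, for every admissible support $S$ with pseudoblock length $l$, a covering support $S'\supseteq S$ admissible for pseudoblock length $L'$. The first ingredient is fine.

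The covering construction, however, has two genuine gaps. First, your rule ``one $L'$-block per occupied gap'' does not always produce a set containing $S$: when $r=2$ and both small pseudo blocks sit in the same inter-cluster gap but are separated by more than $L'-l$ indices, no single interval of length $L'$ contains both, so $S\not\subseteq\widetilde S$. You need two $L'$-blocks there, and then the second one may spill over into the next true cluster. Second, for the boundary regions your appeal to ``$n$ large enough'' does not help: that standing hypothesis only says \emph{some} PIBS configuration with $L'$-sized pseudo blocks exists, not that the particular configuration $S$ has leading or trailing gaps of length $\ge L'$. A true cluster may start at index $1$, or at index $l+1$, and then an $l$-sized pseudo block fits in the leading edge while no $L'$-sized block does.

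The paper's fix for both problems is the step you are missing: it does \emph{not} keep the true clusters of $S$ fixed in $S'$. When an enlarged pseudo block would collide with an adjacent true cluster (or run off the edge), the paper slides that cluster flush against the boundary (or against the previous enlarged block), then places the $L'$-pseudo block immediately after it; because the original cluster and the original small pseudo block both lay inside the first $f+L'$ indices (where $f$ is the cluster size), the rearranged pair still covers everything that was there. The same sliding trick handles the two-in-one-gap case. So your plan is salvageable, but you must allow the true-cluster positions in $S'$ to differ from those in $S$ and then verify index-by-index that the slid configuration still contains $S$; that bookkeeping is exactly the case analysis the paper carries out for $R=1$ and $R=2$.
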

\begin{proof}
	The proof is provided in Appendix.~\ref{sec:appendix-proof-pibrip-l-monotonicity}.
\end{proof}
\begin{lem}[]%Cross blocksize monotonicity
	\label{lem:PIBRIP-B-L-monotonicity}
	For any $K\ge 1$ and $L'=L+2bp-b> L\ge bp$, $\delta_{b,p,L',L'}(K,1)\le \delta_{b,p,L',L'}(K-1,2)$.
\end{lem}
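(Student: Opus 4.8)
The plan is to work straight from Definition~\ref{def:RIP}: $\delta_{b,p,L',L'}(K,1)$ is the maximum of $\opnorm{\bvec{\Phi}_S^H\bvec{\Phi}_S-\bvec{I}}{2\to2}$ over all $S$ in $\mathcal{F}_1:=\bigcup_{r=0}^{1}\bigcup_{k=0}^{K}\Sigma_{b,p,L',L'}(k,r)$, while $\delta_{b,p,L',L'}(K-1,2)$ is the same maximum over $\mathcal{F}_2:=\bigcup_{r=0}^{2}\bigcup_{k=0}^{K-1}\Sigma_{b,p,L',L'}(k,r)$. I would show that every $S\in\mathcal{F}_1$ is contained in some $S'\in\mathcal{F}_2$; since $\opnorm{\bvec{\Phi}_S^H\bvec{\Phi}_S-\bvec{I}}{2\to2}$ is nondecreasing under set inclusion --- which follows, exactly as in the appendix proofs of Lemmas~\ref{lem:PIBRIP-K-R-monotonicity} and~\ref{lem:PIBRIP-L-monotonicity}, by extending a vector supported on $S$ by zeros onto $S'$ and using the Hermitian operator-norm characterization recalled in Section~\ref{sec:notations} --- the stated inequality follows on taking maxima.

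The real content is the containment claim. If $k\le K-1$ then $S\in\mathcal{F}_2$ already and we take $S'=S$. The case $k=K$ is the one to handle, and here $K\ge1$ forces the PIBS vector supported on $S$ to possess at least one true cluster; the idea is to ``demote'' one true cluster to a pseudo block. Take $C$ to be the rightmost true cluster. Since a true cluster spans at most $pb$ indices and $|C|\le pb\le L<L'$ by hypothesis, $C$ can be enlarged to a block $P$ of exactly $L'$ consecutive indices with $C\subseteq P$, by appending the $L'-|C|<L'$ missing indices among the zeros bordering $C$: when $K\ge2$ there are at least $L'$ zeros immediately to the left of $C$ (the separation from the next true cluster) into which these extra indices fit, and when $K=1$ the standing assumption that $n$ is large enough to host the relevant configurations leaves room for $P$ inside $[n]$. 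Put $S'=S\cup(P\setminus C)$, so $S\subseteq S'$. Then I would check that $S'$ supports a PIBS vector with parameters $(b,p,L',L',K-1,r+1)$: the surviving true clusters are those of the original vector other than $C$, still of at most $p$ blocks each and still separated by at least $L'$ zeros (deleting $C$ merely fuses the two gaps bordering it into one of size at least $L'+|C|+L'$); the pseudo blocks are the original $r$ of them together with $P$, which by construction meets no surviving true cluster (pseudo blocks are only required to avoid the true clusters), so there are $r+1\le2$ of them. Hence $S'\in\Sigma_{b,p,L',L'}(K-1,r+1)\subseteq\mathcal{F}_2$, which is what the claim needs.

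I expect the geometry of the demotion step to be the one point needing care: verifying that the length-$L'$ block $P$ can be placed inside $[n]$ while staying disjoint from all surviving true clusters. This is precisely where the hypotheses are consumed --- $L\ge bp$ makes a whole true cluster fit strictly inside a pseudo block of size $L'$, and the guaranteed $\ge L'$ zeros flanking $C$ (together with the large-$n$ assumption, which covers the boundary subcase in which $C$ is the only true cluster) supply the room to absorb the $L'-|C|$ extra indices without colliding with another true cluster. Granting that, the counting of true clusters and pseudo blocks and the inclusion-monotonicity of the operator norm are routine, and maximizing over $S\in\mathcal{F}_1$ closes the argument.
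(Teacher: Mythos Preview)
Your approach is essentially that of the paper: show every $S\in\mathcal{F}_1$ sits inside some $S'\in\mathcal{F}_2$ by absorbing true blocks into a new pseudo block of size $L'$, then invoke inclusion-monotonicity of $\opnorm{\bvec{\Phi}_S^H\bvec{\Phi}_S-\bvec{I}}{2\to2}$. The only tactical difference is that the paper peels off a \emph{single} elementary block of size $b$ from the edge of a true cluster adjacent to the existing pseudo block (so it lands exactly in $\Sigma_{b,p,L',L'}(K-1,2)$), whereas you demote an entire rightmost cluster. Your version is arguably cleaner in that it handles $r=0$ and $r=1$ uniformly, which the paper's proof glosses over.

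Two small points to fix. First, a slip: if the demoted cluster $C$ carries $j$ elementary blocks ($1\le j\le p$), the surviving true blocks number $K-j$, not $K-1$; so $S'\in\Sigma_{b,p,L',L'}(K-j,r+1)$, which is still in $\mathcal{F}_2$ since $K-j\le K-1$, but your stated parameter tuple is off. Second, a gap in the placement step: your case split assumes that $K\ge2$ forces a second true cluster to the left of $C$, but all $K$ blocks can sit in a single cluster whenever $K\le p$ (which the hypotheses do not exclude). In that situation the ``$L'$ zeros to the left'' argument is unavailable and you must again appeal to the large-$n$ standing assumption to find room for $P$, exactly as you already do for $K=1$. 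With these corrections the proof goes through.
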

\begin{proof}
	The proof is provided in Appendix.~\ref{sec:appendix-proof-pibrip-b-l-monotonicity}.
\end{proof}
\begin{lem}[Projected matrix PIBRIP]
	\label{lem:projected-matrix-pibrip}
	Let $\bvec{\Phi}\in \complex^{m\times n}$ be a matrix such that $\delta_{b,p,l,L'}(K,R)<1$. Let $S$ be the support of a PIBS vector of length $n$ with parameters $(b,p,l,L',k,r),\;0\le k\le K,\;0\le r\le R$, i.e., $S$ contains the indices of the true clusters as well as the pseudo blocks of such a vector. Let $S_1,\;S_2\subset S$ be such that $S=I(\bvec{\Phi}\{S_1\})\cup {S_2}$ ($S_1$ contains the starting indices of some of the true blocks in $S$), and let $\bvec{x}\in \complex^n$  such that $\supp(\bvec{x})= S_2$. Then, \begin{align}
	(1-\delta_{b,p,l,L'}(K,R))\norm{\bvec{x}}^2 & \le \norm{\dualproj{S_1}\bvec{\Phi}_{S_2}\bvec{x}_{S_2}}^2\nonumber\\
	\label{eq:projected-PIBRIP-inequalities}
	\ & \le (1+\delta_{b,p,l,L'}(K,R))\norm{\bvec{x}}^2.
	\end{align}
\end{lem}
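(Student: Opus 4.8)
The plan is to prove the two inequalities separately. The upper bound is essentially immediate from non-expansiveness of orthogonal projections together with the PIBRIP upper bound, while the lower bound will follow by rewriting $\dualproj{S_1}\bvec{\Phi}_{S_2}\bvec{x}_{S_2}$ as $\bvec{\Phi}$ acting on a vector supported inside $S$ whose norm dominates $\norm{\bvec{x}}$. Write $\delta=\delta_{b,p,l,L'}(K,R)$ and $A=I(\bvec{\Phi}\curly{S_1})$. Since $S_1$ indexes a sub-collection of the true blocks of the PIBS vector supported on $S$ while $S_2$ consists of the remaining indices of $S$, we have $S=A\cup S_2$ with $A\cap S_2=\emptyset$. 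A preliminary fact, used twice below: because $\delta<1$, Definition~\ref{def:RIP} gives $\opnorm{\bvec{\Phi}_S^H\bvec{\Phi}_S-\bvec{I}}{2\to 2}\le\delta$; for any $S'\subseteq S$ the matrix $\bvec{\Phi}_{S'}^H\bvec{\Phi}_{S'}-\bvec{I}$ is the principal submatrix of $\bvec{\Phi}_S^H\bvec{\Phi}_S-\bvec{I}$ indexed by $S'$, and since a principal submatrix of a Hermitian matrix has operator norm at most that of the matrix, $\opnorm{\bvec{\Phi}_{S'}^H\bvec{\Phi}_{S'}-\bvec{I}}{2\to 2}\le\delta$ as well; consequently $(1-\delta)\norm{\bvec{g}}^2\le\norm{\bvec{\Phi g}}^2\le(1+\delta)\norm{\bvec{g}}^2$ for every $\bvec{g}\in\complex^n$ with $\supp(\bvec{g})\subseteq S$.

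For the upper bound, non-expansiveness of the orthogonal projection $\dualproj{S_1}$ gives $\norm{\dualproj{S_1}\bvec{\Phi}_{S_2}\bvec{x}_{S_2}}^2\le\norm{\bvec{\Phi}_{S_2}\bvec{x}_{S_2}}^2=\norm{\bvec{\Phi x}}^2$, the last equality because $\supp(\bvec{x})=S_2$; the preliminary fact with $\bvec{g}=\bvec{x}$ then yields $\norm{\bvec{\Phi x}}^2\le(1+\delta)\norm{\bvec{x}}^2$.

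For the lower bound, since $\proj{S_1}\bvec{\Phi}_{S_2}\bvec{x}_{S_2}\in\spn{\bvec{\Phi}\curly{S_1}}$ we may write $\proj{S_1}\bvec{\Phi}_{S_2}\bvec{x}_{S_2}=\bvec{\Phi}\curly{S_1}\bvec{z}$ for some $\bvec{z}$ supported on $A$ (indeed $\bvec{z}=\bvec{\Phi}^\dagger\curly{S_1}\bvec{\Phi}_{S_2}\bvec{x}_{S_2}$, as $\bvec{\Phi}\curly{S_1}$ is injective, being a column submatrix of the injective matrix $\bvec{\Phi}_S$). Hence
\[
\dualproj{S_1}\bvec{\Phi}_{S_2}\bvec{x}_{S_2}=\bvec{\Phi}_{S_2}\bvec{x}_{S_2}-\bvec{\Phi}\curly{S_1}\bvec{z}=\bvec{\Phi v},
\]
where $\bvec{v}\in\complex^n$ is supported on $S=A\cup S_2$ with $\bvec{v}_{S_2}=\bvec{x}_{S_2}$ and $\bvec{v}_A=-\bvec{z}$. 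Because $A$ and $S_2$ are disjoint, $\norm{\bvec{v}}^2=\norm{\bvec{x}}^2+\norm{\bvec{z}}^2\ge\norm{\bvec{x}}^2$, so the preliminary fact with $\bvec{g}=\bvec{v}$ gives $\norm{\dualproj{S_1}\bvec{\Phi}_{S_2}\bvec{x}_{S_2}}^2=\norm{\bvec{\Phi v}}^2\ge(1-\delta)\norm{\bvec{v}}^2\ge(1-\delta)\norm{\bvec{x}}^2$, as claimed.

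The step I expect to be the main obstacle is the preliminary fact rather than the algebra: the vectors $\bvec{x}$ and $\bvec{v}$ have supports contained in $S$ but need not themselves be PIBS vectors (deleting an interior true block from a cluster, or zeroing out part of a pseudo block, can violate the separation or fixed-length constraints), so Lemma~\ref{lem:rip-PIBS-eigenvalue-connection} cannot be invoked verbatim and one must instead argue through principal-submatrix monotonicity of the operator norm as above. It is also worth flagging that the decomposition $S=I(\bvec{\Phi}\curly{S_1})\cup S_2$ has to be read as a disjoint union: if $\supp(\bvec{x})$ met $I(\bvec{\Phi}\curly{S_1})$ the left inequality would be false, since $\dualproj{S_1}$ could annihilate a nonzero portion of $\bvec{\Phi}_{S_2}\bvec{x}_{S_2}$ while $\norm{\bvec{x}}$ remains positive.
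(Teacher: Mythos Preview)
Your proof is correct, but it follows a genuinely different route from the paper's. The paper argues via the Schur complement, in the style of Lemma~5 of Cai--Wang: writing $\bvec{\Phi}_S=[\bvec{\Phi}_{S_2\setminus S_1'}\ \bvec{\Phi}\curly{S_1}]$ with $S_1'=I(\bvec{\Phi}\curly{S_1})$, the matrix $\bvec{\Phi}_{S_2\setminus S_1'}^H\dualproj{S_1}\bvec{\Phi}_{S_2\setminus S_1'}$ is the Schur complement of the block $\bvec{\Phi}\curly{S_1}^H\bvec{\Phi}\curly{S_1}$ in $\bvec{\Phi}_S^H\bvec{\Phi}_S$, and its inverse is a principal submatrix of $(\bvec{\Phi}_S^H\bvec{\Phi}_S)^{-1}$; eigenvalue interlacing then gives $\opnorm{\bvec{\Phi}_{S_2\setminus S_1'}^H\dualproj{S_1}\bvec{\Phi}_{S_2\setminus S_1'}-\bvec{I}}{2\to 2}\le\delta$ directly. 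Your argument instead handles the two inequalities separately and elementarily---non-expansiveness of $\dualproj{S_1}$ for the upper bound, and for the lower bound the observation that $\dualproj{S_1}\bvec{\Phi}_{S_2}\bvec{x}_{S_2}=\bvec{\Phi v}$ with $\supp(\bvec{v})\subseteq S$ and $\norm{\bvec{v}}\ge\norm{\bvec{x}}$. What you gain is a fully self-contained proof that avoids block-matrix inversion and the external citation; what the paper's route gains is the slightly stronger intermediate statement (an operator-norm bound on the projected Gram matrix rather than just vector-wise inequalities), which is sometimes convenient downstream. Your flag about the disjointness of $I(\bvec{\Phi}\curly{S_1})$ and $\supp(\bvec{x})$ is apt and applies equally to the paper's proof, which silently restricts to $S_2\setminus S_1'$.
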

\begin{proof}
	The proof is provided in Appendix.~\ref{sec:appendix-proof-projected-matrix-pibrip}.
\end{proof}
\begin{lem}
	\label{lem:projected-matrix-innerproduct-inequality}
	Let the matrix $\bvec{\Phi}$ satisfy $\delta_{b,p,l,L'}(K,R)<1$. Let $S_1,S_2,S_3\subset [n]$ be such that the sets $S_2,\;S_3$ are disjoint and $I(\bvec{\Phi}\curly{S_1})\cup S_2\cup S_3$ is the support of a PIBS vector with parameters $(b,p,l,L',k,r),\;0\le k\le K,\;0\le r\le R$. Then, for any two vectors $\bvec{u}, \bvec{v}$ that are supported on the sets $S_2, S_3$ respectively, \begin{align}
	\label{eq:projected-matrix-innnerprouct-inequality}
	\abs{\inprod{\dualproj{S_1}\bvec{\Phi u}}{\bvec{\Phi v}}} & \le \delta_{b,p,l,L'}(K,R)\norm{\bvec{u}}\norm{\bvec{v}}.
	\end{align}
\end{lem}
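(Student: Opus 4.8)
The plan is to reduce the estimate to the projected PIBRIP bounds of Lemma~\ref{lem:projected-matrix-pibrip} by means of a polarization identity, taking care of the complex phase with the $\sign{\cdot}$ operation. First I would use homogeneity: both sides of~\eqref{eq:projected-matrix-innnerprouct-inequality} scale by $\abs{\alpha}\abs{\beta}$ under $\bvec{u}\mapsto\alpha\bvec{u}$, $\bvec{v}\mapsto\beta\bvec{v}$, so it suffices to treat $\norm{\bvec{u}}=\norm{\bvec{v}}=1$ (the case where either vanishes being trivial). Next, since $\dualproj{S_1}$ is Hermitian and idempotent,
\[
\inprod{\dualproj{S_1}\bvec{\Phi u}}{\bvec{\Phi v}}=\bvec{u}^H\bvec{\Phi}^H\dualproj{S_1}\bvec{\Phi v}=\inprod{\dualproj{S_1}\bvec{\Phi u}}{\dualproj{S_1}\bvec{\Phi v}}=:c ;
\]
if $c=0$ there is nothing to prove, otherwise I set $\sigma=\sign{c}$ and replace $\bvec{v}$ by $\bar{\sigma}\bvec{v}$, which preserves $\norm{\bvec{v}}=1$ and $\supp(\bvec{v})=S_3$ while turning the inner product into the nonnegative real number $\abs{c}$. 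This phase-alignment step is precisely where the complex case departs from the real analysis of~\cite{wen2018sharp}.

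Once the inner product is real, the polarization identity gives $4\abs{c}=\norm{\dualproj{S_1}\bvec{\Phi}(\bvec{u}+\bvec{v})}^2-\norm{\dualproj{S_1}\bvec{\Phi}(\bvec{u}-\bvec{v})}^2$. Because $S_2$ and $S_3$ are disjoint, $\bvec{u}\pm\bvec{v}$ is supported on $S_2\cup S_3$ with $\norm{\bvec{u}\pm\bvec{v}}^2=\norm{\bvec{u}}^2+\norm{\bvec{v}}^2=2$, and by hypothesis $I(\bvec{\Phi}\curly{S_1})\cup(S_2\cup S_3)$ is the support of a PIBS vector with parameters $(b,p,l,L',k,r)$, $0\le k\le K$, $0\le r\le R$; so Lemma~\ref{lem:projected-matrix-pibrip}, applied with $S_1$ in the role of its ``$S_1$'' and $S_2\cup S_3$ in the role of its ``$S_2$'', bounds the first term by $2(1+\delta)$ and the second from below by $2(1-\delta)$, with $\delta=\delta_{b,p,l,L'}(K,R)$. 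Subtracting gives $4\abs{c}\le 4\delta$, hence $\abs{c}\le\delta=\delta\norm{\bvec{u}}\norm{\bvec{v}}$, and rescaling recovers the general statement.

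The step I expect to need the most care is the appeal to Lemma~\ref{lem:projected-matrix-pibrip}: that lemma is phrased for a vector whose support is exactly the index set in question, whereas $\bvec{u}\pm\bvec{v}$ is merely supported \emph{within} $S_2\cup S_3$. This is not a genuine difficulty --- the projected PIBRIP bounds hold for every vector supported on a subset of a PIBS support, which is immediate from the Gram-matrix characterization behind Definition~\ref{def:RIP} and Lemma~\ref{lem:projected-matrix-pibrip}, since passing to a subset only deletes rows and columns of the relevant Gram matrix --- but it must be spelled out. I would also double-check that the order parameters and the hypothesis $\delta_{b,p,l,L'}(K,R)<1$ carry over verbatim, which they do because the PIBS support $S$ and the true-block index set $S_1$ are exactly the data Lemma~\ref{lem:projected-matrix-pibrip} consumes; no new sets or parameters are introduced.
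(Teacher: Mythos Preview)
Your argument is correct. The paper, however, takes a shorter route: since $\bvec{u}$ and $\bvec{v}$ have disjoint supports one has $\inprod{\bvec{u}_{S'}}{\bvec{v}_{S'}}=0$ with $S'=S_2\cup S_3$, so
\[
\abs{\inprod{\dualproj{S_1}\bvec{\Phi u}}{\bvec{\Phi v}}}
=\abs{\bvec{u}_{S'}^H\!\left(\bvec{\Phi}_{S'}^H\dualproj{S_1}\bvec{\Phi}_{S'}-\bvec{I}\right)\bvec{v}_{S'}}
\le \opnorm{\bvec{\Phi}_{S'}^H\dualproj{S_1}\bvec{\Phi}_{S'}-\bvec{I}}{2\to 2}\norm{\bvec{u}}\norm{\bvec{v}},
\]
and then invokes Lemma~\ref{lem:projected-matrix-pibrip} to bound the operator norm by $\delta_{b,p,l,L'}(K,R)$. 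This bypasses both the normalization and the phase-alignment step: the operator-norm inequality holds automatically over $\complex$, so no $\sign{\cdot}$ manipulation is needed. Your polarization route is the other standard proof of the same fact; it trades the single operator-norm estimate for two applications of the two-sided RIP bounds, at the cost of handling the complex phase explicitly. Both approaches ultimately rest on the same projected-Gram-matrix control from Lemma~\ref{lem:projected-matrix-pibrip}, and your remark that the ``$\supp(\bvec{x})=S_2$'' hypothesis there is harmlessly relaxed to ``$\supp(\bvec{x})\subseteq S_2$'' is exactly right and equally needed in the paper's version.
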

\begin{proof}
	The proof is provided in Appendix.~\ref{sec:appendix-proof-projected-matrix-innerproduct-inequality}.
\end{proof}
\begin{lem}
	\label{lem:projected-column-lower-bound-wielandt-PIBS}
	Given that the matrix $\bvec{\Phi}$ has columns with unit norm and that it satisfies $\delta_{b,p,1,L'}(K,1)<1$. Let $S\subset[n]$ be such that $I(\bvec{\Phi}\curly{S})$ is the support of the true clusters of a PIBS vector with parameters $(b,p,l,L',k,r),\;0\le k\le K,\;0\le r\le R$, and let $j$ be an index such that $j\notin I(\bvec{\Phi}\curly{S})$. Then, \begin{align}
	\label{eq:projected-column-lower-bound-wielandt-pibrip}
	\norm{\dualproj{S}\bvec{\phi}_j}\ge \sqrt{1-\delta^2_{b,p,1,L'}(K,1)}.
	\end{align}
\end{lem}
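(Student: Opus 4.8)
The plan is to reduce the statement to the classical Wielandt inequality applied to the Gram matrix of the augmented column set. Write $S'=I(\bvec{\Phi}\curly{S})$ and note first that, because $j\notin S'$, the index set $S'\cup\{j\}$ supports a PIBS vector whose true clusters are those of $\bvec{x}$ and whose single pseudo block is the singleton $\{j\}$ of length $l=1$ (this pseudo block meets no true cluster precisely because $j\notin I(\bvec{\Phi}\curly{S})$); this is the same construction already used at the end of Section~\ref{sec:pseudoblock-interleaved-block-sparse-structure}, with a window replaced by a single column. Hence, by Definition~\ref{def:RIP} (equivalently Lemma~\ref{lem:rip-PIBS-eigenvalue-connection}), with $\delta:=\delta_{b,p,1,L'}(K,1)<1$ the Hermitian matrix $\tilde{\bvec{G}}:=\begin{bmatrix}\bvec{\Phi}\curly{S} & \bvec{\phi}_j\end{bmatrix}^H\begin{bmatrix}\bvec{\Phi}\curly{S} & \bvec{\phi}_j\end{bmatrix}$ has all its eigenvalues in $[1-\delta,1+\delta]$; in particular its principal submatrix $\bvec{G}:=\bvec{\Phi}^H\curly{S}\bvec{\Phi}\curly{S}$ is positive definite, so $\proj{S}$ is well defined, and since $\proj{S}$ is an orthogonal projector, $\norm{\dualproj{S}\bvec{\phi}_j}^2=1-\bvec{\phi}_j^H\proj{S}\bvec{\phi}_j=1-\rho^2$, where $\rho^2:=\bvec{g}^H\bvec{G}^{-1}\bvec{g}$ and $\bvec{g}:=\bvec{\Phi}^H\curly{S}\bvec{\phi}_j$. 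It therefore suffices to prove $\rho^2\le\delta^2$.

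For this I would invoke Wielandt's inequality: for a Hermitian positive definite $\bvec{A}$ with extreme eigenvalues $\lambda_{\max},\lambda_{\min}$ and any $\bvec{a},\bvec{c}$ with $\bvec{a}^H\bvec{c}=0$, one has $\abs{\bvec{a}^H\bvec{A}\bvec{c}}^2\le\bigl(\tfrac{\lambda_{\max}-\lambda_{\min}}{\lambda_{\max}+\lambda_{\min}}\bigr)^2(\bvec{a}^H\bvec{A}\bvec{a})(\bvec{c}^H\bvec{A}\bvec{c})$; since $(x,y)\mapsto (x-y)/(x+y)$ is increasing in $x$ and decreasing in $y$ for positive arguments and the eigenvalues of $\tilde{\bvec{G}}$ lie in $[1-\delta,1+\delta]$, the prefactor is at most $\delta^2$. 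Apply this with $\bvec{A}=\tilde{\bvec{G}}$, with $\bvec{a}$ the coordinate vector selecting the $\bvec{\phi}_j$-column, and with $\bvec{c}=\begin{bmatrix}\bvec{v}^H & 0\end{bmatrix}^H$ where $\bvec{v}=\bvec{G}^{-1}\bvec{g}$; then $\bvec{a}^H\bvec{c}=0$, and a one-line computation using $\bvec{G}\bvec{v}=\bvec{g}$ gives $\bvec{a}^H\tilde{\bvec{G}}\bvec{a}=\norm{\bvec{\phi}_j}^2=1$, $\bvec{c}^H\tilde{\bvec{G}}\bvec{c}=\bvec{v}^H\bvec{G}\bvec{v}=\bvec{v}^H\bvec{g}=\rho^2$ and $\bvec{a}^H\tilde{\bvec{G}}\bvec{c}=\bvec{g}^H\bvec{v}=\rho^2$. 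Wielandt's inequality then reads $(\rho^2)^2\le\delta^2\cdot 1\cdot\rho^2$, so $\rho^2\le\delta^2$ (the degenerate case $\bvec{v}=\bvec{0}$, hence $\rho=0$, being immediate). Substituting into $\norm{\dualproj{S}\bvec{\phi}_j}^2=1-\rho^2$ gives $\norm{\dualproj{S}\bvec{\phi}_j}\ge\sqrt{1-\delta^2_{b,p,1,L'}(K,1)}$.

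The routine parts are the Gram-matrix identities above and the elementary monotonicity of $(x-y)/(x+y)$. The step that needs genuine care is the first one: verifying that appending the singleton $\{j\}$ as a pseudo block keeps the configuration a valid PIBS support of order $(K,1)$ — that the true clusters of $\bvec{x}$ are still separated by at least $L'$ zeros, that $l=1\le L'$, and that $j\notin I(\bvec{\Phi}\curly{S})$ rules out overlap with a true cluster — so that $\delta_{b,p,1,L'}(K,1)$ is legitimately a restricted isometry constant for $\tilde{\bvec{G}}$. Once that is secured, Wielandt's inequality (a standard fact about positive definite matrices) does the rest, and it is precisely its $\delta^2$ prefactor — rather than the weaker $\delta$ one would obtain from a direct RIP estimate applied to $\begin{bmatrix}\bvec{\Phi}\curly{S} & \bvec{\phi}_j\end{bmatrix}\begin{bmatrix}\bvec{v}^H & -1\end{bmatrix}^H$ — that yields the sharp constant $\sqrt{1-\delta^2_{b,p,1,L'}(K,1)}$.
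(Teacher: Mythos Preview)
Your proof is correct and follows essentially the same route as the paper: both apply Wielandt's inequality to the Gram matrix $\tilde{\bvec{G}}=[\bvec{\Phi}\curly{S}\ \bvec{\phi}_j]^H[\bvec{\Phi}\curly{S}\ \bvec{\phi}_j]$ with the orthogonal pair consisting of the coordinate vector of $\bvec{\phi}_j$ and the vector $[\bvec{v}^H\ 0]^H$ with $\bvec{v}=\bvec{G}^{-1}\bvec{g}$ (the paper writes this as $\bvec{z}_{S'}$, the coefficient vector of $\proj{S}\bvec{\phi}_j$ in the columns of $\bvec{\Phi}\curly{S}$), obtaining $\norm{\proj{S}\bvec{\phi}_j}^2\le\delta\norm{\proj{S}\bvec{\phi}_j}$ and hence the bound. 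Your presentation is slightly more explicit about the Gram-matrix identities and the monotonicity of the Wielandt prefactor, but the argument is the same.
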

\begin{proof}
	The proof is provided in Appendix.~\ref{sec:appendix-proof-projected-column-lower-bound-wielandt-pibs}.
\end{proof}
\subsection{Recovery guarantee}
Equipped with the definition of PIBS structure in Sec.~\ref{sec:pseudoblock-interleaved-block-sparse-structure}, it can be easily seen that the unknown vector $\bvec{x}$, described in Sec.~\ref{sec:proposed-algorithm}, has the structure of a PIBS vector of with parameters $(b,p,L,L',K,0)$. We now state a sufficient condition that the measurement matrix $\bvec{\Phi}$ as well as the unknown vector $\bvec{x}$ need to satisfy so that the TSGBOMP algorithm can exactly recover the support of $\bvec{x}$ within $K$ iterations.
\begin{thm}
	\label{thm:recovery-guarantee-tsgbomp}
	\textbf{Real case:} If $\bvec{x}\in \real^n,\ \bvec{\Phi}\in \real^{m\times n}$ and if the measurement matrix $\bvec{\Phi}$ as well as the unknown vector $\bvec{x}$, with parameters $(b,p,L,L',K,0)$ satisfy the conditions: \begin{align}
	\label{eq:tsgbomp-delta-recovery-condition}
	\delta & < \frac{1}{\sqrt{2K+1}},\\
	x_{\min} & > \frac{x_{\max}\delta\sqrt{B'b}}{(1+\delta)}\left[\frac{K}{4B'}\left(1+\delta\right) + \sqrt{K + 1}+1\right]\nonumber\\
	\label{eq:tsgbomp-noisy-recovery-condition}
	\ & + \frac{\sqrt{2(1+B')(1+\delta)}\epsilon}{(1-\delta\sqrt{2K+1})},
	\end{align}
	where $B'=pb-b+1,\ \delta:=\delta_{b,p,L',L'}(K-1,2)$, $x_{\min}=\min\{\abs{x_j}:j\in\supp(\bvec{x})\},\ x_{\max}=\max\{\abs{x_j}:j\in\supp(\bvec{x})\}$, the TSGBOMP algorithm can recover the support of $\bvec{x}$ exactly within $K$ iterations.
	
	\textbf{Complex case:} If $\bvec{x}\in \complex^n,\ \bvec{\Phi}\in \complex^{m\times n}$ and if the measurement matrix $\bvec{\Phi}$ as well as the unknown vector $\bvec{x}$, with parameters $(b,p,L,L',K,0)$ satisfy the conditions: \begin{align}
	\label{eq:tsgbomp-delta-recovery-condition-complex}
	\delta & < \frac{1}{\sqrt{2K+1}},
	\end{align}
	\begin{align}
	\lefteqn{x_{\min}} & & \nonumber\\
	\ & > \frac{\left[\delta\sqrt{Kb}+\frac{\delta(1-\delta^2)\sqrt{B'b}}{(1+\delta)}\left[\frac{K}{4B'}\left(1+\delta\right) + \sqrt{K + 1}+1\right]\right]x_{\max}}{\sqrt{(1-\delta^2)^2+\delta^2}}\nonumber\\
	\label{eq:tsgbomp-noisy-recovery-condition-complex}
	\ & + \frac{\sqrt{2(1+B')(1+\delta)}\epsilon}{(1-\delta\sqrt{2K+1})},
	\end{align}
	where $\delta:=\delta_{b,p,L',L'}(K-1,2)$, $x_{\min}=\min\{\abs{x_j}:j\in\supp(\bvec{x})\},\ x_{\max}=\max\{\abs{x_j}:j\in\supp(\bvec{x})\}$, the TSGBOMP algorithm can recover the support of $\bvec{x}$ exactly within $K$ iterations.
\end{thm}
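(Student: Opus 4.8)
The plan is to argue by induction on the iteration index $k$: assuming that each of the first $k-1$ iterations has captured (all blocks of) at least one previously unidentified true cluster of $\bvec{x}$ --- so that $T^{k-1}\cap T$ is a union of whole true clusters and, as long as $S^{k-1}=T\setminus T^{k-1}\ne\emptyset$, the residual \eqref{eq:rk-expression} is large enough that the algorithm has not stopped --- I would show that iteration $k$ captures a further true cluster. Each inductive step splits into (i) the \emph{window step}: showing $w^k\in O_{S^{k-1}}$, i.e.\ the chosen window meets a true cluster formed from blocks in $S^{k-1}$, through criterion \eqref{eq:step-k-window-selection-alternate-criterion}; and (ii) the \emph{cluster step}: showing that the pointwise scan over $b^k$ returns an $h^k$ whose column set $I(\bvec{\Phi}\curly{h^k})$ contains the whole true cluster that meets $w^k$. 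Combining the two resulting sufficient conditions, taking the most stringent over $k\le K$, and collapsing every PIBRIC order that appears into the single constant $\delta=\delta_{b,p,L',L'}(K-1,2)$ via the monotonicity Lemmas~\ref{lem:PIBRIP-K-R-monotonicity}, \ref{lem:PIBRIP-L-monotonicity}, and~\ref{lem:PIBRIP-B-L-monotonicity}, would produce \eqref{eq:tsgbomp-delta-recovery-condition}--\eqref{eq:tsgbomp-noisy-recovery-condition}.

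For step (i) I would continue directly from the analysis already laid out. The left side of \eqref{eq:step-k-window-selection-alternate-criterion} is bounded below in \eqref{eq:step-k-window-selection-alternate-criterion-lhs-lower-bound-part1} by $\norm{\dualproj{T^{k-1}}\bvec{\Phi}\curly{S^{k-1}}\bvec{x}\curly{S^{k-1}}}^2/(\sqrt{d_k}\norm{\bvec{x}\curly{S^{k-1}}})$; since $I(\bvec{\Phi}\curly{S^{k-1}})$ together with any one window forms a PIBS support of type $(b,p,L,L',|S^{k-1}|,1)$, Lemma~\ref{lem:projected-matrix-pibrip} lower-bounds the numerator by $(1-\delta)\norm{\bvec{x}\curly{S^{k-1}}}^2$, while $d_k=\abs{O_{S^{k-1}}}$ is controlled by how many windows a union of separated clusters can touch. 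For the subtracted term and the noise terms I would finish the $\mu,\bvec{h},\bvec{u},\bvec{w},\bvec{B}$ computation: since $\bvec{u}+\bvec{w}$ and $\mu^2\bvec{u}-\bvec{w}$ are both supported on the PIBS support associated with $\bvec{B}$, I bound $\norm{\bvec{B}(\bvec{u}+\bvec{w})}^2$ from below and $\norm{\bvec{B}(\mu^2\bvec{u}-\bvec{w})}^2$ from above by Lemmas~\ref{lem:projected-matrix-pibrip} and~\ref{lem:projected-matrix-innerproduct-inequality}, substitute into \eqref{eq:step-k-window-selection-intermediate-identity1}, and optimize over $\theta$, obtaining a bound on $\norm{\bvec{\Phi}^H[j]\bvec{q}^{k-1}}$ that is uniform in $j\in O^C_{S^{k-1}}$; this controls the $(2,\infty)$-norm over $O^C_{S^{k-1}}$. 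The noise pieces $\opnorm{\bvec{\Phi}^H[O_{S^{k-1}}]\dualproj{T^{k-1}}\bvec{e}}{2,\infty}$ and its $O^C$ counterpart are handled via $\norm{\bvec{e}}\le\epsilon$ together with $\opnorm{\bvec{\Phi}^H_S}{2\to 2}\le\sqrt{1+\delta}$ on the relevant PIBS supports (Lemma~\ref{lem:rip-PIBS-eigenvalue-connection}). Assembling these into \eqref{eq:step-k-window-selection-alternate-criterion} gives the window-step condition.

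Step (ii) is an analogous but combinatorially heavier correlation comparison. Given $w^k\in O_{S^{k-1}}$, let $i^\star\in b^k$ be the alignment for which $h^{i^\star}$ contains exactly the true cluster meeting $w^k$; I would show $\norm{\bvec{\Phi}^H\curly{h^{i^\star}}\bvec{r}^{k-1}}$ strictly exceeds $\norm{\bvec{\Phi}^H\curly{S}\bvec{r}^{k-1}}$ for every $i\in b^k$ whose block set $S$ misses all true blocks, which (using the $L'$-separation) forces the maximizer $h^k$ to contain that true cluster. The two competing clusters act as two pseudo blocks of size at most $B\le L\le L'$, so the governing PIBRIC has order $(K-1,2)$, and the number of size-$b$ positions inside a size-$B$ cluster, $B'=pb-b+1$, is precisely what enters in the passage between per-block and per-cluster ($\ell_2/\ell_1$) correlations, explaining the $\sqrt{B'b}$ and $K/(4B')$ factors in \eqref{eq:tsgbomp-noisy-recovery-condition}. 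After plugging \eqref{eq:rk-expression} for $\bvec{r}^{k-1}$, bounding the ``signal'' contribution with Lemma~\ref{lem:projected-matrix-pibrip}, the cross terms with Lemma~\ref{lem:projected-matrix-innerproduct-inequality}, any projected-column norm from below with Lemma~\ref{lem:projected-column-lower-bound-wielandt-PIBS}, and the noise with $\norm{\bvec{e}}\le\epsilon$, one gets the cluster-step condition; taking the maximum with the window-step condition and simplifying yields \eqref{eq:tsgbomp-delta-recovery-condition}--\eqref{eq:tsgbomp-noisy-recovery-condition}. Finally, once $T\subseteq T^k$ --- which happens after at most $K$ iterations, since $\bvec{x}$ has at most $K$ clusters and each iteration captures a fresh one --- $\bvec{\Phi x}\in\spn{\bvec{\Phi}\curly{T^k}}$ and $\bvec{\Phi}\curly{T^k}$ has full column rank (as $\delta<1$), so the closing least-squares step returns $\bvec{x}$ exactly in the noiseless case and within the stated error otherwise, closing the induction.

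The complex case follows the same route, but the inequality chain used to lower-bound the correlations must be redone: several Cauchy--Schwarz steps that are equalities over $\real$ (where the block ``sign'' $\sign{\cdot}$ is a real unit vector) only hold with slack over $\complex$, so the lower bound on $\norm{\bvec{\Phi}^H\curly{h^{i^\star}}\bvec{r}^{k-1}}$ acquires an additional defect term; carrying the unit-modulus phase factor explicitly through the computation and re-optimizing is what turns the real-case denominator $1+\delta$ into $\sqrt{(1-\delta^2)^2+\delta^2}$ and adds the $\delta\sqrt{Kb}\,x_{\max}$ contribution in \eqref{eq:tsgbomp-noisy-recovery-condition-complex}. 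I expect step (ii) --- proving that a ``correct'' cluster beats \emph{every} misaligned competitor and that beating them implies genuine progress despite the true-cluster size $j_sb$ ranging over $[b,pb]$ while the scan uses the fixed size $B=pb$, all while tracking $B'$ and the noise amplification --- to be the main obstacle, alongside this complex-case re-derivation; by contrast, folding the many PIBRIC orders into $\delta_{b,p,L',L'}(K-1,2)$ via the monotonicity lemmas is routine once those lemmas are available.
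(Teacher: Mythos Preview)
Your induction scheme, the window-step analysis via the $\mu,\bvec{h},\bvec{u},\bvec{w},\bvec{B}$ identity, the noise bookkeeping, and the collapsing of PIBRIC orders through Lemmas~\ref{lem:PIBRIP-K-R-monotonicity}--\ref{lem:PIBRIP-B-L-monotonicity} all match the paper's proof closely. The one place where your plan diverges from what actually works is the cluster step~(ii).

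You propose to show that $\norm{\bvec{\Phi}^H\curly{h^{i^\star}}\bvec{r}^{k-1}}$ beats $\norm{\bvec{\Phi}^H\curly{S}\bvec{r}^{k-1}}$ for every competitor $S$ that \emph{misses all true blocks}. That is not enough: the dangerous competitors are those $S'\subset\beta^k$ of size $B$ that \emph{partially} overlap the true cluster $C^k$ without covering it. Such an $S'$ inherits some of the large ``signal'' correlations from $C^k$, so a direct cluster-vs-cluster comparison at the $\ell_2$ level does not separate it from $h^{i^\star}$. The paper handles this by first reducing the cluster comparison~\eqref{eq:step-k-block-selection-criterion} to a \emph{column-level} criterion,
\[
\min_{j\in C^k}\abs{\inprod{\bvec{\phi}_j}{\bvec{r}^{k-1}}}\;\ge\;\sqrt{B'}\,\max_{l\in\beta^k\setminus C^k}\abs{\inprod{\bvec{\phi}_l}{\bvec{r}^{k-1}}},
\]
via a combinatorial count on $\abs{S'\cap C^k}$ and $\abs{S'\cap S_0\setminus C^k}$; the ratio $(pb-t'-s')/(tb-s')$ is maximized at $B'=pb-b+1$, which is where $\sqrt{B'}$ actually enters (not, as you suggest, merely as the number of size-$b$ positions in a size-$B$ window). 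Only after this reduction does the analysis become a single-column problem amenable to Lemmas~\ref{lem:projected-matrix-pibrip}--\ref{lem:projected-column-lower-bound-wielandt-PIBS}.

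A second, related omission: the lower bound on the ``good'' correlation $\abs{\inprod{\bvec{\phi}_{j_1}}{\bvec{q}^{k-1}}}$ is obtained not by a direct PIBRIP bound but by writing it as $\Re(s_{j_1}^*x_{j_1})$ times a factor, where $s_{j_1}=\sign{\inprod{\bvec{\phi}_{j_1}}{\bvec{q}^{k-1}}}$, and then controlling $\Re(s_{j_1}^*x_{j_1})$ through the auxiliary sign condition $\abs{z_{j_1}}>\abs{w_{j_1}}$ (this is what produces the extra requirement~\eqref{eq:step-k-min-to-max-ratio-sign-condition} and, via Lemma~\ref{lem:lower-bound-real-part}, is exactly the mechanism that distinguishes the real and complex cases). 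Your high-level remark that ``Cauchy--Schwarz steps that are equalities over $\real$ only hold with slack over $\complex$'' points in the right direction but misses that the crux is this real-part-of-a-phase computation, not a Cauchy--Schwarz slack.
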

\begin{proof}
	The derivations are detailed in Sections~\ref{sec:condition-success-window-selection},~\ref{sec:back-to-finding-condition-correct-window-step-k},~\ref{sec:condition-correct-block-selection-step-k}, and~\ref{sec:condition-overall-success}.
\end{proof}
%For the uniqueness of the recovered signal, we require that, in the noise-free case, if two vectors $\bvec{x}_1,\bvec{x}_2$, both with PIBS structure with parameters $(b,p,L,L,K,0)$ are solutions to the equation $\bvec{y}=\bvec{\Phi x}$, then we must have $\bvec{\Phi}(\bvec{x}_1-\bvec{x}_2)=\bvec{0}\implies \bvec{x}_1=\bvec{x}_2$.  Now, note that the the structure of the vector $\bvec{x}_1-\bvec{x}_2$ consists of $k$ true clusters with $1\le k\le 2K$, each with size $b'$, with $b\le b'\le 2bp$, and with the positions of any two consecutive true clusters separated by at least $L-bp$ indices. If the collection of columns for the support corresponding to any PIBS vector of order $(2K,0)$ with parameters $(b,2p,L-bp,L-bp)$ is linearly independent, one must have $\bvec{x}_1=\bvec{x}_2$. This condition is equivalent to $\delta_{b,2p,L-bp,L-bp}(2K,0)<1$.
%
\subsection{Back to finding a condition for identifying a correct window at step $k (k\ge 1)$}
\label{sec:back-to-finding-condition-correct-window-step-k}
We now return to complete the analysis of Sec.~\ref{sec:condition-success-window-selection} and proceed toward finding the sufficient condition for success of TSGBOMP at any step as provided by Theorem~\ref{thm:recovery-guarantee-tsgbomp}. Throughout the analysis we maintain that all the PIBS vectors considered will have length $n$ where $n$ is chosen large enough to accommodate all possible configurations of the different PIBS vectors emerging in the analysis \footnote{It can be shown that by taking $n\ge Kb+2L'+K+1$ this property is satisfied.}. We will now find an upper bound of the left hand side (LHS) of Eq.~\eqref{eq:step-k-window-selection-intermediate-identity1} by finding the PIBRIC of the associated PIBRIP of $\bvec{B}=\dualproj{T^{k-1}}[\bvec{\Phi}\curly{S^{k-1}}\ \bvec{\Phi}[j]]$.

First, observe that it can be now easily verified that both the vectors $\bvec{u}+\bvec{w}$ as well as $\mu^2\bvec{u}-\bvec{w}$ have the common support (for $\mu\ne 0$) $S'=I(\bvec{\Phi}\curly{S^{k-1}})\cup I(\bvec{\Phi}[j])$ which corresponds to the support of $1$ pseudo block of length $L$ and $c_k(=\abs{S^{k-1}})$ true clusters of a PIBS vector with parameters $(b,p,L,L',c_k,1)$. Now, let us denote $\widetilde{S}=I(\bvec{\Phi}\curly{T^{k-1}})\cup S'$. Note that as $j\in O^C[S^{k-1}]$ the window $j$ does not have any overlap with $I(\bvec{\Phi}\curly{S^{k-1}})$, but might have an overlap with $I(\bvec{\Phi}\curly{T^{k-1}})$. Consider first the case that the window $j$ does not have any overlap with $I(\bvec{\Phi}\curly{T^{k-1}})$. Then $\widetilde{S}$ corresponds to the support of a PIBS vector with parameters $(b,p,L,L',K,1)$. Consequently, according to Lemma~\ref{lem:projected-matrix-pibrip}, the matrix $\bvec{B} =\dualproj{T^{k-1}}\bvec{\Phi}_{S'}$ satisfies PIBRIP with PIBRIC given by $\delta_{b,p,L,L'}(K,1)$, which is upper bounded by $\delta_{b,p,L',L'}(K,1)$ by Lemma~\ref{lem:PIBRIP-L-monotonicity}. On the other hand, if there is a nonempty overlap of the window $j$ with $I(\bvec{\Phi}\curly{T^{k-1}})$, there can be overlap with at most one of the true clusters in $I(\bvec{\Phi}\curly{T^{k-1}})$. Call that true cluster (i.e. the set of its indices) $\mathcal{C}'$. Now, there are two cases to consider. In one case if $\mathcal{C}'$ is a proper subset of the window $j$, the set $\widetilde{S}$ corresponds to the support of a PIBS vector with parameters $(b,p,L,L',K-k',1)$, where $k'$ is the number of true blocks in the cluster $\mathcal{C}'$. In this case, by Lemma~\ref{lem:projected-matrix-pibrip} the matrix $\bvec{B}$ satisfies PIBRIP with PIBRIC $\delta_{b,p,L,L'}(K-k',1)$, which is upper bounded by $\delta_{b,p,L',L'}(K,1)$ by Lemmas~\ref{lem:PIBRIP-K-R-monotonicity} and~\ref{lem:PIBRIP-L-monotonicity}. On the other hand, if $\mathcal{C}'$ has only partial overlap with the window $j$, with say $l'$ being the size of the overlap (clearly, $1\le l'< pb\le L$), the set $\widetilde{S}$ corresponds to the support of a PIBS vector with parameters $(b,p,L-l',L',K,1)$. In this case, by Lemma~\ref{lem:projected-matrix-pibrip} the matrix $\bvec{B}$ satisfies PIBRIP with PIBRIC given by $\delta_{b,p,L-l',L'}(K,1)$, which in turn is upper bounded by $\delta_{b,p,L',L'}(K,1)$ by Lemma~\ref{lem:PIBRIP-L-monotonicity}. Therefore, using the expressions of $\bvec{u},\bvec{w}$ from Eqs.~\eqref{eq:u-definition} and~\eqref{eq:w-definition} and the fact that $\bvec{u}$ and $\bvec{w}$ are orthogonal, we obtain, \begin{align}
\lefteqn{\norm{\bvec{B}(\bvec{u}+\bvec{w})}^2 - \norm{\bvec{B}(\mu^2\bvec{u}-\bvec{w})}^2} & & \nonumber\\
\ & \ge (1-\delta_{b,p,L',L'}(K,1))\norm{\bvec{u}+\bvec{w}}^2\nonumber\\
\ & -(1+\delta_{b,p,L',L'}(K,1))\norm{\mu^2\bvec{u}-\bvec{w}}^2\nonumber\\
\ & = \left[(1-\delta_{b,p,L',L'}(K,1))-\mu^4(1+\delta_{b,p,L',L'}(K,1))\right]\norm{\bvec{u}}^2\nonumber\\
\ & -2\delta_{b,p,L',L'}(K,1) \norm{\bvec{w}}^2,\nonumber\\
\ & = (1-\mu^4)\left[1-\frac{1+\mu^4}{1-\mu^4}\delta_{b,p,L',L'}(K,1)\right]\norm{\bvec{x}\curly{S^{k-1}}}^2\nonumber\\
\ & -2\mu^2\delta_{b,p,L',L'}(K,1)\norm{\bvec{x}\curly{S^{k-1}}}^2\nonumber\\
\ & = (1-\mu^4)\left[1 - \frac{1+\mu^2}{1-\mu^2}\delta_{b,p,L',L'}(K,1)\right]\norm{\bvec{x}\curly{S^{k-1}}}^2.
\end{align}
Now, using the expression of $\mu$ from Eq.~\eqref{eq:mu-definition}, we obtain, \begin{align}
\frac{1+\mu^2}{1-\mu^2} & = \frac{\theta+\left(\sqrt{\theta+1}-1\right)^2}{\theta-\left(\sqrt{\theta+1}-1\right)^2}\nonumber\\
\ & =\frac{2\theta+2-2\sqrt{\theta+1}}{2\sqrt{\theta+1}-2}=\sqrt{\theta+1}.
\end{align} Therefore, one can find a lower bound of the RHS of Eq.~\eqref{eq:step-k-window-selection-intermediate-identity1} as the following: \begin{align}
\lefteqn{\norm{\bvec{B}(\bvec{u}+\bvec{w})}^2 - \norm{\bvec{B}(\mu^2\bvec{u}-\bvec{w})}^2} & & \nonumber\\
%\ge & (1-\mu^4)\norm{\bvec{x}\curly{S^{k-1}}}^2\left(1-\sqrt{d_k + 1}\delta_{b,p,L,L}(c_k,1)\right)\nonumber\\
\label{eq:step-k-window-selection-intermediate-inequality1}
\ge & (1-\mu^4)\norm{\bvec{x}\curly{S^{k-1}}}^2\left(1-\sqrt{\theta + 1}\delta_{b,p,L',L'}(K,1)\right).
\end{align}
%where the last step used the monotonicity of the PIBRIP from Lemma~\ref{lem:PIBRIP-K-R-monotonicity}.\\
Since, it is easy to verify that $\abs{\mu} < 1$, and since the identity~\eqref{eq:step-k-window-selection-intermediate-identity1} and the inequality~\eqref{eq:step-k-window-selection-intermediate-inequality1} hold true for all $j\in O_{S^k}^C$, it follows that \begin{align}
\lefteqn{\norm{\bvec{B u}}^2 -} & &\nonumber\\
\ & \sqrt{\theta}\norm{\bvec{x}\curly{S^{k-1}}}\nonumber\\
\ & \cdot\opnorm{\bvec{\Phi}^H[O_{S^{k-1}}^C]\dualproj{T^{k-1}}\bvec{\Phi}\curly{S^{k-1}}\bvec{x}\curly{S^{k-1}}}{2,\infty}\nonumber\\
\label{eq:step-k-window-selection-alternate-criterion-rhs-upper-bound-part2}
\ge & \norm{\bvec{x}\curly{S^{k-1}}}^2\left(1-\sqrt{\theta + 1}\delta_{b,p,L',L'}(K,1)\right).
\end{align}
Therefore, using~\eqref{eq:step-k-window-selection-alternate-criterion-lhs-lower-bound-part1},~\eqref{eq:step-k-window-selection-alternate-criterion-rhs-upper-bound-part2}, and $\bvec{B u} = \dualproj{T^{k-1}}\bvec{\Phi}\curly{S^{k-1}}\bvec{x}\curly{S^{k-1}}$, it follows that \begin{align}
\lefteqn{\opnorm{\bvec{\Phi}^H[O_{S^{k-1}}]\dualproj{T^{k-1}}\bvec{\Phi}\curly{T\setminus T^{k-1}}\bvec{x}\curly{S^{k-1}}}{2,\infty}} & &\nonumber\\
\ & - \sqrt{\frac{\theta}{d_k}}\opnorm{\bvec{\Phi}^H[O_{S^{k-1}}^C]\dualproj{T^{k-1}}\bvec{\Phi}\curly{S^{k-1}}\bvec{x}\curly{S^{k-1}}}{2,\infty} & &\nonumber\\
\label{eq:step-k-window-selection-alternate-criterion-rhs-upper-bound-noiseless-prelim}
\ge & \frac{\left(1-\sqrt{\theta + 1}\delta_{b,p,L',L'}(K,1)\right)\norm{\bvec{x}\curly{S^{k-1}}}}{\sqrt{d_k}}.
\end{align}
Clearly, putting $\theta=d_k$ we recover the LHS of Eq.~\eqref{eq:step-k-window-selection-alternate-criterion} from the LHS of Eq.~\eqref{eq:step-k-window-selection-alternate-criterion-rhs-upper-bound-noiseless-prelim}. Consequently, we obtain, \begin{align}
\lefteqn{\opnorm{\bvec{\Phi}^H[O_{S^{k-1}}]\dualproj{T^{k-1}}\bvec{\Phi}\curly{T\setminus T^{k-1}}\bvec{x}\curly{S^{k-1}}}{2,\infty}} & &\nonumber\\
\ & - \opnorm{\bvec{\Phi}^H[O_{S^{k-1}}^C]\dualproj{T^{k-1}}\bvec{\Phi}\curly{S^{k-1}}\bvec{x}\curly{S^{k-1}}}{2,\infty} & &\nonumber\\
\label{eq:step-k-window-selection-alternate-criterion-rhs-upper-bound-noiseless}
\ge & \frac{\left(1-\sqrt{d_k + 1}\delta_{b,p,L',L'}(K,1)\right)\norm{\bvec{x}\curly{S^{k-1}}}}{\sqrt{d_k}}.
\end{align} 
To find an upper bound on the RHS of inequality~\eqref{eq:step-k-window-selection-alternate-criterion}, we follow the derivation of the inequality~$(4.22)$ in~\cite{wen2018sharp}. First we note that there exists windows indexed by $i_0\in O_{S^{k-1}}$, and $j_0\in O^C_{S^{k-1}}$, such that $\opnorm{\bvec{\Phi}^H[O_{S^{k-1}}]\dualproj{T^{k-1}}\bvec{e}}{2,\infty}=\norm{\bvec{\Phi}^H[i_0]\dualproj{T^{k-1}}\bvec{e}}$, and $\opnorm{\bvec{\Phi}^H[O^C_{S^{k-1}}]\dualproj{T^{k-1}}\bvec{e}}{2,\infty}=\norm{\bvec{\Phi}^H[j_0]\dualproj{T^{k-1}}\bvec{e}}$. Therefore, \begin{align}
\ & \opnorm{\bvec{\Phi}^H[O_{S^{k-1}}]\dualproj{T^{k-1}}\bvec{e}}{2,\infty} + \opnorm{\bvec{\Phi}^H[O_{S^{k-1}}^C]\dualproj{T^{k-1}}\bvec{e}}{2,\infty} \nonumber\\
\ & = \norm{\bvec{\Phi}^H[i_0]\dualproj{T^{k-1}}\bvec{e}} + \norm{\bvec{\Phi}^H[j_0]\dualproj{T^{k-1}}\bvec{e}}\nonumber\\
\ & \le \sqrt{2}\norm{\bvec{\Phi}^H[i_0\cup j_0]\dualproj{T^{k-1}}\bvec{e}}\nonumber\\
\ & \le \sqrt{2}\sigma_{\max}\left(\bvec{\Phi}^H[i_0\cup j_0]\dualproj{T^{k-1}}\right)\norm{\bvec{e}}\nonumber\\
%\ & = \sqrt{2\lambda_{\max}\left(\dualproj{T^{k-1}}\bvec{\Phi}[i_0\cup j_0]\bvec{\Phi}^H[i_0\cup j_0]\dualproj{T^{k-1}}\right)}\norm{\bvec{e}}\nonumber\\
\ & = \sqrt{2\lambda_{\max}\left(\bvec{\Phi}^H[i_0\cup j_0]\dualproj{T^{k-1}}\bvec{\Phi}[i_0\cup j_0]\right)}\norm{\bvec{e}},
\end{align}
where in the last three steps, for any matrix $\bvec{A}$, $\sigma_{\max}(\bvec{A})$ denotes the maximum singular value of $\bvec{A}$ and for any Hermitian matrix $\bvec{B}$, $\lambda_{\max}(\bvec{B})$ denotes the largest eigenvalue of $\bvec{B}$.
%Furthermore, we have used the identity $\sigma_{\max}(\bvec{A})=\sqrt{\lambda_{\max}(\bvec{A}^H\bvec{A})}$ and the last step in the above inequality uses the fact that for any matrix $\bvec{A}$, the maximum eigenvalues of $\bvec{A}^H\bvec{A}$ and $\bvec{A}\bvec{A}^H $ are the same.

Now, observe that although the window $i_0$ cannot have an overlap with the set $I(\bvec{\Phi}\curly{T^{k-1}})$ , there might be overlap of the later with window $j_0$. If there is no overlap, then the set $S''=I(\bvec{\Phi}\curly{T^{k-1}})\cup I(\bvec{\Phi}[i_0\cup j_0])$ corresponds to the support of a PIBS vector with parameters $(b,p,L,L',\abs{T^{k-1}},2)$. If there is overlap, let $\mathcal{C}''$ be the true cluster in $I(\bvec{\Phi}\curly{T^{k-1}})$ which has a nonempty overlap with the window $j_0$. If $\mathcal{C}''$ is a subset of window $j_0$, the set $S''$ corresponds to the support of a PIBS vector with parameters $(b,p,L,L',\abs{T^{k-1}}-k'',2)$, where $k''$ is the number of true blocks in the cluster $\mathcal{C}''$ (note that $1\le k''\le \min\{p,\abs{T^{k-1}}\}$). On the other hand, if 
$\mathcal{C}''$ has only partial overlap with window $j_0$, then, assuming that the length of overlap between $\mathcal{C}''$ and the window $j_0$ is $l''(1\le l''\le k''b-1)$, the set $S''$ can be covered by another set $S'''$ which consists of the unions of the windows $i_0,j_0$, the set $I(\bvec{\Phi}\curly{T^{k-1}}) \setminus \mathcal{C}''$ and another true cluster of size $k''b$ obtained
by prefixing or suffixing (as the case may be) $l''$ indices to the non-overlapping side of $\mathcal{C}''\setminus [j_0]$ [$[j_0]$ indicates the set of indices covered by the window $j_0$].
%In that case, note that the set $S''$ consists of the window $i_0$ which can correspond to a pseudo block of length $L$, $\abs{T^{k-1}}-k''$ true blocks which correspond to a few true clusters, and another set consisting of an union between $\mathcal{C}''$ and the window $j_0$. Note that in this case, the cluster $\mathcal{C}''$ must be situated at the junction of two neighboring windows, one of which is $j_0$ and other one must be a window other than $i_0$. This means, assuming that the length of overlap between $\mathcal{C}''$ and window $j_0$ is $l''(1\le l''\le k''b-1)$, there are at least $L-(k''b-l'')$ empty spaces at the side of $\mathcal{C}''$ opposite to the one that overlaps with window $j_0$. Note that $L\ge pb\ge k''b\implies L-(k''b-l'')\ge l''$. Therefore, one can cover the set $S''$ by another set $S'''$ which consists of the unions of the windows $i_0,j_0$ and the set $I(\bvec{\Phi}\curly{S^{k-1}})$ along with a set of $l''$ indices appended at the non-overlapping side of $\mathcal{C}''$ creating a true cluster. 
Clearly, The set $S'''$ corresponds to the support of a PIBS vector with parameters $(b,p,L,L',\abs{T^{k-1}},2)$.

Therefore, usng the Lemmas~\ref{lem:projected-matrix-pibrip},~\ref{lem:PIBRIP-K-R-monotonicity} as well as~\ref{lem:PIBRIP-L-monotonicity}, it can be seen that the matrix $\dualproj{T^{k-1}}\bvec{\Phi}[i_0\cup j_0]$ satisfies PIBRIP with PIBRIC given by $\delta_{b,p,L',L'}(\abs{T^{k-1}},2)$. Consequently, $\lambda_{\max}\left(\dualproj{T^{k-1}}\bvec{\Phi}[i_0\cup j_0]\right)\le (1+\delta_{b,p,L',L'}(\abs{T^{k-1}},2))$. Therefore,
\begin{align}
\ & \opnorm{\bvec{\Phi}^H[O_{S^{k-1}}]\dualproj{T^{k-1}}\bvec{e}}{2,\infty} + \opnorm{\bvec{\Phi}^H[O^C_{S^{k-1}}]\dualproj{T^{k-1}}\bvec{e}}{2,\infty}\nonumber\\
\label{eq:step-k-window-selection-alternate-criterion-rhs-noise-part-upper-bound}
\ & \le \sqrt{2(1+\delta_{b,p,L',L'}(\abs{T^{k-1}},2))}\epsilon,
\end{align}
where we have assumed that $\norm{\bvec{e}}\le \epsilon$.
Thus, using inequalities~\eqref{eq:step-k-window-selection-alternate-criterion},~\eqref{eq:step-k-window-selection-alternate-criterion-rhs-upper-bound-noiseless} and~\eqref{eq:step-k-window-selection-alternate-criterion-rhs-noise-part-upper-bound} the following sufficient condition is derived to ensure a correct window selection at step $k (k\ge 1)$:
\begin{align}
\lefteqn{\frac{\left(1-\sqrt{d_k + 1}\delta_{b,p,L',L'}(K,1)\right)\norm{\bvec{x}\curly{S^{k-1}}}}{\sqrt{d_k}}} & &\nonumber\\
\label{eq:step-k-sufficient-condition-correct-window-selection}
> & \sqrt{2(1+\delta_{b,p,L',L'}(\abs{T^{k-1}},2))}\epsilon.
\end{align}
\subsection{Condition for true cluster selection at step $k (k\ge 1)$}
\label{sec:condition-correct-block-selection-step-k}
We assume that a correct window, indexed by $w^k$ (that is the $w^k$th window), has already been selected at step $k$, i.e. set of the columns of $\bvec{\Phi}[w^k]$ has a nonempty intersection with the set of columns of $\bvec{\Phi}\curly{S^{k-1}}$. We now find a condition to ensure that a true cluster from $I(\bvec{\Phi}\curly{S^{k-1}})$, having a nonempty overlap with the window indexed $w^k$, is selected at step $k$.
%According to our selection of the parameters $b,p,L $, we can have at most one true cluster of maximum size $B\ (B=pb)$, which has a nonempty overlap with the identified window, and that cluster must have a first index lying in the set consisting of the indices $L(w^k - 1) + 1 - B ,\cdots,\ Lw^k$.

Let the set of indices corresponding to the true cluster having a non-empty overlap with window $w^k$ be denoted by $C^k$ and let $t\ (1\le t \le p)$ be the number of true blocks in $C^k$. Clearly, $C^k\subset \beta^k$, where $\beta^k$ is the set of indices $L(w^k-1)+1-(B-1),L(w^k-1)-(B-1)+2,\cdots,Lw^k+B - 1$. Note that the assumption that any two consecutive true clusters are separated by at least $L'=L+2bp-b$ zeros, ensures that the set $\beta^k$ does not have \emph{any} overlap with with a true cluster from $I(\bvec{\Phi}\curly{S^{k-1}})$ other than $C^k$, that is, it ensures that $I(\bvec{\Phi}\curly{S^{k-1}})\cap \beta^k\setminus C^k=\emptyset$.

Let $\mathcal{W}$ be the collection of all sets $S$ of size $B$ such that $S\subset \beta^k$ and such that $S$ covers $C^k$, i.e., $C_k\subseteq S$. On the other hand, let $\mathcal{W}'$ be the collection of all sets $S'$ of size $B$ such that $S'\subset \beta^k$ and that $S'$ does not cover $C^k$, i.e., $C^k\not\subseteq S'$. Then, a set $S\in \mathcal{W}$ is selected at step $k(\ge 1)$ if and only if \begin{align}
\label{eq:step-k-block-selection-criterion}
\max_{S\in \mathcal{W}} \norm{\bvec{\Phi}^H_S\bvec{r}^{k-1}} > \max_{S'\in \mathcal{W}'} \norm{\bvec{\Phi}^H_{S'}\bvec{r}^{k-1}}.
\end{align}
Now, let $S_0=\argmax_{S\in \mathcal{W}} \norm{\bvec{\Phi}^H_S\bvec{r}^{k-1}}$. Consider any $S'\in \mathcal{W}'$. Let $s'=\abs{S'\cap C^k},\ t'=\abs{S'\cap S_0\setminus C^k}$. Note that \begin{align}
\norm{\bvec{\Phi}^H_{S_0\setminus S'}\bvec{r}^{k-1}} & \ge \norm{\bvec{\Phi}^H_{C_k\setminus S'}\bvec{r}^{k-1}}\nonumber\\
\ & \ge \sqrt{tb-s'}\min_{j\in C^k}\abs{\inprod{\bvec{\phi}_j}{\bvec{r}^{k-1}}},
\end{align}
where we have used $\abs{C_k\setminus S'}=\abs{C_k}-\abs{C_k\cap S'}=tb-s'$. On the other hand, \begin{align}
\norm{\bvec{\Phi}^H_{S'\setminus S_0}\bvec{r}^{k-1}} & \le \sqrt{pb-s'-t'} \max_{l\in \beta^k\setminus C^k}\abs{\inprod{\bvec{\phi}_l}{\bvec{r}^{k-1}}},
\end{align}
where we have used $\abs{S'\setminus S_0}=\abs{S'}-\abs{S'\cap S_0}=\abs{S'}-\abs{S'\cap S_0\setminus C^k}-\abs{S'\cap C^k}=pb-t'-s'$. Therefore, the inequality~\eqref{eq:step-k-block-selection-criterion} is satisfied if \begin{align}
\label{eq:step-k-block-selection-criterion-preilim}
\min_{j\in C^k}\abs{\inprod{\bvec{\phi}_j}{\bvec{r}^{k-1}}} & \ge \sqrt{\frac{pb-t'-s'}{tb-s'}} \max_{l\in \beta^k\setminus C^k}\abs{\inprod{\bvec{\phi}_l}{\bvec{r}^{k-1}}},
\end{align}
where in the above we have used the fact that $0\le s'\le tb-1$ since, by definition, $S'$ cannot fully cover $C^k$.

Now, note that $\frac{pb-t'-s'}{tb-s'}=\frac{(p-t)b-t'}{tb-s'}+1$ is an increasing function of $s'$ since $0\le t'=\abs{S'\cap S_0\setminus C^k}\le\abs{S_0\setminus C^k}=(p-t)b$. Since $s'\le tb-1$, we obtain that $\frac{pb-t'-s'}{tb-s'}\le (p-t)b-t'+1\le pb-b+1$, where we have used the facts $t\ge 1,\ t'\ge 0$. Therefore, the inequality~\eqref{eq:step-k-block-selection-criterion-preilim} is satisfied if
the following is satisfied\footnote[2]{Similar sufficient condition was derived in~\cite{kannu2018spcom} in the context of a coherence based analysis of the sliding-block type algorithm proposed therein.}: \begin{align}
\label{eq:step-k-block-selection-alternative-criterion}
\min_{j\in C^k}\abs{\inprod{\bvec{\phi}_j}{\bvec{r}^{k-1}}} & \ge \sqrt{B'} \max_{l\in \beta^k\setminus C^k}\abs{\inprod{\bvec{\phi}_l}{\bvec{r}^{k-1}}},
\end{align}
where $B'=pb-b+1$.

Fix any $j_1\in C^k,\ j_2\in \beta^k\setminus C^k$. Now, we obtain
\begin{align}
\lefteqn{\abs{\inprod{\bvec{\phi}_{j_1}}{\bvec{r}^{k-1}}} - \sqrt{B'}\abs{\inprod{\bvec{\phi}_{j_2}}{\bvec{r}^{k-1}}}} & &\nonumber\\
\stackrel{(d)}{\ge } & \underbrace{\abs{\inprod{\bvec{\phi}_{j_1}}{\dualproj{T^{k-1}}\bvec{\Phi}\curly{S^{k-1}}\bvec{x}\curly{S^{k-1}}}}}_{F_1}\nonumber\\
\ & - \sqrt{B'}\underbrace{\abs{\inprod{\bvec{\phi}_{j_2}}{\dualproj{T^{k-1}}\bvec{\Phi}\curly{S^{k-1}}\bvec{x}\curly{S^{k-1}}}}}_{F_2}\nonumber\\
\label{eq:step-k-block-selection-alternative-criterion-lhs-preliminary-lower-bound}
\ & - \underbrace{\left(\abs{\inprod{\bvec{\phi}_{j_1}}{\dualproj{T^{k-1}}\bvec{e}}} + \sqrt{B'}\abs{\inprod{\bvec{\phi}_{j_2}}{\dualproj{T^{k-1}}\bvec{e}}}\right)}_{F_3}.
\end{align}
%Now, observe that for any $j\in I(\bvec{\Phi}\curly{i})$, and $W\in \mathcal{C}_{t_r}(b^k_{-i})$,\begin{align}
%\lefteqn{\abs{\inprod{\bvec{\phi}_j}{\bvec{r}^{k-1}}} - \norm{\bvec{\Phi}_W^H\bvec{r}^{k-1}} } & &\nonumber\\
%\stackrel{(d)}{=} & \underbrace{\abs{\inprod{\bvec{\phi}_j}{\dualproj{T^{k-1}}\bvec{\Phi}\curly{S^{k-1}}\bvec{x}\curly{S^{k-1}}}}}_{F_1}\nonumber\\
%\ & - \underbrace{\norm{\bvec{\Phi}_W^H\dualproj{T^{k-1}}\bvec{\Phi}\curly{S^{k-1}}\bvec{x}\curly{S^{k-1}}}}_{F_2}\nonumber\\
%\label{eq:step-k-block-selection-alternative-criterion-lhs-preliminary-lower-bound}
%\ & - \underbrace{\left(\abs{\inprod{\bvec{\phi}_j}{\dualproj{T^{k-1}}\bvec{e}}} + \norm{\bvec{\Phi}_W^H\dualproj{T^{k-1}}\bvec{e}}\right)}_{F_3}
%\end{align}
Here, step $(d)$ uses the expression for $\bvec{r}^{k-1}$ from Eq.~\eqref{eq:rk-expression} and the reverse triangle inequality and triangle inequality, respectively. 
%Now, note that $j_1\in C^k\subseteq S^{k-1}$, so that $j_1\notin T^{k-1}$. Now, if $j_2\in I(\bvec{\Phi}\curly{T^{k-1}})$, $\dualproj{T^{k-1}}\bvec{\phi}_{j_2}=\bvec{0}$. Then, using the expression of $\bvec{r}^{k-1}$ from Eq.~\eqref{eq:rk-expression} it follows that $\inprod{\bvec{\phi}_{j_2}}{\bvec{r}^{k-1}}=0$, so that the LHS of~\eqref{eq:step-k-block-selection-alternative-criterion-lhs-preliminary-lower-bound} is trivially non-negative. Therefore, it is enough to consider $j_2\in \beta^k\setminus (C^k\cup I(\bvec{\Phi}\curly{T^{k-1}}))$.
 We now proceed to find upper bounds of $F_2,F_3$ and a lower bound of $F_1$.
%  when $j_1\in C^k,j_2\in \beta^k\setminus (C^k\cup I(\bvec{\Phi}\curly{T^{k-1}}))$.

First consider $F_2$. To find its upper bound, a procedure exactly similar to the one used via ~\eqref{eq:step-k-window-selection-intermediate-identity1},~\eqref{eq:step-k-window-selection-intermediate-inequality1} and~\eqref{eq:step-k-window-selection-alternate-criterion-rhs-upper-bound-part2} to calculate an upper bound of a similar quantity $\norm{\bvec{\Phi}^H[j]\dualproj{T^{k-1}}\bvec{\Phi}\curly{S^{k-1}}\bvec{x}\curly{S^{k-1}}}$ can be used. Since, like the window $[j]$ which is disjoint to $I(\bvec{\Phi}\curly{S^{k-1}})$, the column $j_2\notin I(\bvec{\Phi}\curly{S^{k-1}})$, this will imply simply replacing the window $[j]$ of length $L$ by a window of length 1 consisting of the column $j_2$ only. 
%Note that for $L=1$, $O_{S^{k-1}}=S^{k-1}$ and thus, $d_k=c_k$.
We make corresponding changes in the definitions of $\mu$, $\bvec{h},\ \bvec{u},\ \bvec{w},\ \bvec{B}$ as given by ~\eqref{eq:mu-definition},~\eqref{eq:h-definition},~\eqref{eq:u-definition},~\eqref{eq:w-definition} and~\eqref{eq:B-matrix-definition} respectively, by replacing $\theta$ by some positive number $\alpha$, $L$ by $1$, and $\bvec{\Phi}[j]$ by $\bvec{\phi}_{j_2}$.
To describe the structure of the PIBS vector that emerges as a result of a similar analysis, note that since $j_2\in \beta^k\setminus C^k$, we always have $j_2 \notin I(\bvec{\Phi}\curly{T^{k-1}})$.
As a result, we have a PIBS vector with
parameters $(b,p,1,L',K,1)$ (corresponding PIBRIC : $\delta_{b,p,1,L'}(K,1)$). 
%On the other hand, if $j_2 \in I(\bvec{\Phi}\curly{T^{k-1}})$, the PIBS vector parameters change to
%$(b,p,1,L',K,0)$ (corresponding PIBRIC : $\delta_{b,p,1,L'}(K,0)$, which, from Lemma~\ref{lem:PIBRIP-K-R-monotonicity}, is upper bounded by $\delta_{b,p,1,L'}(K,1)$).
Consequently, following the steps of ~\eqref{eq:step-k-window-selection-intermediate-identity1},~\eqref{eq:step-k-window-selection-intermediate-inequality1} and~\eqref{eq:step-k-window-selection-alternate-criterion-rhs-upper-bound-part2}, we obtain the following inequality (for an arbitrary positive number $\alpha$):
\begin{align}
\lefteqn{\norm{\dualproj{T^{k-1}}\bvec{\Phi}\curly{S^{k-1}}\bvec{x}\curly{S^{k-1}}}^2 - \sqrt{\alpha}\norm{\bvec{x}\curly{S^{k-1}}} F_2} & &\nonumber\\
\label{eq:F2_upper_bound}
\ & \ge \norm{\bvec{x}\curly{S^{k-1}}}^2\left(1-\sqrt{\alpha + 1}\delta_{b,p,1,L'}(K,1)\right).
\end{align}
%To do this, we first recognize $F_2$ as the $l_2$ norm of the correlation vector between the vector $\dualproj{T^{k-1}}\bvec{\Phi}\curly{S^{k-1}}\bvec{x}\curly{S^{k-1}}$ and a virtual window of length $t_r$ (that is the set of columns $\bvec{\Phi}_W$). Now recall the inequality~\eqref{eq:step-k-window-selection-alternate-criterion-rhs-upper-bound-noiseless} which enables one to find an upper bound on the quantity $\norm{\bvec{\Phi}[j]^H\dualproj{T^{k-1}}\bvec{\Phi}\curly{S^{k-1}}\bvec{x}\curly{S^{k-1}}}$, which is the $l_2$ norm of the correlation vector between $\dualproj{T^{k-1}}\bvec{\Phi}\curly{S^{k-1}}\bvec{x}\curly{S^{k-1}}$ and a window of length $L$ (that is the columns of $\bvec{\Phi}[j]$). Thus, to find an upper bound on $F_2$, one can use the steps used to derive inequality~\eqref{eq:step-k-window-selection-alternate-criterion-rhs-upper-bound-part2}, with the quantities $\bvec{h},\ \bvec{u},\ \bvec{w},\ \bvec{B}$ defined in such a way that considers a window of length $t_r$ (i.e., in those definitions in equations~\eqref{eq:mu-definition},~\eqref{eq:h-definition},~\eqref{eq:u-definition},~\eqref{eq:w-definition} and~\eqref{eq:B-matrix-definition}, $L$ is replaced by $t_r$, and $\bvec{\Phi}[i]$ is replaced by $\bvec{\Phi}_W$). Consequently, we obtain the following inequality:\begin{align}
%\lefteqn{\norm{\dualproj{T^{k-1}}\bvec{\Phi}\curly{S^{k-1}}\bvec{x}\curly{S^{k-1}}}^2 - \sqrt{c_k}\norm{\bvec{x}\curly{S^{k-1}}} F_2} & &\nonumber\\
%\label{eq:F2_upper_bound}
%\ & \ge \norm{\bvec{x}\curly{S^{k-1}}}^2\left(1-\sqrt{c_k + 1}\delta_{b,p,t_r}(K,1)\right).
%\end{align}
To find an upper bound of $F_3$, we derive:
\begin{align}
F_3 & \le \sqrt{1+B'}\norm{\bvec{\Phi}_{U}^H\dualproj{T^{k-1}}\bvec{e}}\nonumber\\
%\ & \le \sqrt{1+B'}\sqrt{\lambda_{\max}\left(\dualproj{T^{k-1}}\bvec{\Phi}_{U}\bvec{\Phi}_{U}^H\dualproj{T^{k-1}}\right)}\norm{\bvec{e}} \nonumber\\
\ & \le \sqrt{1+B'}\sqrt{\lambda_{\max}\left(\bvec{\Phi}_{U}^H\dualproj{T^{k-1}}\bvec{\Phi}_{U}\right)}\norm{\bvec{e}} \nonumber\\
\label{eq:F3-upper-bound}
\ & \le \sqrt{(B'+1)(1+\delta_{b,p,1,L'}(\abs{T^{k-1}},2))}\epsilon,
\end{align}
where $U$ is the set of the indices $j_1,j_2$, and the last step follows from Lemma~\ref{lem:projected-matrix-pibrip} which uses the observation that $I(\bvec{\Phi}\curly{T^{k-1}})\cup U$ corresponds to the support of a PIBS vector with parameters $(b,p,1,L',\abs{T^{k-1}},2)$ (which is true because $j_1,j_2\notin I(\bvec{\Phi}\curly{T^{k-1}})$ and $j_1\ne j_2$).
%where we have upper bounded the maximum eigenvalue corresponding to the union of the blocks corresponding to the column $\bvec{\phi}_j$ and the submatrix $\bvec{\Phi}_W$ by the maximum eigenvalue of a matrix with block size $|W|+1$.

Now, we proceed to find a lower bound of $F_1$. For this, first we define, for any $z\in \complex,z\ne 0,\ \sign{z}=\frac{z}{\abs{z}}$ (i.e., if $z=re^{j\theta}$, then $\sign{z}=e^{j\theta}$).
Then recalling that $\bvec{q}^{k-1}=\dualproj{T^{k-1}}\bvec{\Phi}\curly{S^{k-1}}\bvec{x}\curly{S^{k-1}}$, we have $F_1=\abs{\inprod{\bvec{\phi}_{j_1}}{\bvec{q}^{k-1}}}=\inprod{s_{j_1}^*\bvec{\phi}_{j_1}}{\bvec{q}^{k-1}}\equiv \inprod{s_{j_1}^*\dualproj{T^{k-1}}\bvec{\phi}_{j_1}}{\bvec{q}^{k-1}}$, 
where, $s_{j_1}=\sign{\inprod{\bvec{\phi}_{j_1}}{\bvec{q}^{k-1}}}$. Also, for two vectors ${\bvec x},\;{\bvec y}\in\complex^n$, if $\inprod{\bvec x}{\bvec y}$ is real, then we can write
$\inprod{\bvec x}{\bvec y}$ as $\frac{1}{2}[\norm{{\bvec x}}^2+\norm{{\bvec y}}^2-\norm{\bvec{x-y}}^2]$. Then, defining $\alpha'=\alpha/B'$, we can write,
\begin{align*}
\lefteqn{F_1 - \frac{\norm{\dualproj{T^{k-1}}\bvec{\Phi}\curly{S^{k-1}}\bvec{x}\curly{S^{k-1}}}^2}{\sqrt{\alpha'}\norm{\bvec{x}\curly{S^{k-1}}}}} & &\\
\ & \stackrel{(e)}{=} \frac{1}{\sqrt{\alpha'}\norm{\bvec{x}\curly{S^{k-1}}}}\left(\frac{\alpha'\norm{\bvec{x}\curly{S^{k-1}}}^2\norm{\dualproj{T^{k-1}}\bvec{\phi}_{j_1}}^2}{4}\right.\\
\ &\left. - \norm{\frac{\sqrt{\alpha'}\norm{\bvec{x}\curly{S^{k-1}}}\dualproj{T^{k-1}}\bvec{\phi}_{j_1}s_{j_1}}{2} - \bvec{q}^{k-1}}^2\right)\nonumber\\
\ & = \frac{1}{\sqrt{\alpha'}\norm{\bvec{x}\curly{S^{k-1}}}}\left(\frac{\alpha'\norm{\bvec{x}\curly{S^{k-1}}}^2\norm{\dualproj{T^{k-1}}\bvec{\phi}_{j_1}}^2}{4}\right.\\
\ & -\left. \norm{\dualproj{T^{k-1}}\bvec{\Phi}\curly{S^{k-1}}\bvec{v}\curly{S^{k-1}}}^2\right),\nonumber
\end{align*}
where the vector $\bvec{v}\in \complex^n$ is defined as follows : $v_r = 0\ \forall r\notin I(\bvec{\Phi}\curly{S^{k-1}})$  and $\bvec{v}\curly{S^{k-1}}=
\bvec{x}\curly{S^{k-1}}$, except for the index $j_1$, for which, $v_{j_1} = x_{j_1} - \frac{s_{j_1}\sqrt{\alpha'}\norm{\bvec{x}\curly{S^{k-1}}}}{2}$. 
Then, as $j_1\notin I(\bvec{\Phi}\curly{T^{k-1}})$, using Lemma~\ref{lem:projected-column-lower-bound-wielandt-PIBS} we obtain \begin{align}
\norm{\dualproj{T^{k-1}}\bvec{\phi}_{j_1}}^2 & \ge (1-\delta^2_{b,p,1,L'}(\abs{T^{k-1}},1)).
\end{align} On the other hand, as the support of $\bvec{v}$ is $I(\bvec{\Phi}\curly{S^{k-1}})$ and since $I(\bvec{\Phi}\curly{T^{k-1}})\cup I(\bvec{\Phi}\curly{S^{k-1}})$ is the support of a PIBS vector with parameters $(b,p,1,L',K,0)$, using Lemma~\ref{lem:projected-matrix-pibrip}, we obtain \begin{align}
\norm{\dualproj{T^{k-1}}\bvec{\Phi}\curly{S^{k-1}}\bvec{v}\curly{S^{k-1}}}^2 & \le (1+\delta_{b,p,1,L'}(K,0))\norm{\bvec{v}}^2.
\end{align} Therefore, we obtain, \begin{align}
\lefteqn{F_1 - \frac{\norm{\dualproj{T^{k-1}}\bvec{\Phi}\curly{S^{k-1}}\bvec{x}\curly{S^{k-1}}}^2}{\sqrt{\alpha'}\norm{\bvec{x}\curly{S^{k-1}}}}} & &\nonumber\\
\ge & \frac{1}{\sqrt{\alpha'}\norm{\bvec{x}\curly{S^{k-1}}}}\left(\alpha'/4\norm{\bvec{x}\curly{S^{k-1}}}^2(1-\delta^2_{b,p,1,L'}(\abs{T^{k-1}},1))\right.\nonumber\\
\label{eq:step-k-block-selection-lower-bound-intermediate}
\ & - \left.(1+\delta_{b,p,1,L'}(K,0))\norm{\bvec{v}}^2\right).
\end{align}
Now observe that \begin{align}
\label{eq:v-expression}
\norm{\bvec{v}}^2 & = \norm{\bvec{x}\curly{S^{k-1}}}^2 - \abs{x_{j_1}}^2 + \abs{x_{j_1} - \frac{s_{j_1}\sqrt{\alpha'}\norm{\bvec{x}\curly{S^{k-1}}}}{2}}^2 \nonumber\\
\ & = \left(1+\frac{\alpha'}{4}\right) \norm{\bvec{x}\curly{S^{k-1}}}^2 - \sqrt{\alpha'}\Re (s_{j_1}^*x_{j_1}) \norm{\bvec{x}\curly{S^{k-1}}}.
\end{align}

Therefore, from the inequalities~\eqref{eq:step-k-block-selection-alternative-criterion-lhs-preliminary-lower-bound},~\eqref{eq:F2_upper_bound},~\eqref{eq:F3-upper-bound},~\eqref{eq:step-k-block-selection-lower-bound-intermediate}, and~\eqref{eq:v-expression}, one can deduce that, \begin{align}
\lefteqn{\abs{\inprod{\bvec{\phi}_{j_1}}{\bvec{r}^{k-1}}} - \abs{\inprod{\bvec{\phi}_{j_2}}{\bvec{r}^{k-1}}}}  & &\nonumber\\
\ge & \frac{\norm{\bvec{x}\curly{S^{k-1}}}}{\sqrt{\alpha'}}(1-\sqrt{\alpha + 1}\delta_{b,p,1,L'}(K,1))\nonumber\\
\ & + \frac{1}{\sqrt{\alpha'}\norm{\bvec{x}\curly{S^{k-1}}}}\left[\frac{\alpha'}{4}\norm{\bvec{x}\curly{S^{k-1}}}^2\right.\nonumber\\
\ & (1-\delta^2_{b,p,1,L'}(\abs{T^{k-1}},1))\nonumber\\
\ & - \left.(1+\delta_{b,p,1,L'}(K,0))\left\{\left(1+\frac{\alpha'}{4}\right) \norm{\bvec{x}\curly{S^{k-1}}}^2 \right.\right.\nonumber\\
\ & - \left.\left.\sqrt{\alpha'}\Re(s_{j_1}^*x_{j_1}) \norm{\bvec{x}\curly{S^{k-1}}}\right\}\right]\nonumber\\
\ & - \sqrt{(1+B')(1+\delta_{b,p,1,L'}(\abs{T^{k-1}},2))}\epsilon\nonumber\\
\ & \ge - \frac{\norm{\bvec{x}\curly{S^{k-1}}}}{\sqrt{\alpha'}}\left[\frac{\alpha'}{4}\left(\delta^2_{b,p,1,L'}(\abs{T^{k-1}},1)\right.\right.\nonumber\\
\ & \left.\left.+\delta_{b,p,1,L'}(K,0)\right) + \sqrt{\alpha + 1}\delta_{b,p,1,L'}(K,1)\right.\nonumber\\
\ & \left.+\delta_{b,p,1,L'}(K,0)\right] + \min_{j_1\in C^k}\Re (s_{j_1}^*x_{j_1})(1+\delta_{b,p,1,L'}(K,0))\nonumber\\
\label{eq:step-k-block-selection-preliminary-sufficient-condition}
\ & - \sqrt{(1+B')(1+\delta_{b,p,1,L'}(\abs{T^{k-1}},2))}\epsilon.
\end{align}
Clearly, the condition~\eqref{eq:step-k-block-selection-alternative-criterion} is satisfied if the right hand side of the inequality~\eqref{eq:step-k-block-selection-preliminary-sufficient-condition} is non-negative. However, observe that, this is not possible unless $\min_{j_1\in C^k}\Re (s_{j_1}^*x_{j_1})>0$. Now we have, \begin{align}
s_{j_1} & = \sign{\inprod{\bvec{\phi}_{j_1}}{\dualproj{T^{k-1}}\bvec{\Phi}\curly{S^{k-1}}\bvec{x}\curly{S^{k-1}}}}\nonumber\\
\ & = \sign{\norm{\dualproj{T^{k-1}}\bvec{\phi}_{j_1}}^2x_{j_1} + \inprod{\bvec{\phi}_{j_1}}{\dualproj{T^{k-1}}\bvec{\Phi}_{V^{k-1}_{j_1}}\bvec{x}_{V^{k-1}_{j_1}}}}\nonumber\\
\ & =\sign{z_{j_1}+w_{j_1}},
\end{align} where $ z_{j_1}=\norm{\dualproj{T^{k-1}}\bvec{\phi}_{j_1}}^2x_{j_1},\ w_{j_1}=\inprod{\bvec{\phi}_{j_1}}{\dualproj{T^{k-1}}\bvec{\Phi}_{V^{k-1}_{j_1}}\bvec{x}_{V^{k-1}_{j_1}}}$. Clearly, $\Re(s_{j_1}^\star x_{j_1})>0,\forall j_1\in C^k$, if and only if $\forall j_1\in C^k$ \begin{align}
\Re((z_{j_1}+w_{j_1})^*z_{j_1}) = \abs{z_{j_1}}^2 + \Re(w_{j_1}^\star z_{j_1})>0.
\end{align} Now, as $\Re(w_{j_1}^\star z_{j_1})\ge -\abs{w_{j_1}}\abs{z_{j_1}}$, the above is satisfied if $ \abs{z_{j_1}}^2-\abs{w_{j_1}}\abs{z_{j_1}}>0$. Therefore, a sufficient condition for the above to be satisfied is the following: \begin{align}
\abs{z_{j_1}} & >\abs{w_{j_1}},\ \forall j_1\in C^k,\nonumber\\
\Leftrightarrow   \norm{\dualproj{T^{k-1}}\bvec{\phi}_{j_1}}^2\abs{x_{j_1}} & > \abs{\inprod{\bvec{\phi}_{j_1}}{\dualproj{T^{k-1}}\bvec{\Phi}_{V^{k-1}_{j_1}}\bvec{x}_{V^{k-1}_{j_1}}}},
\end{align} $\forall j_1\in C^k.$
% which occurs if and only if $s_j = \sign{x_j}\ \forall j\in I(\bvec{\Phi}\curly{i_B})$. We now find a condition to ensure that $s_j = \sign{x_j}\ \forall j\in I(\bvec{\Phi}\curly{i_B})$.
%
%We make the observation that if $\norm{\dualproj{T^{k-1}}\bvec{\phi}_j}^2\abs{x_j}>\abs{\inprod{\bvec{\phi}_j}{\dualproj{T^{k-1}}\bvec{\Phi}_{V^{k-1}_j}\bvec{x}_{V^{k-1}_j}}}$, \begin{align*}
%\lefteqn{s_j =  \sign{\inprod{\bvec{\phi}_j}{\dualproj{T^{k-1}}\bvec{\Phi}\curly{S^{k-1}}\bvec{x}\curly{S^{k-1}}}}} & &\\
%\ & \stackrel{(f)}{=} \sign{\norm{\dualproj{T^{k-1}}\bvec{\phi}_j}^2x_j + \inprod{\bvec{\phi}_j}{\dualproj{T^{k-1}}\bvec{\Phi}_{V^{k-1}_j}\bvec{x}_{V^{k-1}_j}}}\\
%\ & \stackrel{(g)}{=}\sign{\norm{\dualproj{T^{k-1}}\bvec{\phi}_j}^2x_j}\\
%\ & =\sign{x_j},
%\end{align*}
%where step $(f)$ follows since $j\in I(\bvec{\Phi}\curly{i_B})\subset I(\bvec{\Phi}\curly{S^{k-1}})$ and therefore $I(\bvec{\Phi}\curly{S^{k-1}})=V^{k-1}_j\cup \{j\}$, and step $(g)$ uses the aforementioned observation.
We have already found during the derivation of inequality~\eqref{eq:step-k-block-selection-lower-bound-intermediate} that $\norm{\dualproj{T^{k-1}}\bvec{\phi}_{j_1}}^2\ge (1-\delta^2_{b,p,1,L'}(\abs{T^{k-1}},1))$. Moreover, to find an upper bound on $\abs{\inprod{\bvec{\phi}_{j_1}}{\dualproj{T^{k-1}}\bvec{\Phi}_{V^{k-1}_{j_1}}\bvec{x}_{V^{k-1}_{j_1}}}}$, first note that the set $V_{j_1}^{k-1}$ is disjoint to the index $j_1$, and note that the union of the sets $I(\bvec{\Phi}\curly{T^{k-1}})$ and $V_{j_1}^{k-1}$ and the index $j_1$ corresponds to the union of the sets $I(\bvec{\Phi}\curly{T^{k-1}})$ and $I(\bvec{\Phi}\curly{S^{k-1}})$, which corresponds to the support of a PIBS vector with parameters $(b,p,1,L',K,0)$. Therefore, using Lemma~\ref{lem:projected-matrix-innerproduct-inequality} we obtain $\abs{\inprod{\bvec{\phi}_{j_1}}{\dualproj{T^{k-1}}\bvec{\Phi}_{V^{k-1}_{j_1}}\bvec{x}_{V^{k-1}_{j_1}}}}\le \delta_{b,p,1,L'}(K,0)\norm{\bvec{x}_{V^{k-1}_{j_1}}}$. Hence, if for all $j_1\in C^k$, $(1-\delta^2_{b,p,1,L'}(\abs{T^{k-1}},1))\abs{x_{j_1}}>\delta_{b,p,1,L'}(K,0)\norm{\bvec{x}_{V^{k-1}_{j_1}}}$ holds, then $\Re (s_{j_1}^\star x_{j_1})>0,\ \forall j_1\in C^k$. Since $\norm{\bvec{x}_{V^{k-1}_{j_1}}}=\sqrt{\norm{\bvec{x}\curly{S^{k-1}}}^2 - \abs{x_{j_1}}^2}$, we further deduce that $\Re(s_{j_1}^*x_{j_1})>0$ if $(1-\delta^2_{b,p,1,L'}(\abs{T^{k-1}},1))\abs{x_{j_1}} >\delta_{b,p,1,L'}(K,0)\sqrt{\norm{\bvec{x}\curly{S^{k-1}}}^2 - \abs{x_{j_1}}^2}$, which is equivalent to \begin{align}
\frac{\abs{x_{j_1}}}{\frac{\norm{\bvec{x}\curly{S^{k-1}}}}{\sqrt{c_k b}}} >\frac{\delta_{b,p,1,L'}(K,0)\sqrt{c_k b}}{\sqrt{(1-\delta^2_{b,p,1,L'}(\abs{T^{k-1}},1))^2 + \delta^2_{b,p,1,L'}(K,0)}},
\end{align} for all $j_1\in C^k$. Now, let $x_{\min,k-1}=\min\{\abs{x_j}:\ j\in I(\bvec{\Phi}\curly{S^{k-1}})\}\le \abs{x_{j_1}}$, and $x_{\max, k-1}:=\max\{\abs{x_j}:\ j\in I(\bvec{\Phi}\curly{S^{k-1}})\}\ge \frac{\norm{\bvec{x}\curly{S^{k-1}}}}{\sqrt{c_k b}}$. Thus, to ensure that $\Re(s_{j_1}^*x_{j_1})>0$ for all $j_1\in C^k$, the following serves as a sufficient condition: \begin{align}
\label{eq:step-k-min-to-max-ratio-sign-condition}
\frac{x_{\min,k-1}}{x_{\max,k-1}}>\frac{\delta_{b,p,1,L'}(K,0)\sqrt{c_k b}}{\sqrt{(1-\delta^2_{b,p,1,L'}(\abs{T^{k-1}},1))^2 + \delta^2_{b,p,1,L'}(K,0)}},
\end{align}
where 
%The ratio $\rho_{k-1}$ can be interpreted as the minimum-to-average ratio (MAR) for $\bvec{x}\curly{S^{k-1}}$.

Now to find a lower bound on $\Re(s_{j_1}^*x_{j_1})$ under the condition that $\abs{z_{j_1}}>\abs{w_{j_1}},\ \forall j_1\in C^k$, we claim the following:
\begin{lem}
	\label{lem:lower-bound-real-part}
	Let $w,z\in \complex$ such that $z,w\ne 0$. Let $v=w/z$. Then, if $\abs{v}<1$, the following holds true:
	\begin{align}
	\Re\left(\frac{(z+w)}{\abs{z+w}}z^*\right) & \left\{
	\begin{array}{ll}
	=\abs{z}, & \mbox{if $v$ is real},\\
	\ge \abs{z}\sqrt{1-\abs{v}^2}.
	\end{array}
	\right.
	\end{align}
\end{lem}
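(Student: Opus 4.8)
The plan is to normalize out the magnitude of $z$ and reduce the claim to an elementary scalar inequality in $v$. First I would substitute $w = vz$ (legitimate since $z\ne 0$), so that $wz^* = v\abs{z}^2$, giving $(z+w)z^* = \abs{z}^2(1+v)$ and $\abs{z+w} = \abs{z}\,\abs{1+v}$. Note that $1+v\ne 0$ because $\abs{v}<1$, so $z+w\ne 0$ and the expression is well defined. Hence
\begin{align}
\Re\left(\frac{z+w}{\abs{z+w}}z^*\right) = \abs{z}\,\Re\left(\frac{1+v}{\abs{1+v}}\right) = \abs{z}\,\frac{1+\Re(v)}{\abs{1+v}},
\end{align}
and since $\abs{v}<1$ forces $\Re(v)>-1$, the factor $1+\Re(v)$ is strictly positive. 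This already disposes of the real case: if $v\in\real$ then $\abs{1+v}=1+v=1+\Re(v)$, so the ratio equals $1$ and the expression equals $\abs{z}$ exactly.

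For the general lower bound I would write $a=\Re(v)$, $b=\Im(v)$ and $u:=\abs{v}^2=a^2+b^2<1$. After cancelling $\abs{z}$, the inequality to be proved is
\begin{align}
\frac{1+a}{\sqrt{(1+a)^2+b^2}}\ \ge\ \sqrt{1-u}.
\end{align}
Both sides are nonnegative ($1+a>0$ from $\abs{v}<1$, and $1-u>0$), so it suffices to square and cross-multiply by the positive quantity $(1+a)^2+b^2=1+2a+u$. The inequality becomes $(1+a)^2\ge(1-u)(1+2a+u)$; expanding both sides and cancelling the common terms $1+2a$ reduces this to $a^2\ge -2au-u^2$, i.e.\ to $(a+u)^2\ge0$, which is trivially true. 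Since $\sqrt{1-u}=\sqrt{1-\abs{v}^2}$, this is exactly the stated bound.

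I do not expect a genuine obstacle in this lemma; the only delicate points are the two consequences of the hypothesis $\abs{v}<1$ --- namely $1+v\ne0$ (needed so that $\abs{1+v}$ legitimately appears in a denominator and $z+w\ne0$) and $1+\Re(v)>0$ (needed so that both sides of the scalar inequality are nonnegative and squaring is valid) --- together with the bookkeeping in the final expansion, which conveniently collapses to a perfect square. Without $\abs{v}<1$ the statement can fail (e.g.\ $v=-2$ makes the real part negative), so the hypothesis is used essentially.
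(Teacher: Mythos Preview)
Your argument is correct. Both you and the paper begin with the same normalization step, writing
\[
\Re\left(\frac{z+w}{\abs{z+w}}z^*\right)=\abs{z}\,\Re\left(\frac{1+v}{\abs{1+v}}\right),
\]
and both dispatch the real case by observing that $\abs{v}<1$ forces $1+v>0$ when $v\in\real$.

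For the general bound, however, the routes diverge. The paper writes $v=Re^{i\theta}$ in polar form, sets $f(\theta)=\dfrac{1+R\cos\theta}{\sqrt{1+R^2+2R\cos\theta}}$, and minimizes over $\theta$ for fixed $R$ by rewriting $f$ in terms of the auxiliary function $\tau(x)=x+\tfrac{(1-R^2)/2}{x}$; this yields the minimizer $\cos\theta=-R$ and hence $f(\theta)\ge\sqrt{1-R^2}$. Your approach stays in Cartesian coordinates $a=\Re(v),\ b=\Im(v)$, squares the target inequality (legitimately, since both sides are nonnegative), and collapses the difference to the perfect square $(a+u)^2\ge 0$. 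Your route is shorter and more elementary, avoiding the minimization step entirely; the paper's route has the minor advantage of exhibiting explicitly where equality is attained (namely at $\cos\theta=-R$, i.e.\ $\Re(v)=-\abs{v}^2$), which your algebra also encodes via $a+u=0$ but less visibly.
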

\begin{proof}
	See Appendix~\ref{sec:appendix-proof-lemma-lower-bound-real-part} for a proof of this claim.
\end{proof}
Therefore, if $\bvec{x},\bvec{\Phi}$ have real valued entries, under the condition $\norm{\dualproj{T^{k-1}}\bvec{\phi}_{j_1}}^2\abs{x_{j_1}}>\abs{\inprod{\bvec{\phi}_{j_1}}{\dualproj{T^{k-1}}\bvec{\Phi}_{V_{j_1}^{k-1}}\bvec{x}_{V_{j_1}^{k-1}}}}$, one has $\Re(s_{j_1}^*x_{j_1})=\abs{x_{j_1}}$. Consequently, when $\bvec{x},\bvec{\Phi}$ have real entries, condition~\eqref{eq:step-k-block-selection-alternative-criterion} is satisfied if the condition~\eqref{eq:step-k-min-to-max-ratio-sign-condition} is satisfied and if the right hand side of inequality~\eqref{eq:step-k-block-selection-preliminary-sufficient-condition} is made greater than or equal to $0$, which is ensured by the following sufficient condition:\begin{align}
\lefteqn{x_{\min,k-1}} & &\nonumber\\
\ & \ge \frac{\norm{\bvec{x}\curly{S^{k-1}}}}{\sqrt{\alpha'}(1+\delta_{b,p,1,L'}(K,0))}\left[\frac{\alpha'}{4}\left(\delta^2_{b,p,1,L'}(\abs{T^{k-1}},1)\right.\right.\nonumber\\
\ & \left. \left. +\delta_{b,p,1,L'}(K,0)\right)+ \sqrt{\alpha + 1}\delta_{b,p,1,L'}(K,1)+\delta_{b,p,1,L'}(K,0)\right]\nonumber\\
\ & + \frac{\sqrt{(1+B')(1+\delta_{b,p,1,L'}(\abs{T^{k-1}},2))}\epsilon}{(1+\delta_{b,p,1,L'}(K,0))}.
\end{align}
The above in turn is ensured by the following sufficient condition:
\begin{align}
\lefteqn{x_{\min,k-1}} & &\nonumber\\
\ & \ge \frac{x_{\max,k-1}\sqrt{c_kB'b}}{\sqrt{\alpha}(1+\delta_{b,p,1,L'}(K,0))}\left[\frac{\alpha'}{4}\left(\delta^2_{b,p,1,L'}(\abs{T^{k-1}},1)\right.\right.\nonumber\\
\ & \left. \left. +\delta_{b,p,1,L'}(K,0)\right)+ \sqrt{\alpha + 1}\delta_{b,p,1,L'}(K,1)+\delta_{b,p,1,L'}(K,0)\right]\nonumber\\
\label{eq:step-k-block-recovery-sufficient-condition-prelim}
\ & + \frac{\sqrt{(1+B')(1+\delta_{b,p,1,L'}(\abs{T^{k-1}},2))}\epsilon}{(1+\delta_{b,p,1,L'}(K,0))}.
\end{align}
As $\alpha$ can be any arbitrary positive number, choosing $\alpha=c_k$ and defining $c_k'=c_k/B$, we arrive at the following sufficient condition: 
\begin{align}
\lefteqn{x_{\min,k-1}} & &\nonumber\\
\ & \ge \frac{x_{\max,k-1}\sqrt{B'b}}{(1+\delta_{b,p,1,L'}(K,0))}\left[\frac{c_k'}{4}\left(\delta^2_{b,p,1,L'}(\abs{T^{k-1}},1)\right.\right.\nonumber\\
\ & \left. \left. +\delta_{b,p,1,L'}(K,0)\right)+ \sqrt{c_k + 1}\delta_{b,p,1,L'}(K,1)+\delta_{b,p,1,L'}(K,0)\right]\nonumber\\
\label{eq:step-k-block-recovery-sufficient-condition}
\ & + \frac{\sqrt{(1+B')(1+\delta_{b,p,1,L'}(\abs{T^{k-1}},2))}\epsilon}{(1+\delta_{b,p,1,L'}(K,0))}.
\end{align}
On the other hand, for general complex entries, we have, from Lemma~\ref{lem:lower-bound-real-part}, \begin{align}
\lefteqn{\Re(s_{j_1}^*x_{j_1}) \ge \abs{x_{j_1}}\sqrt{1-\left(\frac{\abs{\inprod{\bvec{\phi}_{j_1}}{\dualproj{T^{k-1}}\bvec{\Phi}_{V_{j_1}^{k-1}}\bvec{x}_{V_{j_1}^{k-1}}}}}{\norm{\dualproj{T^{k-1}}\bvec{\phi}_{j_1}}^2\abs{x_{j_1}}}\right)^2}} & &\nonumber\\
\ & \ge \sqrt{\abs{x_{j_1}}^2-\left(\frac{\delta_{b,p,1,L'}(K,0)}{1-\delta^2_{b,p,1,L'}(\abs{T^{k-1},1})}\right)^2\left(\norm{\bvec{x}\curly{S^{k-1}}}^2-\abs{x_{j_1}}^2\right)},
\end{align}
where the last step uses the lower bound of $\norm{\dualproj{T^{k-1}}\bvec{\phi}_{j_1}}^2$ and the upper bound of $\abs{\inprod{\bvec{\phi}_{j_1}}{\dualproj{T^{k-1}}\bvec{\Phi}_{V_{j_1}^{k-1}}\bvec{x}_{V_{j_1}^{k-1}}}}$.
Therefore, for general complex entries, a sufficient condition to make the RHS of inequality \eqref{eq:step-k-block-selection-preliminary-sufficient-condition} non-negative, is obtained (choosing $\alpha=c_k$ in the RHS of the inequality~\eqref{eq:step-k-block-selection-preliminary-sufficient-condition}) by ensuring the following for all $j_1\in C^k$: \begin{align}
\lefteqn{\sqrt{\abs{x_{j_1}}^2-\eta_{k-1}^2\left(\norm{\bvec{x}\curly{S^{k-1}}}^2-\abs{x_{j_1}}^2\right)}} & & \nonumber\\
\ & \ge \frac{\norm{\bvec{x}\curly{S^{k-1}}}}{\sqrt{c_k'}(1+\delta_{b,p,1,L'}(K,0))}\left[\frac{c_k'}{4}\left(\delta^2_{b,p,1,L'}(\abs{T^{k-1}},1)\right.\right.\nonumber\\
\ & \left.\left. +\delta_{b,p,1,L'}(K,0)\right)+ \sqrt{c_k + 1}\delta_{b,p,1,L'}(K,1)+\delta_{b,p,1,L'}(K,0)\right]\nonumber\\
\ & + \frac{\sqrt{(1+B')(1+\delta_{b,p,1,L'}(\abs{T^{k-1}},2))}\epsilon}{1+\delta_{b,p,1,L'}(K,0)},
\end{align}
where $\eta_{k-1} = \left(\frac{\delta_{b,p,1,L'}(K,0)}{1-\delta^2_{b,p,1,L'}(\abs{T^{k-1},1})}\right)$. The above, in turn is ensured if \begin{align}
\lefteqn{x_{\min,k-1}^2} & &\nonumber\\
\ & -\left(\frac{\delta_{b,p,1,L'}(K,0)}{1-\delta^2_{b,p,1,L'}(\abs{T^{k-1},1})}\right)^2\left(c_kb x_{\max,k-1}^2-x_{\min,k-1}^2\right) & & \nonumber\\
\ & \ge \left(\frac{x_{\max,k-1}\sqrt{B'b}}{1+\delta_{b,p,1,L'}(K,0)}\left[\frac{c_k'}{4}\left(\delta^2_{b,p,1,L'}(\abs{T^{k-1}},1)\right.\right.\right.\nonumber\\
\ & \left. \left. +\delta_{b,p,1,L'}(K,0)\right)+ \sqrt{c_k + 1}\delta_{b,p,1,L'}(K,1)+\delta_{b,p,1,L'}(K,0)\right]\nonumber\\
\label{eq:step-k-block-recovery-sufficient-condition-complex}
\ & \left.+ \frac{\sqrt{(1+B')(1+\delta_{b,p,1,L'}(\abs{T^{k-1}},2))}\epsilon}{1+\delta_{b,p,1,L'}(K,0)}\right)^2.
\end{align}

\subsection{Condition for overall success}
\label{sec:condition-overall-success}
In this section we claim that the conditions~\eqref{eq:tsgbomp-delta-recovery-condition}, and~\eqref{eq:tsgbomp-noisy-recovery-condition} stated in Theorem~\ref{thm:recovery-guarantee-tsgbomp} simultaneously satisfy the inequalities~\eqref{eq:step-k-sufficient-condition-correct-window-selection},~\eqref{eq:step-k-min-to-max-ratio-sign-condition}, and~\eqref{eq:step-k-block-recovery-sufficient-condition} for all iterations $1\le k\le K$. In the following $\delta$ will be used to denote $\delta_{b,p,L',L'}(K-1,2)$ unless otherwise specified. 
%We propose the following condition:\begin{align}
%x_{\min} & > \frac{x_{\max}\delta\sqrt{b}}{(1+\delta)}\left[\frac{K}{4}\left(1+\delta\right) + \sqrt{K + 1}+1\right]\nonumber\\
%\label{eq:overall-sufficient-condition-xmin-final}
%\ & + \frac{\sqrt{2(1+\delta)}\epsilon}{(1-\delta\sqrt{K+1})},
%\end{align}
%%along with the condition, \begin{align}
%%\label{eq:overall-sufficient-condition-delta-final}
%%\delta < \frac{1}{\sqrt{K+1}},
%%\end{align}
%where, $\delta:=\delta_{b,p,L}(K-1,2)$, $x_{\min}:=\min\{\abs{x_j}:j\in I(\bvec{\Phi}\curly{T})\}$, and $x_{\max} := \max\{\abs{x_j}:j\in I(\bvec{\Phi}\curly{T})\}$.
%We now verify that the condition~\eqref{eq:overall-sufficient-condition-xmin-final} ensures the conditions~\eqref{eq:step-k-sufficient-condition-correct-window-selection},~\eqref{eq:step-k-min-to-max-ratio-sign-condition}, and~\eqref{eq:step-k-block-recovery-sufficient-condition-final} for all iterations $1\le k\le K$.

We first verify that the inequality~\eqref{eq:step-k-sufficient-condition-correct-window-selection} is satisfied for all $1\le k\le K$ under the conditions~\eqref{eq:tsgbomp-delta-recovery-condition}, and~\eqref{eq:tsgbomp-noisy-recovery-condition}. Indeed, using Lemmas~\ref{lem:PIBRIP-B-L-monotonicity} and~\ref{lem:PIBRIP-K-R-monotonicity} we obtain $\delta_{b,p,L',L'}(K,1)\le \delta_{b,p,L',L'}(K-1,2)$, and $\delta_{b,p,L',L'}(\abs{T^{k-1}},2)\le \delta_{b,p,L',L'}(K-1,2)$, respectively. Moreover, for all $1\le k\le K$, $d_k=\abs{O_{S^{k-1}}}\le 2\abs{S^{k-1}}=2c_k\le 2\abs{T}=2K$, since each cluster in $S^{k-1}$ might have non-zero overlap with at most two windows in $O_{S^{k-1}}$. Therefore, $\frac{\norm{\bvec{x}\curly{S^{k-1}}}}{\sqrt{d_k}}\ge \frac{\norm{\bvec{x}\curly{S^{k-1}}}}{\sqrt{2c_k}}\ge \frac{\sqrt{b}x_{\min}}{\sqrt{2}}\ge \frac{x_{\min}}{\sqrt{2}}>\frac{\sqrt{(1+B')(1+\delta_{b,p,L',L'}(\abs{T^{k-1}},2))}\epsilon}{(1-\delta_{b,p,L',L'}(K,1)\sqrt{d_k+1})}\ge \frac{\sqrt{2(1+\delta_{b,p,L',L'}(\abs{T^{k-1}},2))}\epsilon}{(1-\delta_{b,p,L',L'}(K,1)\sqrt{d_k+1})}$, where the last two inequalities follow from condition~\eqref{eq:tsgbomp-noisy-recovery-condition} and the fact that $B'\ge 1$. After rearrangement, this results in the condition~\eqref{eq:step-k-sufficient-condition-correct-window-selection}.

Now, to show that~\eqref{eq:tsgbomp-delta-recovery-condition} and \eqref{eq:tsgbomp-noisy-recovery-condition} imply~\eqref{eq:step-k-min-to-max-ratio-sign-condition}, we first observe that \begin{align*}
\lefteqn{\frac{K}{4\sqrt{B'}}+\frac{\sqrt{B'}(\sqrt{K+1}+1)}{1+\delta} - \frac{\sqrt{K}}{\sqrt{\delta^2+(1-\delta^2)^2}}} & &\\
\ & \stackrel{(g)}{\ge} \frac{K}{4\sqrt{B'}}+\frac{\sqrt{2B'}(\sqrt{K+1}+1)}{\sqrt{2}+1}-\sqrt{\frac{4K}{3}}\\
\ & = \frac{1}{\sqrt{B'}}\left(\frac{\sqrt{K}}{2}-\frac{2\sqrt{B'}}{\sqrt{3}}\right)^2\\
\ & + \frac{\sqrt{2B'}}{\sqrt{2}+1}\left(\sqrt{K+1} - \frac{2\sqrt{2}+1}{3}\right)\stackrel{(h)}{>}0,
\end{align*}
where step $(g)$ uses $\delta<\frac{1}{\sqrt{2}}$ (which follows from~\eqref{eq:tsgbomp-delta-recovery-condition} since $K\ge 1$), and the fact that the function $\delta^2+(1-\delta^2)^2$ is monotonically decreasing for $\delta\in [0,\frac{1}{\sqrt{2}}]$ with the minimum at $\frac{1}{\sqrt{2}}$, and step $(h)$ uses the simple observation that $\frac{2\sqrt{2}+1}{3}<\sqrt{2}\le \sqrt{K+1}$ since $K\ge 1$. Then, using~\eqref{eq:tsgbomp-noisy-recovery-condition} and~\eqref{eq:tsgbomp-delta-recovery-condition} it follows that \begin{align}
\ & x_{\min,k-1}\ge x_{\min}> \frac{x_{\max}\delta\sqrt{Kb}}{\sqrt{\delta^2+(1-\delta^2)^2}}\nonumber\\
\ & \ge \frac{x_{\max,k-1}\delta_{b,p,1,L'}(K,0)\sqrt{c_kb}}{\sqrt{(1-\delta^2_{b,p,1,L'}(\abs{T^{k-1}},1))^2+\delta^2_{b,p,1,L'}(K,0)}}.
\end{align} In the above we use the fact that the function $\frac{\delta}{\sqrt{(1-\delta^2)^2+\delta^2}}$ is monotonically increasing for $\delta\in [0,1]$, so that, using $x_{\max}\ge x_{\max,k-1},\ c_k\le K$, and the Lemmas~\ref{lem:PIBRIP-K-R-monotonicity},~\ref{lem:PIBRIP-L-monotonicity} and~\ref{lem:PIBRIP-B-L-monotonicity}, we obtain, $\frac{\delta}{\sqrt{(1-\delta^2)^2+\delta^2}}\ge \frac{\delta_{b,p,1,L'}(K,1)}{\sqrt{(1-\delta^2_{b,p,1,L'}(K,1))^2+\delta^2_{b,p,1,L'}(K,1)}}=\frac{1}{\sqrt{1+\left(\frac{1-\delta^2_{b,p,1,L'}(K,1)}{\delta_{b,p,1,L'}(K,1)}\right)^2}}$. Finally, using $\delta_{b,p,1,L'}(K,1)\ge \delta_{b,p,1,L'}(\abs{T^{k-1}},1)$, and $\delta_{b,p,1,L'}(K,1)\ge \delta_{b,p,1,L'}(K,0)$, we obtain the above inequality.
% where the last step uses the observations that . Furthermore,
 
To show that~\eqref{eq:step-k-block-recovery-sufficient-condition} follows from~\eqref{eq:tsgbomp-noisy-recovery-condition} and \eqref{eq:tsgbomp-delta-recovery-condition}, first observe that the condition~\eqref{eq:tsgbomp-noisy-recovery-condition} followed by~\eqref{eq:tsgbomp-delta-recovery-condition} imply that \begin{align}
\lefteqn{x_{\min}} & &\nonumber\\
\ & >x_{\max}\delta\sqrt{B'b}\left[\frac{K}{4B'}(1+\delta)+\sqrt{K+1}+1\right]-x_{\min}\delta\nonumber\\
\ & +\epsilon\sqrt{(1+B')(1+\delta)}\nonumber\\
\ & \ge \left[x_{\max,k-1}\sqrt{B'b}\left(\frac{c_k}{4B'}+1\right)-x_{\min,k-1}\right]\delta\nonumber\\
\ & +x_{\max,k-1}\sqrt{B'b}\left[\frac{c_k}{4B'}\delta^2+\delta\sqrt{c_k+1}\right]\nonumber\\
\label{eq:preliminary-verfiication-inequality1-real-case}
\ & +\epsilon\sqrt{(1+B')(1+\delta)},
\end{align} since $x_{\max}\ge x_{\max,k-1}\ge x_{\min,k-1}\ge x_{\min}$ and $K\ge c_k$. Note that factor multiplied with $\delta$ in the first term in the RHS above is non-negative. Therefore, applying Lemmas~\ref{lem:PIBRIP-K-R-monotonicity},~\ref{lem:PIBRIP-B-L-monotonicity} and~\ref{lem:PIBRIP-L-monotonicity} it follows that, \begin{align}
\lefteqn{x_{\min,k-1}} & & \nonumber\\
\ & > \left[x_{\max,k-1}\sqrt{B'b}\left(\frac{c_k}{4B'}+1\right)-x_{\min,k-1}\right]\delta_{b,p,1,L'}(K,0)\nonumber\\
\ & +x_{\max,k-1}\sqrt{B'b}\left[\frac{c_k}{4B'}\delta_{b,p,1,L'}^2(\abs{T^{k-1}},1)\right.\nonumber\\
\ & \left.+\delta_{b,p,1,L'}(K,1)\sqrt{c_k+1}\right]\nonumber\\
\label{eq:preliminary-verfiication-inequality2-real-case}
\ & +\epsilon\sqrt{(1+B')(1+\delta_{b,p,1,L'}(\abs{T^{k-1}},2))},
\end{align}
which, after a rearrangement, is identical to the condition~\eqref{eq:step-k-block-recovery-sufficient-condition}. In the above we have used the following inequalities to arrive at the final inequality: $\delta=\delta_{b,p,L',L'}(K-1,2)\ge \delta_{b,p,L',L'}(K,1)$(Lemma~\ref{lem:PIBRIP-B-L-monotonicity}), $\delta_{b,p,L',L'}(K,1)\ge \delta_{b,p,1,L'}(K,0)$(Lemmas~\ref{lem:PIBRIP-L-monotonicity} and~\ref{lem:PIBRIP-K-R-monotonicity}), $\delta_{b,p,L',L'}(K,1)\ge \delta_{b,p,L',L'}(\abs{T^{k-1}},1)\ge \delta_{b,p,1,L'}(\abs{T^{k-1}},1)$ (by Lemma~\ref{lem:PIBRIP-K-R-monotonicity} followed by Lemma~\ref{lem:PIBRIP-L-monotonicity}), and finally, $\delta_{b,p,L',L'}(K-1,2)\ge \delta_{b,p,1,L'}(K-1,2)\ge\delta_{b,p,1,L'}(\abs{T^{k-1}},2) $(by Lemma~\ref{lem:PIBRIP-L-monotonicity} followed by Lemma~\ref{lem:PIBRIP-K-R-monotonicity}).

We now proceed to prove that the condition~\eqref{eq:tsgbomp-delta-recovery-condition-complex} along with~\eqref{eq:tsgbomp-noisy-recovery-condition-complex} are sufficient to ensure that the conditions~\eqref{eq:step-k-sufficient-condition-correct-window-selection} and~\eqref{eq:step-k-block-recovery-sufficient-condition-complex} are simultaneously satisfied. First observe that one can readily verify that the condition~\eqref{eq:step-k-sufficient-condition-correct-window-selection} is satisfied using the same arguments as in the real case, as in this case too we have the inequality $x_{\min}>\frac{\epsilon\sqrt{2(1+B')(1+\delta_{b,p,L',L'}(\abs{T^{k-1}},2))}}{1-\delta_{b,p,L',L'}(K,1)\sqrt{d_k+1}}.$ Therefore, it just remains to verify that the inequality~\eqref{eq:step-k-block-recovery-sufficient-condition-complex} follows.
Observe that, with $\delta=\delta_{b,p,L',L'}(K-1,2)$, inequality~\eqref{eq:tsgbomp-noisy-recovery-condition-complex} implies \begin{align}
\lefteqn{x_{\min}\sqrt{1+\frac{\delta^2}{(1-\delta^2)^2}}} & &\nonumber\\
\ & > \left[\frac{\delta \sqrt{Kb}}{1-\delta^2}+\frac{\delta\sqrt{B'b}}{1+\delta}\left(\frac{K(1+\delta)}{4B'}+\sqrt{K+1}+1\right)\right]x_{\max}\nonumber\\
\ & + \frac{\epsilon\sqrt{(1+B')(1+\delta)}}{1+\delta}.
\end{align}
Now, using the inequality $ a+b\ge \sqrt{a^2+b^2},\ a,b\ge 0$ in the RHS of the above inequality, we see that the above inequality implies the following: \begin{align}
\lefteqn{x_{\min}^2\left(1+\frac{\delta^2}{(1-\delta^2)^2}\right) > \frac{\delta^2 Kb}{(1-\delta^2)^2}x_{\max}^2} & &\nonumber\\
\ & + \left(\frac{\delta\sqrt{B'b}}{1+\delta}\left(\frac{K(1+\delta)}{4B'}+\sqrt{K+1}+1\right)x_{\max}+\right.\nonumber\\
\ & \left.\frac{\epsilon\sqrt{(1+B')(1+\delta)}}{1+\delta}\right)^2,
\end{align}
which implies \begin{align}
\lefteqn{\sqrt{x_{\min}^2 - \frac{\delta^2}{(1-\delta^2)^2}\left(Kbx_{\max}^2 - x_{\min}^2\right)}} & &\nonumber\\
\ & > \frac{\delta\sqrt{B'b}x_{\max}}{1+\delta}\left(\frac{K(1+\delta)}{4B'}+\sqrt{K+1}+1\right)\nonumber\\
\label{eq:preliminary-verification-inequality1-complex-case}
\ & +\frac{\epsilon\sqrt{(1+B')(1+\delta)}}{1+\delta}.
\end{align}
Let, for any $k\ge 1$, $z_{k-1}:=\sqrt{x_{\min,k-1}^2 - \frac{\delta^2}{(1-\delta^2)^2}\left(c_kbx_{\max,k-1}^2 - x_{\min,k-1}^2\right)}$, and that $z = \sqrt{x_{\min}^2 - \frac{\delta^2}{(1-\delta^2)^2}\left(Kbx_{\max}^2 - x_{\min}^2\right)}$, so that $z_{k-1}\ge z$, and $x_{\max,k-1}\ge z_{k-1}$ for all $k\ge 1$. Then, using exactly the same arguments used to arrive at inequality~\eqref{eq:preliminary-verfiication-inequality2-real-case} from inequality~\eqref{eq:preliminary-verfiication-inequality1-real-case}, it follows that the inequality~\eqref{eq:preliminary-verification-inequality1-complex-case} implies the following: \begin{align}
\lefteqn{z_{k-1} } & &\nonumber\\
\ & >\left[x_{\max,k-1}\sqrt{B'b}\left(\frac{c_k}{4B'}+1\right)-z_{k-1}\right]\delta_{b,p,1,L'}(K,0)\nonumber\\
\ & +x_{\max,k-1}\sqrt{B'b}\left[\frac{c_k}{4B'}\delta_{b,p,1,L'}^2(\abs{T^{k-1}},1)\right.\nonumber\\
\ & \left.+\delta_{b,p,1,L'}(K,1)\sqrt{c_k+1}\right]\nonumber\\
\label{eq:preliminary-verfiication-inequality2-complex-case}
\ & +\epsilon\sqrt{(1+B')(1+\delta_{b,p,1,L'}(\abs{T^{k-1}},2))},
\end{align}
which, after using the fact that $z_{k-1}\le \sqrt{x_{\min,k-1}^2 - \frac{\delta_{b,p,1,L'}^2(K,0)}{(1-\delta^2_{b,p,1,L'}(\abs{T^{k-1}},1)^2)^2}\left(c_kbx_{\max,k-1}^2 - x_{\min,k-1}^2\right)}$, followed by a rearrangement, yields the inequality~\eqref{eq:step-k-block-recovery-sufficient-condition-complex}.
\section{Random Matrices}
\label{sec:tsgbomp-random-matrices}
We now verify that one can choose the sizes $m,n$ of a random matrix along with a randomly generated signal with $(b,p,L,L',K,0)$ structure can indeed satisfy the condition~\eqref{eq:tsgbomp-noisy-recovery-condition}. We consider the noiseless case for deriving the conditions.
\begin{thm}
	\label{thm:random-matrix-guarantees}
	Let $x_{\min}=\min\{\abs{x_j}:j\in \supp(\bvec{x})\},\ x_{\max}=\max\{\abs{x_j}:j\in \supp(\bvec{x})\}$, $f_K:\real_+\to \real_+$, $f_K(u) = \frac{u\sqrt{B'b}}{1+u}\left(\frac{K(1+u)}{4B'}+\sqrt{K+1}+1\right)$, and $\bvec{\Phi}=[\phi_{ij}]_{m\times n}$ where $\phi_{ij}$'s are i.i.d.~$\mathcal{N}(0,m^{-1})$. 
	
	Let, $h(b,p,L',K,R)=A+KC+K\ln(pD/K-E),$ where
	\begin{align}
	\label{eq:A-expression}
	A & =\frac{3p(K-1)}{2(p+1)^2}+2,\\
	\label{eq:C-expression}
	C & =\ln p + 21/ 8 -1/p,\\
	\label{eq:D-expression}
	D & =n-(K-1)b+L',\\
	\label{eq:E-expression} 
	E & =L'-1.
	\end{align} 
	Also, let $\lambda = \sqrt{((K-1)b+2L)/m}$, $\nu=\lambda^2+2\lambda,\ \rho=f_K^{-1}(1)$, and $c_1=2e^{h(b,p,L,K,R)}, c_2=\frac{m}{(\lambda+1+\sqrt{1+\rho})^2}$ and let there is $\epsilon_0>0$ such that $\nu+\epsilon_0<\frac{1}{\sqrt{2K+1}}<\rho$. 
	Then, if the following conditions hold: $n\ge Kb+RL'+(K+1)(L-1');\ 3p\le K; m\ge (K-1)b+2L';\nu<\rho<3$, then the conditions~\eqref{eq:tsgbomp-delta-recovery-condition} and~\eqref{eq:tsgbomp-noisy-recovery-condition} are simultaneously satisfied with probability exceeding \begin{align}
	\label{eq:probability-lower-bound}
	g(f_K(\nu+\epsilon)) - (1+g(f_K(\nu+\epsilon)))c_1e^{-c_2\epsilon_0^2},
	\end{align} 
	%	\begin{align}
	%	\lefteqn{\prob{x_{\min}>f_K(\delta_{b,p,L',L'}(K-1,2))x_{\max}, \delta_{b,p,L',L'}(K-1,2)<\frac{1}{\sqrt{2K+1}}}} & &\nonumber\\
	%	\label{eq:probability-lower-bound}
	%	\ & > g(f_K(\nu+\epsilon)) - (1+g(f_K(\nu+\epsilon)))c_1e^{-c_2\epsilon_0^2},
	%	\end{align}
	where $g(a): = \prob{\frac{x_{\min}}{x_{\max}}>a}\forall\ a\in[0,1]$.
	
	 Furthermore, if $\bvec{x}\sim\mathcal{N}(\bvec{0},\bvec{I})$,\begin{align}
	\label{eq:g-function-upper-lower-bounds}
	w^{Kb-1}(a)\le \frac{g(a)}{Kb}\le  w(a),
	\end{align}
	where $w(a)=\frac{2}{\pi}\cot^{-1}a - \frac{1}{2}$.
\end{thm}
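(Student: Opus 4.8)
The plan is to decouple the two recovery conditions into a statement about the random matrix $\bvec{\Phi}$ (through its PIBRIC) and a statement about the random signal $\bvec{x}$ (through $x_{\min}/x_{\max}$), to control the former by Gaussian singular-value concentration together with a union bound over admissible PIBS supports, and the latter by a short order-statistics computation. Write $\delta=\delta_{b,p,L',L'}(K-1,2)$. With $\epsilon$ set to $0$ in \eqref{eq:tsgbomp-noisy-recovery-condition}, conditions \eqref{eq:tsgbomp-delta-recovery-condition}--\eqref{eq:tsgbomp-noisy-recovery-condition} read precisely $\delta<1/\sqrt{2K+1}$ and $x_{\min}/x_{\max}>f_K(\delta)$. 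A one-line rearrangement gives $f_K(u)=\tfrac{K\sqrt{b}}{4\sqrt{B'}}\,u+\sqrt{B'b}\,(\sqrt{K+1}+1)\tfrac{u}{1+u}$, so $f_K$ is strictly increasing on $\real_+$, $f_K(0)=0$, and $f_K(\rho)=1$ by definition of $\rho$. Fix $\epsilon\in(0,\epsilon_0]$. Since $\nu+\epsilon\le\nu+\epsilon_0<1/\sqrt{2K+1}<\rho$, on the event $\{\delta\le\nu+\epsilon\}$ we have $\delta<1/\sqrt{2K+1}$ and $f_K(\delta)\le f_K(\nu+\epsilon)$, so both recovery conditions hold as soon as $x_{\min}/x_{\max}>f_K(\nu+\epsilon)$. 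As $\bvec{\Phi}$ and $\bvec{x}$ are independent and the two events depend only on $\bvec{\Phi}$ and only on $\bvec{x}$ respectively, writing $\bar g=g(f_K(\nu+\epsilon))$, the probability that both occur equals $\prob{\delta\le\nu+\epsilon}\,\bar g\ge(1-\prob{\delta>\nu+\epsilon})\bar g\ge\bar g-(1+\bar g)\prob{\delta>\nu+\epsilon}$, using $\bar g\le1$. Hence it suffices to show $\prob{\delta>\nu+\epsilon}\le c_1e^{-c_2\epsilon^2}$; taking $\epsilon=\epsilon_0$ yields the stated bound.

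To bound $\prob{\delta>\nu+\epsilon}$, note that by Definition~\ref{def:RIP}, $\delta=\max_S\opnorm{\bvec{\Phi}_S^H\bvec{\Phi}_S-\bvec{I}}{2\to 2}$ over the finitely many $S\in\bigcup_{k=0}^{K-1}\bigcup_{r=0}^{2}\Sigma_{b,p,L',L'}(k,r)$, every such $S$ having $|S|\le(K-1)b+2L'$; by the hypothesis $m\ge(K-1)b+2L'$ we have $\lambda\ge\sqrt{|S|/m}$ for all of them. For a fixed $S$, the standard Gaussian deviation bounds for the extreme singular values of $\bvec{\Phi}_S$ (see, e.g., \cite{foucart2013mathematical}) give, for every $s>0$, $\sigma_{\max}(\bvec{\Phi}_S)\le1+\sqrt{|S|/m}+s$ and $\sigma_{\min}(\bvec{\Phi}_S)\ge1-\sqrt{|S|/m}-s$, each failing with probability at most $e^{-ms^2/2}$; hence $\opnorm{\bvec{\Phi}_S^H\bvec{\Phi}_S-\bvec{I}}{2\to 2}\le(1+\lambda+s)^2-1$ off an event of probability at most $2e^{-ms^2/2}$. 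Choosing $s$ with $(1+\lambda+s)^2-1=\nu+\epsilon$, i.e.\ $s=\sqrt{(1+\lambda)^2+\epsilon}-(1+\lambda)$, and using $(1+\lambda)^2=1+\nu$ together with $\nu+\epsilon<\rho<3$ to bound $\sqrt{(1+\lambda)^2+\epsilon}=\sqrt{1+\nu+\epsilon}\le\sqrt{1+\rho}$, one obtains $s\ge\epsilon/(1+\lambda+\sqrt{1+\rho})$, so $ms^2/2\ge c_2\epsilon^2$ with $c_2$ essentially as stated (the exact constant depending on the precise form of the deviation inequality used). A union bound over the $N$ supports gives $\prob{\delta>\nu+\epsilon}\le2Ne^{-c_2\epsilon^2}$.

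It remains to check $N\le e^{h(b,p,L',K,R)}$, so that $2N\le c_1$. An admissible support is determined by (i) a composition of the total block count (at most $K-1$) into parts in $\{1,\dots,p\}$ --- this fixes the cluster sizes --- and (ii) the placement in $[n]$ of the resulting clusters together with the at most $R=2$ pseudo-blocks of length $L'$, subject to every pair of consecutive clusters being separated by at least $L'$ zeros. Bounding the number of such compositions by a sum of binomial coefficients and applying Stirling's formula (the regime $3p\le K$ making the estimate effective) accounts for the $A+KC$ part of $h$, with $A=\tfrac{3p(K-1)}{2(p+1)^2}+2$ and $C=\ln p+21/8-1/p$; a stars-and-bars count of the at most $D=n-(K-1)b+L'$ free cells (the hypothesis $n\ge Kb+RL'+(K+1)(L'-1)$ keeping this quantity non-negative) distributed over $O(K)$ slots, followed by another Stirling estimate, accounts for the $K\ln(pD/K-E)$ part, with $E=L'-1$. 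Multiplying the two estimates gives $N\le e^{h(b,p,L',K,R)}$, hence $\prob{\delta>\nu+\epsilon}\le c_1e^{-c_2\epsilon^2}$, which with the reduction of the first paragraph proves the first assertion. I expect this combinatorial/Stirling bookkeeping --- producing the \emph{exact} constants $A,C,D,E$ --- to be the principal obstacle; everything else is assembly of standard estimates.

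For the two-sided bounds on $g$ when $\bvec{x}\sim\mathcal{N}(\bvec{0},\bvec{I})$, note that only the $Kb$ nonzero coordinates of $\bvec{x}$ enter $g$, and take them i.i.d.\ $\mathcal{N}(0,1)$. For $a\in[0,1)$ (the case $a=1$ being trivial since both sides vanish), conditioning on the a.s.\ unique coordinate of largest magnitude and using exchangeability gives $g(a)=Kb\,\mathbb{E}\bigl[Y(|Z|)^{Kb-1}\bigr]$, where $Z\sim\mathcal{N}(0,1)$ and $Y(t)=\prob{at<|Z|<t}$. With $Z_1,Z_2$ independent standard normals, $Z_2/Z_1$ is standard Cauchy, so $\mathbb{E}[Y(|Z_1|)]=\prob{a|Z_1|<|Z_2|<|Z_1|}=\tfrac{1}{2}-\tfrac{2}{\pi}\arctan a=\tfrac{2}{\pi}\cot^{-1}a-\tfrac{1}{2}=w(a)$. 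Since $0\le Y(|Z_1|)\le1$ and $Kb-1\ge1$, $Y^{Kb-1}\le Y$, so $g(a)/(Kb)\le w(a)$; and since $y\mapsto y^{Kb-1}$ is convex on $[0,\infty)$, Jensen's inequality gives $g(a)/(Kb)=\mathbb{E}[Y^{Kb-1}]\ge(\mathbb{E}[Y])^{Kb-1}=w^{Kb-1}(a)$, which is \eqref{eq:g-function-upper-lower-bounds}.
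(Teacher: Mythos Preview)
Your overall architecture matches the paper's: decouple \eqref{eq:tsgbomp-delta-recovery-condition}--\eqref{eq:tsgbomp-noisy-recovery-condition} into a PIBRIC event and a $x_{\min}/x_{\max}$ event, control the former by Davidson--Szarek concentration plus a union bound over PIBS supports, and the latter by order statistics. Two of your choices are in fact cleaner than the paper's. First, your reduction via the containment $\{\delta\le\nu+\epsilon\}\cap\{x_{\min}/x_{\max}>f_K(\nu+\epsilon)\}\subset\{\text{both conditions}\}$ together with independence gives $(1-\prob{\delta>\nu+\epsilon})\bar g$ directly; the paper instead writes the joint probability as an integral in $\delta_0$, splits at $\nu+\epsilon$, and only then recovers the same expression. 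Second, for \eqref{eq:g-function-upper-lower-bounds} you condition on the maximum and use that $Z_2/Z_1$ is Cauchy; the paper conditions on the minimum and passes through the $F(1,1)$ distribution and the regularized incomplete beta function. Since $F(1,1)$ is the square of a Cauchy, the two computations are equivalent, and the Jensen/$Y^{Kb-1}\le Y$ arguments are identical.

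The one place your proposal is genuinely incomplete is the support count $N\le e^{h}$, which you correctly flag as the obstacle. Your sketch (``sum of binomial coefficients plus Stirling'' for the compositions, ``stars-and-bars plus Stirling'' for the placement) is the right shape but does not anticipate the paper's actual mechanisms, and would not by itself produce the specific constants $A,C,D,E$. Concretely: (i) for the number of compositions of $K$ into $k$ parts from $\{1,\dots,p\}$, the paper does \emph{not} use Stirling but writes this count as $p^k\prob{\sum_{s=1}^k X_s=K}$ with $X_s$ i.i.d.\ uniform on $\{1,\dots,p\}$ and then bounds the point probability via Hoeffding's inequality, which is what generates the $A=\tfrac{3p(K-1)}{2(p+1)^2}+2$ and the $-3k/8$ that feeds into $C$; (ii) for the placement of clusters and pseudo-blocks, the paper uses a generating-function count that tracks how many inter-cluster gaps actually contain a pseudo-block (the parameter $q$), yielding a sum of products of three binomials before the final $\binom{n}{k}\le(en/k)^k$ step. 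The hypothesis $3p\le K$ (in the paper, used as $(R+1)p\le K$ with $R=2$) is needed so that the $R$ pseudo-blocks can always be spread among the $k-1\ge K/p-1\ge R$ inter-cluster gaps, making the $q$-sum run over its full range. None of this is deep, but it is bookkeeping you would have to reproduce rather than cite, and a pure Stirling approach to the composition count would give a different (and looser) exponent.
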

\begin{proof}
	See Appendix~\ref{sec:appendix-proof-thm-random-martix-guarantees}.
\end{proof}
\section{Simulation results}
\label{sec:simulation-results}
\begin{figure}[t!]
	\centering
	\begin{subfigure}{0.5\textwidth}
		\centering
		\includegraphics[height=1.5in,width=3in]{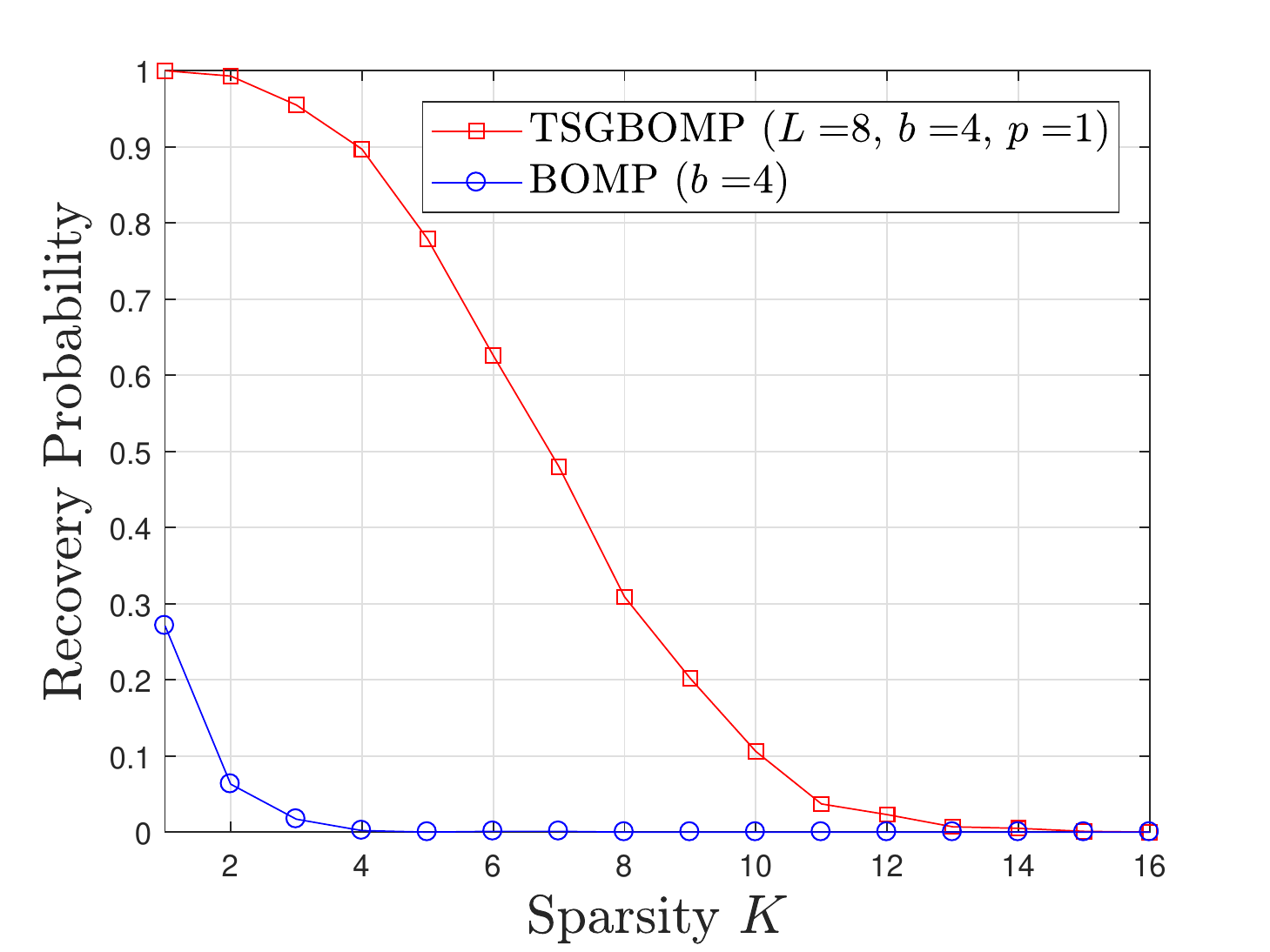}
		\caption{$p=1$}
		\label{fig:prob-recovery-vs-sparsity-m=120-n=200-p=1}
	\end{subfigure}
	\begin{subfigure}{0.5\textwidth}
		\centering
		\includegraphics[height=1.5in,width=3in]{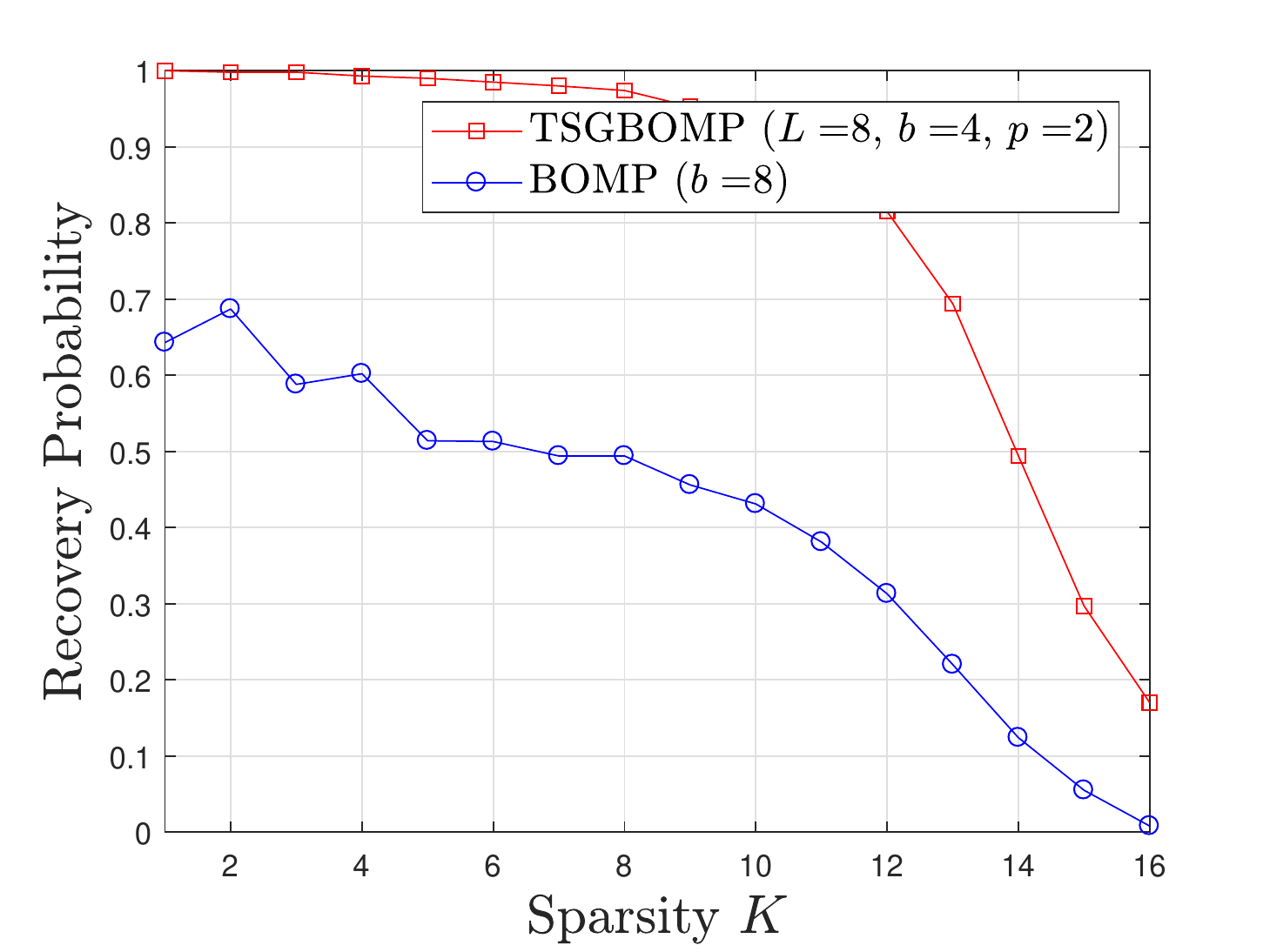}
		\caption{$p=2$}
		\label{fig:prob-recovery-vs-sparsity-m=120-n=200-p=2}
	\end{subfigure}
	\caption{Probability of recovery vs sparsity for $n=200,\ m=120$}
	\label{fig:prob-recovery-vs-sparsity-m=120-n=200}
\end{figure}
\begin{figure}[t!]
	\centering
	\begin{subfigure}{0.5\textwidth}
		\centering
		\includegraphics[height=1.5in,width=3in]{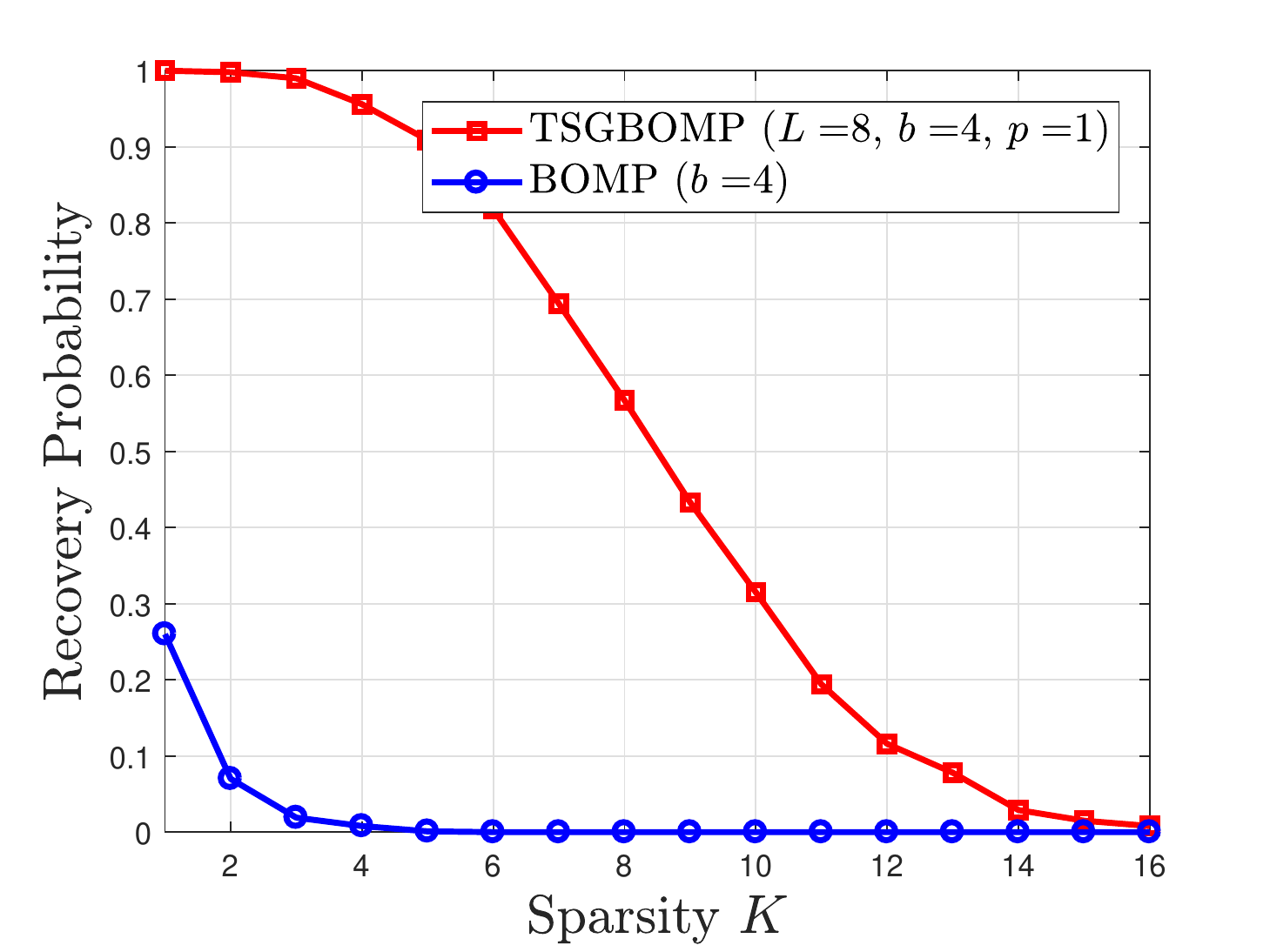}
		\caption{$p=1$}
		\label{fig:prob-recovery-vs-sparsity-m=160-n=200-p=1}
	\end{subfigure}
	\begin{subfigure}{0.5\textwidth}
		\centering
		\includegraphics[height=1.5in,width=3in]{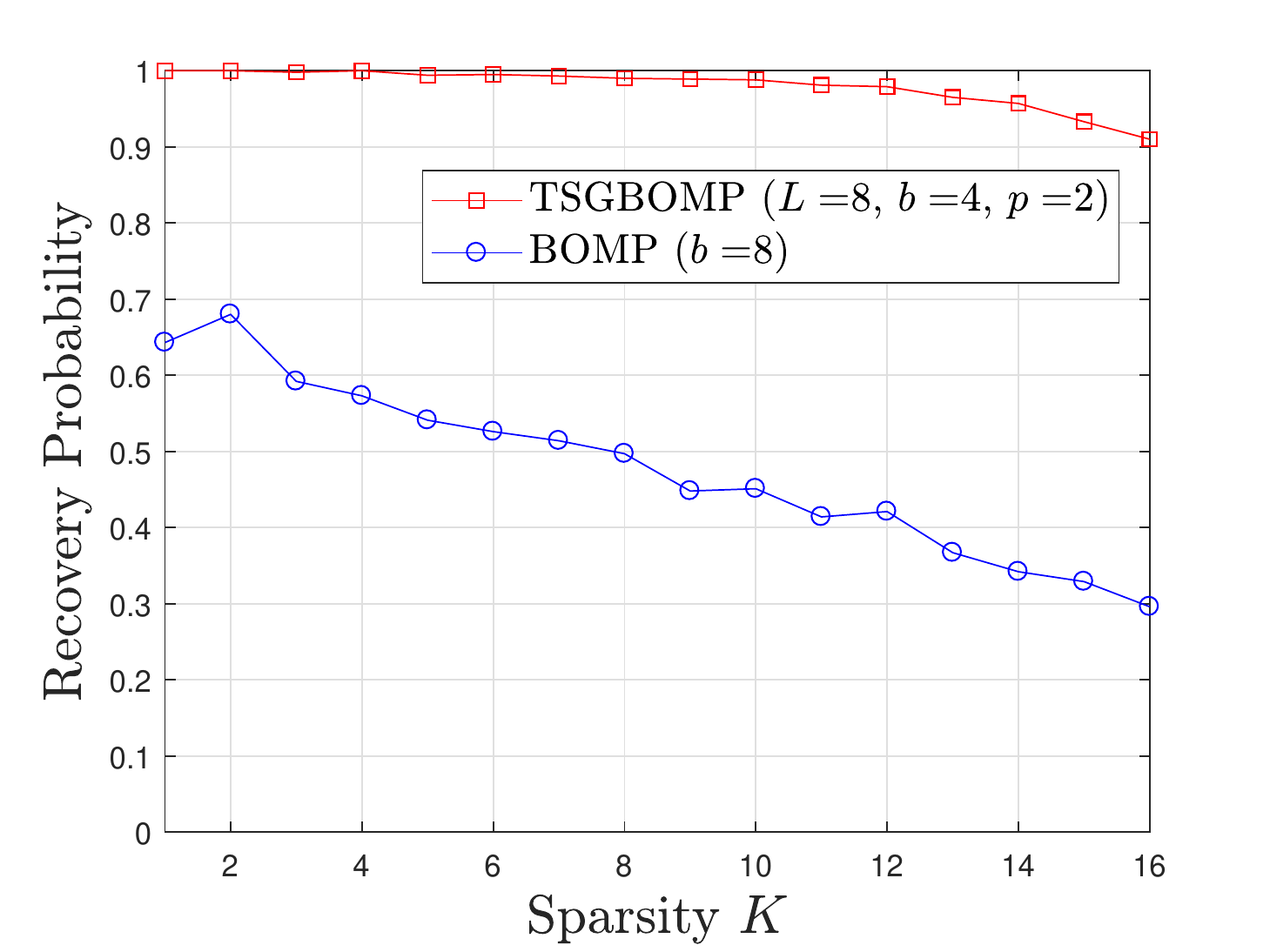}
		\caption{$p=2$}
		\label{fig:prob-recovery-vs-sparsity-m=160-n=200-p=2}
	\end{subfigure}
	\caption{Probability of recovery vs sparsity for $n=200,\ m=160$}
	\label{fig:prob-recovery-vs-sparsity-m=160-n=200}
\end{figure}
%
%\begin{figure}[t!]
%	\centering
%	\includegraphics[height=1.5in,width=3in]{tsgbomp_gbomp_blocknum_ratio_vs_sparsity_n=200_m=160_L=8_b=4_p=2}
%	\caption{Ratio of the number of blocks over the TSGBOMP and GBOMP algorithms search over $K$ iterations}
%	\label{fig:tsgbomp-gbomp-blocknum-search-ratio}
%\end{figure}
%
In this section we compare the recovery success performance of TSGBOMP with BOMP, over a range of sparsity values for different sensing matrix sizes. In all the simulation exercises, we generate random matrices with size $m\times n$, with i.i.d. Gaussian entries with zero mean and unit variance, and the columns are normalized to have unit norm. For the unknown signal, we generate a PIBS vector with parameters $(b,p,L,L',K,0)$. Here the parameters $n,b,p,L,K$ are chosen such that the signal is large enough to accommodate the desired signal structure and $1\le p\le K$. Once the support of such a vector is generated, the entries are filled randomly with values $\pm 10$. The results are averaged over $1000$ such trials. All the experiments are carried out on MATLAB 2017a running on a Core i-5 Laptop with 8 GB RAM, 64 bit processor and 1.60 GHz processor speed.

We have conducted two sets of experiments with $n=200$, and $m=120$ and $160$. In both of these experiments, we have fixed $b=4$, and have taken $p=1,2$. The parameter $L$ is always taken to be $L=8$. The sparsity $K$ is varied over the range $1$ to $16$. In these experiments, the TSGBOMP algorithm is executed with parameters $K,B=bp,L\ge bp$; and the algorithms GBOMP and BOMP are executed with parameter $bp$. In all these experiments, we see from the figures~\ref{fig:prob-recovery-vs-sparsity-m=120-n=200} and~\ref{fig:prob-recovery-vs-sparsity-m=160-n=200} that while both the algorithms GBOMP and TSGBOMP exhibit moderately well recovery of probability performance, the BOMP algorithm has abysmal performance in recovering the unknown vector. This is to be expected, as the BOMP algorithm was not designed to consider the recovery of SBS type vectors which might not exactly fit into prespecified block boundaries, and might have partial overlaps in consecutive blocks. Moreover, we see from figures~\ref{fig:prob-recovery-vs-sparsity-m=120-n=200-p=1},~\ref{fig:prob-recovery-vs-sparsity-m=120-n=200-p=2},or from figures~\ref{fig:prob-recovery-vs-sparsity-m=160-n=200-p=1},~\ref{fig:prob-recovery-vs-sparsity-m=160-n=200-p=2}, that the recovery performances of TSGBOMP and GBOMP improve significantly when $p$ is increased. This improvement can be explained by the fact that increasing $p$ produces PIBS vectors with larger nonzero (true) blocks with size varying from $b$ to $bp$, while keeping the total size of the sum of the nonzero blocks to $Kb$. This implies that once a true block is found, more than one blocks of size $b$ are found, which reduces the chance of detection of false blocks, thereby increasing the chance of signal recovery.

%Finally we plot the ratio of the total number of blocks that the TSGBOMP and GBOMP algorithms have to search over the $K$ iterations, when the algorithms are run. We use $n=200,m=160,L=8,b=4,p=2$ for the simulation. The resulting plot in fig~\ref{fig:tsgbomp-gbomp-blocknum-search-ratio} illustrates that the TSGBOMP algorithm conducts the search for blocks over much smaller number of blocks. The figures~\ref{fig:prob-recovery-vs-sparsity-m=160-n=200} illustrates that in spite of this smaller search domain of the TSGBOMP algorithm, execution of a search among prepartitioned blocks in the first search step of TSGBOMP results in a recovery performance almost identical to the GBOMP algorithm.
%

%
\appendices
%\section{Proof of Lemmas}
%%
%\label{sec:proof-lemmas}
%%
%The proof of this lemma mimics the proof of Lemma
%
\section{Proofs of the Lemmas in Section~\ref{sec:useful-definitions-lemmas} }
\subsection{Proof of Lemma~\ref{lem:rip-PIBS-eigenvalue-connection}}
\label{sec:appendix-proof-rip-eigenvalue-condition}
From the Definition~\ref{def:RIP}, $\delta\ge \delta_{b,p,l,L'}(K,R)$, for every PIBS vector $\bvec{x}\in \complex^{n}$ with parameters $(b,p,l,L',k,r),\;0\le k\le K,\;0\le r \le R$. If $S$ be the support of the vector ($S\in \cup_{r=0}^R\cup_{k=0}^K\Sigma_{b,p,l,L'}(k,r)$), then, $\delta_{b,p,l,L'}(K,R)\norm{\bvec{x}}^2=\delta_{b,p,l,L'}(K,R)\norm{\bvec{x}_S}^2
\ge \abs{{\bvec{x}_S}^H(\bvec{\Phi}_S^H\bvec{\Phi}_S-\bvec{I}_{|S|\times |S|}){\bvec{x}_S}}=\abs{\norm{\bvec{\Phi x} }^2-\norm{\bvec{x}}^2}$, which follows from the definition of the operator norm .
From this, the inequality~\eqref{eq:PIBS-rip-inequalities} follows trivially for any $\delta\ge \delta_{b,p,l,L'}(K,R)$.
\subsection{Proof of Lemma~\ref{lem:PIBRIP-K-R-monotonicity}}
\label{sec:appendix-proof-pibrip-k-r-monotonicity}
The proof follows immediately after writing down the expressions for $\delta_{b,p,l,L'}(K_i,R_j),\ 1\le i,j\le 2$ using Eq~\eqref{eq:PIBS-rip-eigenvalue-connection} from definition~\ref{def:RIP} and subsequently noting that if $K_i(\mathrm{resp.}\ R_i)\le K_j(\mathrm{resp.}\ R_j)$ then the collection of sets over which the search for $\delta_{b,p,l,L'}(K_j,R_j)$ is conducted is a superset of the collection of sets over which $\delta_{b,p,l,L'}(K_i,R_i)$ is searched.
\subsection{Proof of Lemma~\ref{lem:PIBRIP-L-monotonicity}}
\label{sec:appendix-proof-pibrip-l-monotonicity}
Choose some $k,r$ such that $0\le k\le K,\ 0\le r\le R$, $R=0,1,2$. Let $S\subset [n]$ be an arbitrary set $S\in \Sigma_{b,p,l,L'}(k,r)$. Then, the Lemma~\ref{lem:PIBRIP-L-monotonicity} will be proved if we can prove that $\opnorm{\bvec{\Phi}_S^H\bvec{\Phi}_S-\bvec{I}_S}{2\to 2}\le \delta_{b,p,L',L'}(K,R)$, i.e., if we can show that there exists a set $S'\subset [n]$ such that $S'\in \cup_{k=0}^K\cup_{r=0}^R\Sigma_{b,p,L',L'}(k,r)$ and $\opnorm{\bvec{\Phi}_S^H\bvec{\Phi}_S-\bvec{I}_S}{2\to 2}\le\opnorm{\bvec{\Phi}_{S'}^H\bvec{\Phi}_{S'}-\bvec{I}_{S'}}{2\to 2}$, which in turn, will be proved if one can show that $S\subset S'$ for some $S'\in \Sigma_{b,p,L',L'}(k',r')$ for some $0\le k'\le K$, and $0\le r'\le R$.

If $R=0$, we must have $r=0$. Then, the set $S$ corresponds to indices of true clusters only, in which case one can simply take $S'=S$ to prove the above claim.

If $R=1$, one can have $r=0,1$. The case $r=0$ is already considered. For $r=1$, given $S\in \Sigma_{b,p,l,L'}(k,r)$, it is always possible to construct at least one $S'\in \Sigma_{b,p,L',L'}(k',r')$ with $r'=1$,
which covers $S$, i.e., $S\subset S'$. To show this, we define the \emph{leading signal edge} and the \emph{trailing signal edge} respectively as the subset of $[n]$ lying before the first and after the last true cluster. Let the corresponding lengths be $l_1$ and $l_2$ respectively. Then, for $r=1$, if the pseudo block lies between two true clusters, or, in the leading (trailing) signal edge with $l_1\ge L'$ ($l_2\ge L'$), it can be covered by a larger pseudo block of size $L'$ without disturbing the true clusters.
If, on the other hand, it lies in the leading signal edge with $l_1 < L'$, one has to move the first true cluster of $S$, of size, say, $f_1$ to the front (i.e., from index $1$ to $f_1$), followed by a larger pseudo block of size $L'$ while retaining other true clusters of $S$ (if it lies in the trailing signal edge with $l_2 < L'$, similar treatment in the opposite direction may be used). It is easy to see that the $S'$ created in all the above cases contains $S$.

For $R=2$, one can have $r=0,1,2$. It is sufficient to consider the case $r=2$ as the constructions for $S'$ for $r=0,1$ have already been described before. For $r=2$, there are two nontrivial cases (other cases are straightforward and can be handled by direct application of the above arguments).
%certain cases are trivial and can be handled by using the above approaches only, like two pseudo blocks lying either in two distinct inter-cluster zones or in the two distinct signal edges, or, one of them lying in one signal edge while the other one lying either in an inter-cluster space other than the immediate next one or in the immediate next one with $l1\ge L$.
First, consider the case where one pseudo block lies in one signal edge with $l_1< L'$ while the other one lies in the immediate next inter-cluster space. In this case, as above, we move the first true cluster of $S$ to the front (i.e., from index $1$ to $f_1$), followed by a larger pseudo block of size $L'$. If the starting index of the second pseudo block of $S$, say, $n_1$ ($n_1\ge l_1+f_1+1$) is such that
$n_1\le L'+f_1$, then the first $L'+f_1$ indices have a non-empty overlap with the second pseudo block in $S$. In such case, we place the large pseudo block of size $L'$ from the $(L'+f_1+1)^\mathrm{st}$ index, else, it is placed from index $n_1$. It is easy to see that these $L'$ indices cover the second pseudo block of $S$. In both cases, if these $L'$ indices do not overlap with the second true cluster in $S$, then the first $L'+f_1$ indices along with these $L'$ indices cover the two pseudo blocks as well as the first true cluster in $S$, and so these indices, along with the support of the other clusters of $S$ generate the desired $S'$. Otherwise, instead of using $L'$ indices to place the second large pseudo block, we consider $L'+f_2$ indices where the first $f_2$ indices support the second true cluster and the subsequent $L'$ indices support the second large pseudo block. Clearly, this results in a desired $S'$ that covers $S$. The other nontrivial case for $r=2$ occurs when both the pseudo blocks in $S$ lie in the same inter-cluster space. Here, we start from the first index of the first pseudo block of $S$ and cover it by $L'$ indices, placing the first large pseudo block there. If this does not overlap with the second pseudo block, we cover the second pseudo block with another set of $L'$ indices starting from its first index and place the second pseudo block there, else we start from right after the first set of $L'$ indices. Again, if the second pseudo block overlaps with the next true cluster (of size, say, $f_3$), instead of $L'$ indices, we consider a set of $L'+f_3$ indices to support first the true cluster and then the second pseudo block of size $L'$ (in case the first set of $L'$ indices itself overlaps with the next true cluster, we readjust it so that it covers both the pseudo blocks and remains confined in the inter cluster space). Clearly, in all the above cases, a desired support set $S'$ is generated that contains $S$. This proves the Lemma.

\subsection{Proof of Lemma~\ref{lem:PIBRIP-B-L-monotonicity} }
\label{sec:appendix-proof-pibrip-b-l-monotonicity}
Consider a PIBS vector with parameters $(b,p,L',L',K,1),\;K\ge 1$ and let $S$ denote its support set. The solitary pseudo block may lie either in between two consecutive true clusters or beside just one true cluster. Since we are dealing with both
$\delta_{b,p,L',L'}(K,1)$ as well $\delta_{b,p,L',L'}(K-1,2)$, it is implicit that the size of the index set $n$ is large enough to accommodate both $K$ true blocks together with one pseudo block of size $L'$, as well as $K-1$ true blocks together with two pseudo blocks of size $L'$ each. It then easily follows that $b$ consecutive indices (i.e., one true block) either from the front of the true cluster to the left or from the rear of the true cluster to the right can be covered by part of a pseudo block of size $L'$, with the remaining part of the pseudo block of size $(L'-b)$ supported in the inter-cluster space. This results in a support set $S'$ with $S\subset S'$ and thus, $\delta_{b,p,L',L'}(K,1)\le \delta_{b,p,L',L'}(K-1,2)$.
% choose a true cluster beside the pseudo block. Note that the gap between the next true cluster and this pseudo block is $\ge bp + l$. Thus, one can create a new configuration, where one replaces this chosen true cluster (keeping the positions of every other true blocks and the existing pseudblock intact) by a new pseudo block of length $l$, i.e. the new pseudo block has the same first index as the chosen true cluster. Call this configuration $S_2$. Note that $S_1\subseteq S_2$, and that $S_2$ can support a PIBS vector with parameters $(b,p,l,l,k-1,2)$ which implies that $\opnorm{\bvec{\Phi}_{S_1}^H\bvec{\Phi}_{S_1}-\bvec{I}_{S_1}}{2\to 2}\le \opnorm{\bvec{\Phi}_{S_2}^H\bvec{\Phi}_{S_2}-\bvec{I}_{S_2}}{2\to 2}\le \delta_{b,p,l,l}(k-1,2)$. Since this is true for an arbitrary set $S_1$ that can support a PIBS vector with parameters $(b,p,l,l,k,1)$, one obtains, $\delta_{b,p,l,l}(k,1)\le \delta_{b,p,l,l}(k-1,2)$.
%
\subsection{Proof of Lemma~\ref{lem:projected-matrix-pibrip} }
\label{sec:appendix-proof-projected-matrix-pibrip}
The proof extends the arguments of the proof of Lemma~$5$ of~\cite{cai2011orthogonal} with modifications to the PIBS structure.
Since $\dualproj{S_1}\bvec{\Phi}_{S_2}\bvec{x}_{S_2}=\dualproj{S_1}\bvec{\Phi}_{S_2\setminus S_1}\bvec{x}_{S_2\setminus S_1'}$, one can partition the matrix $\bvec{\Phi}_S$ as $\bvec{\Phi}_S = [\bvec{\Phi}_{S_2\setminus S_1'}\ \bvec{\Phi}\curly{S_1}]$ where, for brevity, we have written $S_1'=I(\bvec{\Phi}\curly{S_1})$. This representation of $\bvec{\Phi}_S$ will then be used, as in~\cite{cai2011orthogonal}, to determine that the minimum eigenvalue of $\left(\bvec{\Phi}_{S_2\setminus S_1'}^H\dualproj{S_1}\bvec{\Phi}_{S_2\setminus S_1'}\right)^{-1}$ is lower bounded by the minimum eigenvalue of $\left(\bvec{\Phi}_S^H\bvec{\Phi}_S\right)^{-1}$, and the maximum eigenvalue of $\left(\bvec{\Phi}_{S_2\setminus S_1'}^H\dualproj{S_1}\bvec{\Phi}_{S_2\setminus S_1'}\right)^{-1}$ is upper bounded by the maximum eigenvalue of $\left(\bvec{\Phi}_S^H\bvec{\Phi}_S\right)^{-1}$. By definition~\ref{def:RIP}, $\frac{1}{1+\delta_{b,p,l,L'}(K,R)}\le \lambda_{\min}\left(\left(\bvec{\Phi}_S^H\bvec{\Phi}_S\right)^{-1}\right)\le \lambda_{\max}\left(\left(\bvec{\Phi}_S^H\bvec{\Phi}_S\right)^{-1}\right)\le \frac{1}{1-\delta_{b,p,l,L'}(K,R)}$. Using definition~\ref{def:RIP} once again for the matrix $\dualproj{S_1}\bvec{\Phi}_{S_2\setminus S_1'}$, one concludes that $\delta_{b,p,l,L'}(K,R)\ge \opnorm{\bvec{\Phi}_{S_2\setminus S_1'}^H\dualproj{S_1}\bvec{\Phi}_{S_2\setminus S_1'}-\bvec{I}_{\abs{S_2\setminus S_1'}\times \abs{S_2\setminus S_1'}}}{2\to 2}$ (Here we have used the fact that $(\dualproj{S_1})^H\dualproj{S_1}=\dualproj{S_1}$). Then, using Lemma~\ref{lem:rip-PIBS-eigenvalue-connection} for the matrix $\dualproj{S_1}\bvec{\Phi}_{S_2\setminus S_1'}$, the result follows.
\subsection{Proof of Lemma~\ref{lem:projected-matrix-innerproduct-inequality}}
\label{sec:appendix-proof-projected-matrix-innerproduct-inequality}
Note that the vectors $\bvec{u}$ and $\bvec{v}$ are orthogonal since they are supported on $S_2$ and $S_3$ respectively, which are disjoint. Therefore, writing $S'=S_2\cup S_3$, one obtains, \begin{align}
\lefteqn{\abs{\inprod{\dualproj{S_1}\bvec{\Phi u}}{\bvec{\Phi v}}}} & &\nonumber\\
\ & = \abs{\inprod{\bvec{u}_{S'}}{(\bvec{\Phi}_{S'}^H\dualproj{S_1}\bvec{\Phi}_{S'}-\bvec{I}_{\abs{S'}\times \abs{S'}})\bvec{v}_{S'}}}\nonumber\\
\ & \le \opnorm{\bvec{\Phi}_{S'}^H\dualproj{S_1}\bvec{\Phi}_{S'}-\bvec{I}_{\abs{S'}\times \abs{S'}}}{2\to 2}\norm{\bvec{u}_{S'}}\norm{\bvec{v}_{S'}}\nonumber\\
\ & \le \delta_{b,p,l,L'}(K,R)\norm{\bvec{u}}\norm{\bvec{v}},
\end{align}
where the last step uses Lemma~\ref{lem:projected-matrix-pibrip} since $I(\bvec{\Phi}\curly{S_1})\cup S'$ is the support of a PIBS vector with parameters $(b,p,l,L',k,r),\;0\le k\le K,\;0\le r\le R$.
\subsection{Proof of Lemma~\ref{lem:projected-column-lower-bound-wielandt-PIBS}}
\label{sec:appendix-proof-projected-column-lower-bound-wielandt-pibs}
A similar result has been proved in~\cite{wen2017nearly} in the context of sparse vectors. We prove this results here using Wielandt's theorem~\cite[Theorem $7.4.34$]{horn1990matrix}, which states that for any two vectors $\bvec{u},\bvec{v}\in\complex^n$, that are orthogonal, i.e., $\inprod{\bvec{u}}{\bvec{v}}=0$, and for any Hermitian positive definite matrix $\bvec{B}\in \complex^{n\times n}$, one has \begin{align}
\label{eq:wielandt-theorem-statement}
\abs{\bvec{u}^H\bvec{B}\bvec{v}}^2\le \left(\frac{\lambda_{\max}(\bvec{B}) - \lambda_{\min}(\bvec{B})}{\lambda_{\max}(\bvec{B}) + \lambda_{\min}(\bvec{B})}\right)^2 (\bvec{u}^H\bvec{B}\bvec{u})(\bvec{v}^H\bvec{B}\bvec{v}).
\end{align}
Now, $\norm{\dualproj{S}\bvec{\phi}_j}^2 = 1 - \norm{\proj{S}\bvec{\phi}_j}^2$. One can express $\proj{S}\bvec{\phi}_j$ as $\bvec{\Phi}\curly{S}\bvec{z}\curly{S}$, for some vector $\bvec{z}\in \complex^{n}$ supported on $S$. Define $S'=I(\bvec{\Phi}\curly{S})\cup \{j\}$. Then, $\norm{\proj{S}\bvec{\phi}_j}^2 = \inprod{\proj{S}\bvec{\phi}_j}{\proj{S}\bvec{\phi}_j}= \inprod{\proj{S}\bvec{\phi}_j}{\bvec{\phi}_j}= %\inprod{\bvec{\Phi}\curly{S}\bvec{z}\curly{S}}{\bvec{\Phi}{\bvec{e}}_j}=
\inprod{\bvec{\Phi}_{S'}\bvec{z}_{S'}}{\bvec{\Phi}_{S'}{\bvec{e}}_{j,S'}}$, where $\bvec{e}_j\in \complex^{n}$, with $[\bvec{e}_j]_j=1$, and $[\bvec{e}_j]_l=0$ for all $l\ne j$, $l\in [n]$. Note that $\inprod{\bvec{z}}{\bvec{e}_j}=0$, since $j\notin I(\bvec{\Phi}\curly{S})$. By assumption, $S'$ can be associated with a PIBS vector with parameters $(b,p,1,L',k,1),\;0\le k\le K$. Since $\delta_{b,p,1,L'}(K,1)<1$, we have,
\begin{align}
1-\delta_{b,p,1,L'}(K,1) & \le \lambda_{\min}\left(\bvec{\Phi}_{S'}^H\bvec{\Phi}_{S'}\right) \nonumber\\ \le \lambda_{\max}\left(\bvec{\Phi}_{S'}^H\bvec{\Phi}_{S'}\right) & \le 1 + \delta_{b,p,1,L'}(K,1).
\end{align} Now we use the Wielandt's theorem where in ~\eqref{eq:wielandt-theorem-statement}, we replace $\bvec{B}$, $\bvec{u}$
and $\bvec{u}$ respectively by $\bvec{\Phi}_{S'}^H\bvec{\Phi}_{S'}$, $\bvec{z}_{S'}$ and $\bvec{e}_{j,S'}$. Defining $\lambda=
\frac{\lambda_{\max}(\bvec{B})}{\lambda_{\min}(\bvec{B})}$ and $\delta=\frac{1 + \delta_{b,p,1,L'}(K,1)}{1 - \delta_{b,p,1,L'}(K,1)}$, from above, we have, $\lambda \le \delta$. From this and the fact that $f(\lambda)=\frac{\lambda-1}{\lambda+1}$ is an increasing function in $\lambda$, from ~\eqref{eq:wielandt-theorem-statement},
we have,
$\norm{\proj{S}\bvec{\phi}_j}^2\le \delta_{b,p,1,L'}(K,1)\norm{\bvec{\Phi}_{S'}\bvec{z}_{S'}}\norm{\bvec{\phi}_j}=\delta_{b,p,1,L'}(K,1)\norm{\proj{S}\bvec{\phi}_j}\implies\norm{\proj{S}\bvec{\phi}_j}\le \delta_{b,p,1,L'}(K,1) $, where we have used the fact that  $\norm{\bvec{\phi}_j}=1$, and $\bvec{\Phi}_{S'}\bvec{z}=\proj{S}\bvec{\phi}_j$. Hence~\eqref{eq:projected-column-lower-bound-wielandt-pibrip} follows.
\section{Proof of Lemma~\ref{lem:lower-bound-real-part}}
\label{sec:appendix-proof-lemma-lower-bound-real-part}
Let $v=w/z=Re^{j\theta}$. Then, \begin{align}
\Re\left(\frac{(z+w)}{\abs{z+w}}z^*\right) & = \abs{z}\Re\left(\frac{1+v}{\abs{1+v}}\right)\nonumber\\
\ & = \frac{\abs{z}(1+R\cos\theta)}{\sqrt{1+R^2+2R\cos \theta}}.
\end{align}
Now, if $v$ is real, $\theta=\pi l$ for some integer $l$, so that, for real $v$, using the fact that $R\in [0,1]$, \begin{align}
\frac{1+R\cos\theta}{\sqrt{1+R^2+2R\cos \theta}} & = \frac{1\pm R}{\abs{1\pm R}} = 1.
\end{align}
On the other hand, for general $v$, for a fixed $R\in [0,1]$, we consider the function \begin{align}
f(\theta) & = \frac{1+R\cos \theta}{\sqrt{1+R^2+2R\cos \theta}}.
\end{align}
Note that \begin{align}
f(\theta) & = \frac{1}{\sqrt{2}}\left[\sqrt{R\cos\theta + \frac{R^2+1}{2}}+\frac{\left(\frac{1-R^2}{2}\right)}{\sqrt{R\cos\theta + \frac{R^2+1}{2}}}\right]\nonumber\\
\ & = \frac{1}{\sqrt{2}}\tau\left(\sqrt{R\cos\theta + \frac{R^2+1}{2}}\right),
\end{align}
where $\tau(x)=x+\frac{\left(\frac{1-R^2}{2}\right)}{x}$. Therefore, finding the minimum non-negative value of $f$ is equivalent to finding the minimum non-negative value of $\tau(x)$ for $x=\sqrt{R\cos\theta + \frac{R^2+1}{2}}$. Therefore, as $R^2\le 1$, the minimum non-negative value of $f$ is obtained when $R\cos\theta +\frac{R^2+1}{2}=\frac{1-R^2}{2}$, i.e., when $\cos\theta =-R$. Therefore, \begin{align}
f(\theta)\ge \frac{1-R^2}{\sqrt{1+R^2-2R^2}}=\sqrt{1-R^2}.
\end{align}
\section{Proof of Theorem~\ref{thm:random-matrix-guarantees} }
\label{sec:appendix-proof-thm-random-martix-guarantees}
Let $F$ be the distribution function of $\delta_{b,p,L',L'}(K-1,2)$. For simplicity of writing, in the sequel, we will denote $\delta_{b,p,L',L'}(K-1,2)$ by $\delta$.

Let $\zeta$ denotes the probability density function associated to the random variable $\frac{x_{\min}}{x_{\max}}$. Then, using the independence of the random variables $\frac{x_{\min}}{x_{\max}}$ and $\delta$, one can write, \begin{align}
\lefteqn{\prob{x_{\min}>f_K(\delta)x_{\max}, \delta<\frac{1}{\sqrt{2K+1}}}} & &\nonumber\\
\ & \ge  1 - \int_0^1 \prob{\delta\ge f_K^{-1}\left(r\right)}\zeta(r)dr-\prob{\delta\ge \frac{1}{2K+1}}\nonumber\\
\label{eq:prob-identity-as-integral-first-step}
\ & = 1 - \int_0^{f_K^{-1}\left(1\right)}\prob{\delta\ge \delta_0}\zeta(f_K(\delta_0))f_K'(\delta_0)d\delta_0-\prob{\delta\ge \frac{1}{2K+1}},
%\ & = \int_0^\infty \prob{\frac{x_{\min}}{x_{\max}}>\sqrt{b}f_K(\delta_0)|\delta=\delta_0}dF(\delta_0)\nonumber\\
%\label{eq:prob-identity-as-integral-first-step}
%\ & \stackrel{(\psi_1)}{=} \int_0^\infty \prob{\frac{x_{\min}}{x_{\max}}>\sqrt{b}f_K(\delta_0)}dF(\delta_0),
\end{align}
where we have used the fact that $f_K^{-1}(0)=0.$
Now, we proceed to find a lower bound of the above expression by finding an upper bound on the complementary CDF of $\delta$, that is, we find an upper bound of the tail probability $\prob{\delta>\delta_0}$. In order to do so, we adopt the approach used in the proof of Proposition $3$ of~\cite{eldar2009robust}. However, our analysis will involve computing a quite different combinatorial quantity that expresses the number of ways the support of a PIBS vector can be generated.

Now recalling the definition~\ref{def:RIP} and Lemma~\ref{lem:PIBRIP-K-R-monotonicity}, it can be observed that, for any $b,p\ge 1,l,k,r\ge 0$, whenever $k\le k'$ and $r\le r'$, for any subset $S\in \Sigma_{b,p,l,l}(k,r)$, there is a subset $S'\in\Sigma_{b,p,l,l}(k',r')$ such that $S\subseteq S'$. Hence, $\delta=\max_{S\in \Sigma_{b,p,L',L'}(K-1,2)}\opnorm{\bvec{\Phi}_S^H\bvec{\Phi}_S-\bvec{I}_S}{2\to 2}=\max\{\overline{\sigma}^2-1,1-\underline{\sigma}^2\}$ (since the matrix is constructed with real Gaussian entries), where $\overline{\sigma}=\max_{S\in \Sigma_{b,p,L,L}(K-1,2)}\sigma_{\max}(\bvec{\Phi}_S)$, and $\underline{\sigma}=\min_{S\in \Sigma_{b,p,L,L}(K-1,2)}\sigma_{\min}(\bvec{\Phi}_S)$, where $\sigma_{\max}(\bvec{\Phi}_S)$ is the maximum singular value of $\bvec{\Phi}_S$, and $\sigma_{\min}(\bvec{\Phi}_S)$ is the minimum (positive) singular value of $\bvec{\Phi}_S$. Then, it follows that \begin{align}
\prob{\delta>\delta_0} & = \prob{\overline{\sigma}>\sqrt{1+\delta_0}} + \prob{\underline{\sigma}^2<1-\delta_0}.
\end{align}
Let $\gamma=\sqrt{1+\delta_0}-1$. Then, note that $\underline{\sigma}^2< 1-\delta_0$ implies that $\underline{\sigma}< 1-\gamma$, since $(1-\gamma)^2-(1-\delta_0)=(2-\sqrt{1+\delta_0})^2-(1-\delta_0)=4-4\sqrt{1+\delta_0}+2\delta_0=2(\sqrt{1+\delta_0}-1)^2>0$. Therefore, \begin{align}
\label{eq:prob-upper-bound-preliminary}
\prob{\delta> \delta_0} & \le \prob{\overline{\sigma}>1+\gamma}+\prob{\underline{\sigma}<1-\gamma}.
\end{align}
Clearly, we need $\gamma<1$, which is ensured if $\delta_0<3$.
Now using the result from Davidson and Szarek~\cite{szarek1991condition,davidson2001local} along with the
% If $\bvec{X}\in \real^{n_1\times n_2}$ is a wide matrix, i.e., $n_1<n_2$, then, for any $1-\sqrt{n_1/n_2}>t>0$, \begin{align}
%\label{eq:davidson-szarek-max-singular-value-inequality}
%\prob{\sigma_{\max}(\bvec{X}) > 1+\sqrt{n_1/n_2}+t} & \le e^{-n_2t^2},\\
%\label{eq:davidson-szarek-min-singular-value-inequality}
%\prob{\sigma_{\min}(\bvec{X}) < 1-\sqrt{n_1/n_2}-t} & \le e^{-n_2t^2}.
%\end{align}
analysis that produced Eq~$(78)$ in~\cite{eldar2009robust}, it follows that, \begin{align}
\prob{\overline{\sigma}>1+\lambda+t} & \le N_{b,p,L'}(K-1,2)e^{-mt^2},\ (t>0).
\end{align}
where $\lambda=\sqrt{((K-1)b+2L')/m}$ and $N_{b,p,L'}(K-1,2)$ is the number of ways of choosing supports for a PIBS vector with parameters $(b,p,L',L',K-1,2)$. Here it is assumed that $(K-1)b+2L'\le m$. Similarly, it can be derived that \begin{align}
\prob{\underline{\sigma}<1-\lambda-t}& \le N_{b,p,L'}(K-1,2)e^{-mt^2},\ (t>0).
\end{align}
Thus, from inequality~\eqref{eq:prob-upper-bound-preliminary}: \begin{align}
\prob{\delta>\delta_0} & \le 2N_{b,p,L'}(K-1,2)e^{-mt^2},
\end{align}
whenever $\lambda+t = \gamma$, or equivalently, $t = \sqrt{1+\delta_0}-(1+\lambda)$. Note that $t>0$ if \begin{align}
1+\lambda < \sqrt{1+\delta_0}\Leftrightarrow \nu & <\delta_0,
\end{align}
where $\nu=\lambda^2+2\lambda$. Furthermore, as $\delta_0<f_K^{-1}(1)$, if we choose $b,p,L,K,m$ such that $\nu<\rho<3$, where $\rho=f_K^{-1}(1)$, we obtain, $\delta_0\in [\nu,\rho]$. Therefore, as long as $\delta_0\in[\nu,\rho]$, the following holds: \begin{align}
\label{eq:tail-probability-PIBRIC}
\prob{\delta>\delta_0} & \le 2N_{b,p,L'}(K-1,2)e^{-m\left[\sqrt{1+\delta_0}-(1+\lambda)\right]^2}.
\end{align}
We now find an upper bound on $N_{b,p,L'}(K-1,2)$. In Appendix~\ref{sec:appendix-upper-bound-combinatorial-N} we have established the following upper bound on $N_{b,p,L'}(K,R)$:
\begin{prop}
	\label{prop:upper-bound-combinatorial-N}
	If $L\ge pb, (R+1)p\le K$, then \begin{align}
	\label{eq:upper-bound-combinatorial-N}
	N_{b,p,L'}(K,R) & < \exp\left[A(p,K,R)+ KC(p)\right.\nonumber\\
	\ & \left.+K\ln\left(pD(n,b,K,L')/K-E(L')\right)\right],
	\end{align}
	where $A(p,K,R)=\frac{3pK}{2(p+1)^2}+R,\ C(p) = \ln p + 21/ 8 -1/p,\ D(n,b,K,L') = n-Kb+L',\ E (L')= L'-1$.
\end{prop}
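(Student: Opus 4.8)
The plan is to prove Proposition~\ref{prop:upper-bound-combinatorial-N} by an explicit enumeration of PIBS supports, following the template of the proof of Proposition~$3$ in~\cite{eldar2009robust} but accommodating the two extra combinatorial layers intrinsic to the PIBS structure: the splitting of the $K$ true blocks into clusters of unspecified sizes, and the presence of the $R$ pseudo blocks. First I would parametrize a support $S$ of a PIBS vector with parameters $(b,p,L',L',k,r)$, $0\le k\le K$, $0\le r\le R$, by the data: (i) the number $k$ of true clusters; (ii) an ordered composition $K=j_1+\cdots+j_k$ with each $j_s\in[1,p]$, which fixes the cluster lengths $j_1 b,\dots,j_k b$; (iii) the number $r$ of pseudo blocks; and (iv) a placement of the $k$ true-cluster intervals (of the prescribed lengths) together with the $r$ pseudo-block intervals (each of length $L'$) into $[n]$, pairwise disjoint, with consecutive true clusters separated by at least $L'$ positions. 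Since, by the inclusion property underlying Lemma~\ref{lem:PIBRIP-K-R-monotonicity}, every element of $\bigcup_{k\le K}\bigcup_{r\le R}\Sigma_{b,p,L',L'}(k,r)$ arises this way, $N_{b,p,L'}(K,R)$ is at most the sum of the placement counts over all admissible $(k,r)$ and all compositions.

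The next step is to bound, for fixed $k$, $r$, and composition, the number of admissible placements. I would decouple the two interval types: first place the $k$ true clusters subject only to the $\ge L'$ separations, and then insert the $r$ pseudo blocks into the residual zero region; this over-counts (a pseudo block lying inside a true-cluster separator is produced in more than one way, and the relaxation ``$\ge L'$ positions'' is weaker than ``$\ge L'$ zeros''), but an over-count is all that is needed. The true-cluster placements are counted by the standard ``collapse each interval, subtract the reserved lengths, count lattice points'' bijection: the lengths $j_s b$ together with the $k-1$ minimal separators reserve $Kb+(k-1)L'$ cells, so the number of cluster configurations is at most a binomial coefficient of the shape $\binom{n-Kb-(k-1)L'+k}{k}$. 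Inserting the $r$ length-$L'$ pseudo blocks into the $n-Kb$ residual zero cells contributes a further factor at most $\binom{n-Kb-rL'+r}{r}$, again by the collapse bijection applied within the union of zero runs. The number of compositions of $K$ into $k$ parts in $[1,p]$ is at most $p^{k}\le p^{K}$. Each binomial coefficient is then bounded via $\binom{m}{t}\le(em/t)^{t}$.

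Finally I would assemble these estimates and simplify to the stated closed form. Summing over $r=0,\dots,R$ costs only a bounded multiplicative factor, which I would absorb into the additive constant $R$ of $A(p,K,R)$; summing over $k$ from $\lceil K/p\rceil$ to $K$ is the substantive part. After inserting the binomial bounds one is left with controlling $p^{K}\bigl(e(n-Kb-(k-1)L'+k)/k\bigr)^{k}\bigl(e(n-Kb-rL'+r)/r\bigr)^{r}$ uniformly over the admissible $k$; checking that its maximum over $k$ is at most $\exp\!\bigl[\tfrac{3pK}{2(p+1)^2}+R+KC(p)+K\ln(pD/K-E)\bigr]$ is precisely where the constants $\tfrac{3p}{2(p+1)^2}$, $\ln p+\tfrac{21}{8}-\tfrac1p$, and the shift $-E=-(L'-1)$ inside $\ln(pD/K-E)$ come out, using elementary inequalities such as $(1+1/x)^{x}\le e$ and $\ln(1+x)\le x$ together with $k\ge\lceil K/p\rceil$. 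I expect this last optimization over the cluster count $k$ — showing that the product of the composition count and the placement count never exceeds the tuned bound, uniformly in $b,p,L',n$ — to be the main obstacle, together with verifying that the decoupled enumeration of true clusters and pseudo blocks genuinely dominates the joint count despite pseudo blocks being permitted to lie inside separators.
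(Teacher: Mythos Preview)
Your outline captures the right architecture, but two of the bounding steps are too coarse to recover the stated constants, and one of them produces a bound with the wrong growth in $n$.

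First, the decoupled pseudo-block count $\binom{n-Kb-rL'+r}{r}$ scales like $(en/r)^r$ after the standard binomial estimate, so it contributes an extra $n^{R}$ factor to the final bound. The proposition carries its entire $n$-dependence through $(pD/K-E)^{K}\sim n^{K}$; there is no room to absorb an additional $n^{R}$, and it certainly cannot be hidden in the additive constant $R$ of $A(p,K,R)$ as you suggest. The paper avoids this by a \emph{joint} generating-function count: it records only how many pseudo blocks fall in each of the $k{+}1$ inter-cluster gaps and folds the resulting gap lengths into the single binomial coefficient that places the clusters; the pseudo-block contribution then reduces to $\sum_{q}\binom{k-1}{q}\binom{R+1}{R-q}\le\binom{R+k}{R}<e^{R+k}$, which is independent of $n$.

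Second, the trivial bound $p^{k}$ on the number of compositions of $K$ into $k$ parts in $[1,p]$ is not sharp enough for $p\ge 2$. The constant $\tfrac{3p}{2(p+1)^{2}}$ in $A$ and the $\tfrac{21}{8}$ in $C$ arise because the paper writes the exact composition count as $p^{k}\,\mathbb{P}\!\bigl(\sum_{s=1}^{k}X_{s}=K\bigr)$ with $X_{s}$ i.i.d.\ uniform on $\{1,\dots,p\}$ and then bounds the point probability via Hoeffding, gaining the crucial factor $e^{-3k/8}$ (together with the prefactor $e^{3pK/(2(p+1)^{2})}$). Without this saving one obtains $C=\ln p+3-1/p$ and $A=R$, and the resulting bound exceeds the proposition's precisely when $3/8>\tfrac{3p}{2(p+1)^{2}}$, i.e.\ whenever $(p-1)^{2}>0$. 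So your route, as written, proves the proposition only for $p=1$; the Hoeffding step is not an elementary side computation but the mechanism that manufactures those particular constants.
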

Hence, assuming that $3p+1\le K$, we have $N_{b,p,L'}(K-1,2)< \exp\left(A+KC+K\ln(pD/K-E)\right)=e^{h(b,p,L',K,R)}$, where $h(b,p,L',K,R)=A+KC+K\ln(pD/K-E),$ and $A=\frac{3p(K-1)}{2(p+1)^2}+1,\ C=\ln p + 21/ 8 -1/p,\ D=n-(K-1)b+L',\ E=L'-1$.\\
Now, note that, since we have assumed that $\delta_0\le \rho$, we have, for any $\delta_0\in [\nu,\rho]$, \begin{align}
\sqrt{1+\delta_0}-(1+\lambda) & \ge \frac{\delta_0-\nu}{1+\lambda+\sqrt{1+\rho}}.
\end{align}
Therefore, from Proposition~\ref{prop:upper-bound-combinatorial-N}, one obtains, for any $\delta_0\in [\nu,\rho]$,\begin{align}
\prob{\delta>\delta_0} & \le 2e^{h(b,p,L',K,R)}e^{-\frac{m(\delta_0-\nu)^2}{(1+\lambda+\sqrt{1+\rho})^2}}\nonumber\\
\ & = c_1e^{-c_2(\delta_0-\nu)^2},
\end{align} 
where $c_1 = 2 e^{h(b,p,L',K,R)},\ c_2 = \frac{m}{(1+\lambda+\sqrt{1+\rho})^2}$.

Consequently, assuming that $K,m$ are chosen in such a way that there is $\epsilon_0>0$ such that \begin{align}
\label{eq:prob-delta-K-condition}
\nu + \epsilon_0<\frac{1}{\sqrt{2K+1}} & < \rho, 
\end{align}
we obtain, \begin{align}
\prob{\delta\ge \frac{1}{\sqrt{2K+1}}} & \le c_1e^{-c_2\epsilon_0^2}.
\end{align}

% Therefore, by choosing $\sqrt{1+\delta_0} > 1+\lambda + \sqrt{(1+\epsilon)(h(b,p,L,K,R)+\ln 2)/m}$, for some $\epsilon>0$, one obtains, \begin{align}
%\prob{\delta>\delta_0} & < e^{-\epsilon(h(b,p,L,K,R)+\ln 2)}.
%\end{align}
%Furthermore, using $\sqrt{1+\epsilon}<1+\epsilon/2$ for $\epsilon>0$, one obtains, $1+\delta_0<(1+\lambda)^2+\sqrt{h'}(2+\epsilon)+(1+\epsilon)h'$, where $h'=(h(b,p,L,K,R)+\ln 2)/m$. Consequently, $\epsilon>\frac{\delta_0-(\lambda^2+2\lambda+h'+2\sqrt{h'})}{h'+\sqrt{h'}}$. 
%Thus, , we have, \begin{align}
%\prob{\delta>\delta_0} & <2e^{h(b,p,L,K,R)}e^{-m} e^{-m\sqrt{h'}(\delta_0-(\lambda^2+2\lambda+h'+2\sqrt{h'}))}\nonumber\\
%\label{eq:tail-probability-PIBRIC-upper-bound}
%\ & =c_1e^{-c_2\delta_0},
%\end{align}
%where $c_1=e^{tm\sqrt{h'}},\ t=\lambda^2+2\lambda+h'+2\sqrt{h'},\ c_2=m\sqrt{h'}$.
Furthermore, from~\eqref{eq:prob-identity-as-integral-first-step}, we obtain, for any $\epsilon\in [0,\rho-\nu]$ \begin{align}
\lefteqn{\prob{x_{\min}>f_K(\delta)x_{\max}}} & &\nonumber\\
\ & > 1 - \int_0^{\nu+\epsilon}1\cdot \zeta(f_K(\delta_0))f_K'(\delta_0)d\delta_0\nonumber\\
\ & -\int_{\nu+\epsilon}^{\rho}c_1e^{-c_2(\delta_0-\nu)^2}\zeta(f_K(\delta_0))f_K'(\delta_0)d\delta_0\nonumber\\
\ & > 1 - \prob{\frac{x_{\min}}{x_{\max}}\le f_K(\nu+\epsilon)}\nonumber\\
\ & - c_1e^{-c_2\epsilon^2}\prob{f_K(\nu+\epsilon)<\frac{x_{\min}}{x_{\max}}\le 1}\nonumber\\
\label{eq:probability-lower-bound-preliminary}
\ & = g\left(f_K(\nu+\epsilon)\right)\left(1-c_1e^{-c_2\epsilon^2}\right),
%\ & = 1 -c_1e^{-c_2\nu^2}\nonumber\\
%\ & +2c_1c_2\int_{0}^{\rho}(\delta_0-\nu)g(\sqrt{b}f(\delta_0))e^{-c_2(\delta_0-\nu)^2}d\delta_0.
\end{align}
where $g$  is the complementary CDF of the random variable $\frac{x_{\min}}{x_{\max}}$, i.e., for any $a\in [0,1]$, $g(a)=\prob{\frac{x_{\min}}{x_{\max}}>a}$. Therefore, from Eqs.~\eqref{eq:prob-delta-K-condition} and~\eqref{eq:probability-lower-bound-preliminary} we obtain the bound~\eqref{eq:probability-lower-bound}.

We now find upper and lower bounds of $g$ as below. In order to do so, first observe that the function $g$ can be expressed as below: 
\begin{align}
g(a) & = \sum_{i=1}^{Kb} \prob{\cap_{j\ne i} \left\{\abs{x_i}\le \abs{x_j}<\frac{\abs{x_i}}{a}\right\}}\nonumber\\
\ & \stackrel{(\psi_1)}{=} Kb\prob{\cap_{j>1}\left\{\abs{x_1}\le \abs{x_j}<\frac{\abs{x_1}}{a}\right\}}\nonumber\\
\label{eq:g-function-expression}
\ & \stackrel{(\psi_2)}{=} Kb\int_0^\infty \left(\prob{z\le X<\frac{z}{a}}\right)^{Kb-1}dF_Z(z).
\end{align}
Here, the steps $(\psi_1)$ and $(\psi_2)$ follow from the fact that the random variables $x_1,\cdots,x_{Kb}$ are i.i.d. distributed as the random variable $X,Z$, which are i.i.d. and have half normal distribution, i.e., the density of $X,Z$ is given by $p(u)=\sqrt{\frac{2}{\pi}}e^{-u^2/2},\ u\ge 0$. Therefore, from~\eqref{eq:g-function-expression} we obtain that,
\begin{align}
\lefteqn{g(a)} & &\nonumber\\
\ & = Kb\expect{Z}{\left(\expect{X}{\indicator{Z<X<\frac{Z}{a}}}\right)^{Kb-1}}\nonumber\\
\ & \stackrel{(\psi_3)}{\ge} Kb\left(\expect{X}{\expect{Z}{\indicator{Z<X<\frac{Z}{a}}}}\right)^{Kb-1}\nonumber\\
\ & \stackrel{(\psi_4)}{=} Kb\left(\prob{1<F<\frac{1}{a^2}}\right)^{Kb-1}\nonumber\\
\ & \stackrel{(\psi_5)}{=} Kb\left(I_{\frac{1}{1+a^2}}\left(\frac{1}{2},\frac{1}{2}\right) - I_{\frac{1}{2}}\left(\frac{1}{2},\frac{1}{2}\right)\right)^{Kb-1}\nonumber\\
\label{eq:g-function-lower-bound}
\ & \stackrel{(\psi_6)}{=} Kb\left(\frac{2}{\pi}\cot^{-1}a - \frac{1}{2}\right)^{Kb-1}.
\end{align}
Here, the step $(\psi_3)$ follows from Jensen's inequality, since, when $x\ge 0$, the function $x^r$ is convex for all $r\ge 1$ and we consider $Kb$ as integer $>1$\footnote{For the case $Kb=1\Leftrightarrow K=1,b=1$, $\frac{x_{\min}}{x_{\max}}=1$, so that $g(a)=1,\forall a\in [0,1)$ trivially}. In $(\psi_4)$ the random variable $F:=\frac{X^2}{Z^2}$ is used, which has $F$-distribution, with parameters $(1,1)$, as it is a ratio of two i.i.d. $\chi^2$ random variables  with degrees of freedom $1$. Step $(\psi_5)$ uses the expression for the cumulative density function (cdf) of a $F$-distributed ranbdom variable, which involves the regularized incomplete beta function $I_{x}(a,b)$~\cite{abramowitz1964handbook} which is defined, for $a,b>0,\ x\ge 0$, as $I_{x}(a,b)=\frac{B(x;a,b)}{B(a,b)}$, where $B(x;a,b) = \int_0^1 t^{a-1}(1-t)^{b-1} dt$ is the incomplete beta function~\cite{pearson1948tables}, and $B(a,b)=B(1;a,b)$ is the beta function. Finally $(\psi_6)$ uses the fact that, $\forall\ x\ge 0$, $I_{x}(\frac{1}{2},\frac{1}{2})=\frac{2}{\pi}\sin^{-1}\sqrt{x}$, and that $\sin^{-1}\left(\frac{1}{1+a^2}\right)=\cot^{-1} a$, both of which can be easily verified.

Similarly, we can find an upper bound on $g$ as below:\begin{align}
\lefteqn{g(a)} & &\nonumber\\
\ & \stackrel{(\psi_7)}{\le} Kb\left(\expect{Z}{\expect{X}{{\indicator{Z<X<\frac{Z}{a}}}\right)^{Kb-1}}}\nonumber\\
\ & \stackrel{(\psi_{8})}{=} Kb\prob{1<F<\frac{1}{a^2}}\nonumber\\
\label{eq:g-function-upper-bound}
\ & = Kb\left(\frac{2}{\pi}\cot^{-1}a - \frac{1}{2}\right).
\end{align}
Here step $(\psi_7)$ again uses the convexity of the function $x^r,\ r\ge 1$, and step $(\psi_{8})$ uses the definition of the $F$-distributed random variable $F$ as defined before while deriving the inequality~\eqref{eq:g-function-lower-bound}.
\section{Finding upper bound on $N_{b,p,L'}(K,R)$}
\label{sec:appendix-upper-bound-combinatorial-N}
Let $S$ be a support for a PIBS vector with parameters $(b,p,L',L',K,R)$, with $(R+1)p\le K$. We will find the number of all such possible supports $S$.

The support $S$ correspond to the $k$ true clusters, with sizes $bj_1,\cdots,\ bj_k$, such that $1\le j_s\le p$, for all $s=1,\cdots,\ k$, and $\sum_{s=1}^k j_s=K$. Note that we must have $\lceil K/p\rceil \le k\le K$, so that $kp\ge K$. and between any two consecutive such clusters, there are at least $L'$ indices (which correspond to either some pseudoblocks or by zeros). The rest of the $n-Kb$ indices contain $R$ non-overlapping psedublocks scattered in these $k+1$ places in between the clusters and possibly beside the clusters around the edges(see the structure of a PIBS vector in Fig~\ref{fig:PIBS-structure-illustration}). Let the number of pseudoblocks in those $k+1$ places be denoted by $l_1,\cdots,l_{k+1}$, and the number of zeros in those places be denoted by $z_1,\cdots,z_{k+1}$. According to the PIBS structure, $l_1\cdots,l_{k+1}\in S_{1}$ and given such a tuple $(l_1\cdots,l_{k+1})$, denoted by $\bvec{l}$, $z_1,\cdots,z_{k+1}\in S_{2,\bvec{l}} $, where \begin{align}
S_1 & =\{\bvec{l}=(l_1\cdots,l_{k+1}):l_j\ge 0, 1\le j\le k+1, \sum_{j=1}^{k+1}l_j=R\},\nonumber\\
S_{2,\bvec{l}} & =\left\{z_1,\cdots,z_{k+1}: z_1\ge 0, z_{k+1}\ge 0,\ z_j\ge t_j,j=2\cdots,k;\right.\nonumber\\
\ & \left.\sum_{j=1}^{k+1}z_j=P\right\},
\end{align} where $P=n-Kb-L'R$, $t_j=(L'(1-l_j))^+,\ j=2,\cdots,k$, and $(x)^+=\max\{x,0\}$ for all $x\in \real$. Thus, given $k$ and $j_1,\cdots,j_k$, the number of ways $l_1,\cdots,l_{k+1},\ z_1,\cdots,z_{k+1}$ can be chosen so that they satisfy the above mentioned constraints, is given by $N_{j_1,\cdots,j_k,L'}(k,R) = \sum_{\bvec{l}\in S_1}\abs{S_{2,\bvec{l}}}$. Using the generating function technique~\cite[Chapter $1$]{stanley2011enumerative}, and using $kp\ge K$, it can be shown that $S_2(l_1,\cdots,l_{k+1})$ is the coefficient of $x^P$ in the expansion of the formal power series $(1+x+x^2+\cdots)\cdot\prod_{j=2}^k(x^{t_j}+x^{t_j+1}+\cdots)\cdot(1+x+x^2+\cdots)=x^{\sum_{j=2}^k t_j}(1-x)^{-(k+1)}$. Therefore, $\abs{S_2(l_1,\cdots,l_{k+1})}=\binom{k+P-\sum_{j=2}^k t_j}{k}$. Furthermore, observe that $\sum_{j=2}^k t_j=L'(k-1-q)$, where $q = \abs{Q}$, where $Q=\{j:2\le j\le k,\ l_j\ge 1\}$. Note that for a fixed $q$, there are $\binom{k-1}{q}$ such sets $Q$ where $l_j\ge 1$ for $j\in Q$. For each such choice of $Q$, the number of solutions of the equation $\sum_{j=1}^{k+1}l_j=R$, such that $l_1,l_{k+1}\ge 0$, and $l_j\ge 1$ for $j\in Q$, is the coefficient of $x^R$ in the expansion of $(1-x)^{-2}\cdot (x+x^2+\cdots)^{q}=x^q(1-x)^{-(q+2)}$. Hence, for a fixed $q,Q$, the total number of such $l_1,\cdots,l_{k+1}$ is $\binom{q+2+R-q-1}{R-q}=\binom{R+1}{R-q}$. Furthermore, note that the minimum value of $q$ can be $1$, when all the pseudoblocks are packed into one of the $k+1$ inter-cluster spaces. Moreover, the maximum value of $q$ can be $R$ since it is possible that all the $R$ pseudoblocks are accommodated among the available spaces in between the true clusters, since we have $R\le k-1$, as we have assumed that $R\le K/p-1\le k-1$. Thus, one obtains, \begin{align}
\lefteqn{N_{j_1,\cdots,j_k,L'}(k,R)} & & \nonumber\\
\label{eq:fixed-true-blocksizes-total-number-expression}
\ & = \sum_{q=1}^{R}\binom{k-1}{q}\binom{R+1}{R-q}\binom{k+P-L'(k-1-q)}{k}.
\end{align}
To find an upper bound on $N_{j_1,\cdots,j_k,L'}(k,R)$, we note that $\binom{k+P-L'(k-1-q)}{k}\le \binom{k+P-L'k+L'+L'R}{k}$, and use the identity $\sum_{j=0}^k \binom{m}{j}\binom{n-m}{k-j}=\binom{n}{k}$, to obtain \begin{align}
\lefteqn{ N_{j_1,\cdots,j_k,L'}(k,R)} & &\nonumber\\
\ & \le \binom{k+P-L'k+L'+L'R}{k}\left(\binom{R+k}{R}-(R+1)\right)\nonumber\\
\label{eq:fixed-true-blocksizes-total-number-upper-bound}
\ & \le e^k\left(\frac{L'+L'R+P}{k}-(L'-1)\right)^k\left(e^R(1+k/R)^R-(R+1)\right),
\end{align}
where in the last step we have used the inequality $\binom{n}{k}\le (en/k)^k$.

Now, note that this many solutions for the arrangements of the pseudoblocks is obtained for a fixed $k$ and $j_1,\cdots,j_k$, such that $1\le j_s\le p$ for $1\le s\le k$ and $\sum_{s=1}^k j_s=K$.
%For a fixed $k$, the total number of such arrangements for $j_1,\cdots,\ j_k$ is the coefficient of $x^K$ in the expansion of $(x+x^2+\cdots+x^p)^k$, which can be shown to be $\sum_{j=0}^{\lfloor \frac{K-k}{p}\rfloor}(-1)^j\binom{k}{j}\binom{K-pj-1}{k-1}$. Thus, the total number of arrangements of a support for a PIBS vector with parameters $(b,p,L,K,R)$ is \begin{align}
%\lefteqn{N_{b,p,L}(K,R)} & & \nonumber\\
%\label{eq:total-arrangement-number-PIBS-vector-exact-expression}
%\ & = \sum_{k=\lceil K/p \rceil}^K \sum_{j=0}^{\lfloor \frac{K-k}{p}\rfloor}(-1)^j\binom{k}{j}\binom{K-pj-1}{k-1}N_{j_1,\cdots,j_k,L}(k,R).
%\end{align} Although we have obtained an upper bound for $N_{j_1,\cdots,j_k,L}(k,R)$ in~\eqref{eq:fixed-true-blocksizes-total-number-upper-bound}, it seems difficult to find an upper bound on the first quantity inside the summand in the right hand side of~\eqref{eq:total-arrangement-number-PIBS-vector-exact-expression}, directly using combinatorial methods. We therefore take resort to probabilistic methods to find an upper bound of this quantity.
%
Now, note that the number of solutions to the equation $\sum_{s=1}^k j_s=K$ such that $1\le j_s\le p$, $1\le s\le k$, is equal to $p^k\prob{\sum_{s=1}^k X_s=K}$, where $X_s,\ 1\le s\le k$ are $k$ i.i.d. discrete random variables with $\prob{X_1=j}=\frac{1}{p}$, where $1\le j\le p$. We proceed to find upper bound on the quantity $\prob{\sum_{s=1}^k X_s=K}$.

First observe that $\expect{}{X_1}=\frac{p+1}{2}$, so that $\prob{\sum_{s=1}^k X_s=K}=\prob{\sum_{s=1}^k \left(X_s-\expect{}{X_s}\right)=K-\frac{p+1}{2}}$. If $K\ge \frac{p+1}{2}$, then, we write $\prob{\sum_{s=1}^k \left(X_s-\expect{}{X_s}\right)=K-\frac{p+1}{2}}\le \prob{\sum_{s=1}^k \left(X_s-\expect{}{X_s}\right)>K-\frac{p+1}{2}}$. Similarly, if $K<\frac{p+1}{2}$, we can then write, $\prob{\sum_{s=1}^k \left(X_s-\expect{}{X_s}\right)=K-\frac{p+1}{2}}\le \prob{\sum_{s=1}^k \left(X_s-\expect{}{X_s}\right)\le K-\frac{p+1}{2}}$. We now bound the probabilities $\prob{\sum_{s=1}^k \left(X_s-\expect{}{X_s}\right)>t}$, and $\prob{\sum_{s=1}^k \left(X_s-\expect{}{X_s}\right)\le t}$ when $t> 0$, and $t\le 0$ respectively.

When $t>0$, we can use the Hoeffding's inequality~\cite[Theorem~$2.8$]{boucheron2013concentration} to find \begin{align}
\label{eq:hoeffding-inequality-upper-bound}
\prob{\sum_{s=1}^k (X_s-\expect{}{X_s})>t} & \le \exp\left(-\frac{2t^2}{k(p-1)^2}\right),
\end{align}
which follows since each random variable $X_s$ is bounded in the interval $[1,p]$. Similarly, when $t<0,$ using the result of the problem $2.9$ of~\cite{boucheron2013concentration}, we find \begin{align}
\prob{\sum_{s=1}^k (X_s-\expect{}{X_s})\le t} & \le \exp\left(-\frac{t^2}{2\sum_{s=1}^k\expect{}{X_s^2}}\right)\nonumber\\
\label{eq:subgaussian-bound-nonnegative-randomn-variable}
\ & \le \exp\left(-\frac{3t^2}{k(p+1)(2p+1)}\right),
\end{align}
since $\expect{}{X_s^2}=\frac{(p+1)(2p+1)}{6}$. Thus, we find that \begin{align*}
\lefteqn{\prob{\sum_{s=1}^k \left(X_s-\expect{}{X_s}\right)=t}} & &\\
\ & \le\max\left\{\exp\left(-\frac{2t^2}{k(p-1)^2}\right),\exp\left(-\frac{3t^2}{k(p+1)(2p+1)}\right)\right\}\nonumber\\ \ & =\exp\left[-\frac{t^2/k}{\max\left\{\frac{(p-1)^2}{2},\frac{(p+1)(2p+1)}{3}\right\}}\right].
\end{align*}
Since $\frac{2(p+1)^2}{3}-\frac{(p-1)^2}{2}>\frac{(p+1)(2p+1)}{3}-\frac{(p-1)^2}{2}=\frac{p^2+12p-1}{6}>0$, we have that \begin{align}
\label{eq:preliminary-upper-bound-using-probability}
\prob{\sum_{s=1}^k X_s=k} < \exp\left(-\frac{3\left(K-k\frac{p+1}{2}\right)^2}{2k(p+1)^2}\right).
\end{align}
Furthermore, using $k\le K$, we deduce that $-\frac{\left(K-k\frac{p+1}{2}\right)^2}{k}=-\left(\frac{K^2}{k}+k\left(\frac{p+1}{2}\right)^2-K(p+1)\right)\le -\left(-Kp+k\left(\frac{p+1}{2}\right)^2\right).$ Thus, \begin{align}
\label{eq:refined-upper-bound-using-probability}
\prob{\sum_{s=1}^k X_s=k} < e^{\frac{3Kp}{2(p+1)^2}}\exp\left(-\frac{3k}{8}\right).
\end{align}
Hence, using~\eqref{eq:fixed-true-blocksizes-total-number-upper-bound} and~\eqref{eq:refined-upper-bound-using-probability}, we derive, \begin{align}
\lefteqn{N_{b,p,L'}(K,R)} & &\nonumber\\
\ & < \sum_{k=\lceil K/p \rceil}^K p^ke^{\frac{3Kp}{2(p+1)^2}}\exp\left(-\frac{3k}{8}\right)\nonumber\\
\ & \cdot e^k\left(\frac{L'+L'R+P}{k}-(L'-1)\right)^k\left(e^R(1+k/R)^R-(R+1)\right)\nonumber\\
\label{eq:total-arrangement-number-PIBS-vector-preliminary-upper-bound}
\ & \stackrel{(i)}{<} e^A\sum_{k=\lceil K/p\rceil}^K\exp\left[kC+k\ln\left(D/k-E\right)\right],
\end{align}
where $A=\frac{3Kp}{2(p+1)^2}+R,\ C=\ln p + \frac{13}{8},\ D=L'+L'R+P=n-Kb+L',\ E=L'-1$, and at step $(i)$ we have used the inequality $e^x\ge 1+x$, for $x\ge 0$, to obtain $e^R(1+k/R)^R-(R+1)<e^R(1+k/R)^R<e^{R+k}$. Note that the quantity $D/k-E$ is strictly greater than $1$ for all $\lceil K/p\rceil \le k\le K$. To verify this, note that as we have assumed that for a PIBS vector, $n\ge Kb+L'R+(K+ 1)(L'-1)$, we obtain $D-EK=n-Kb+L'-(L'-1)K\ge L'R+(K+1)(L'-1)+L'-(L'-1)K=L'R+2L'-1\ge 1$. Finally, noting that $\lfloor K/p \rfloor \le k\le K$, and using $1+x<e^x (x\ge 0)$, one can further upper bound $N_{b,p,L'}(K,R)$ as \begin{align}
\lefteqn{N_{b,p,L'}(K,R)} & & \nonumber\\
\ & < e^A \cdot \left(K-\lceil K/p\rceil+1\right)\exp\left[{KC}+K\ln(D/\lceil K/p\rceil -E)\right]\nonumber\\
\label{eq:total-arrangement-number-PIBS-vector-final-upper-bound}
\ & < \exp\left[A+KC'+K\ln\left(D'/K-E\right)\right],
\end{align}
where $C'=C+(1-1/p),\ D'=pD$.
\bibliography{two-stage-generalized-block-omp}
\end{document}